\newtheorem{theorem}{Theorem}[section]
\newtheorem{lemma}[theorem]{Lemma}
\newtheorem{definition}[theorem]{Definition}
\newtheorem{claim}[theorem]{Claim}
\newtheorem{fact}[theorem]{Fact}
\renewenvironment{quote}
  {\list{}{\rightmargin=0.3cm \leftmargin=0.3cm}%
   \item\relax}
  {\endlist}
\newcommand{\cA}{\mathcal{A}}
\newcommand{\cD}{\mathcal{D}}
\newcommand{\cE}{\mathcal{E}}
\newcommand{\cI}{\mathcal{I}}
\newcommand{\cM}{\mathcal{M}}
\newcommand{\cS}{\mathcal{S}}
\newcommand{\cU}{\mathcal{U}}
\newcommand{\cX}{\mathcal{X}}
\newcommand{\cY}{\mathcal{Y}}
\newcommand{\N}{\mathbb{N}}
\newcommand{\R}{\mathbb{R}}
\newcommand{\PAREN}[1]{{\left( {#1} \right)}}
\newcommand{\bigparen}[1]{{\big( {#1} \big)}}
\newcommand{\paren}[1]{{( {#1} )}}
\newcommand{\bracket}[1]{{[ {#1} ]}}
\newcommand{\BigBracket}[1]{{\big[ {#1} \big]}}
\newcommand{\card}[1]{\left| {#1} \right|}
\newcommand{\set}[1]{\left\{ {#1} \right\}}
\newcommand{\eps}{\varepsilon}
\DeclareMathOperator*{\argmax}{arg\,max}
\NewDocumentCommand\p{ m g }{
  \ensuremath{
    \IfNoValueTF{#2}
    {\Pr [ #1 ]}
    {\Pr_{#1}[#2]}
  }
}
\RenewDocumentCommand\P{ m g }{
  \ensuremath{
    \IfNoValueTF{#2}
    {\Pr \left[#1\right]}
    {\Pr_{#1}\left[#2\right]}
  }
}
\DeclareMathOperator*{\Expectation}{\mathbb{E}}
\NewDocumentCommand\E{ m g }{
  \ensuremath{
    \IfNoValueTF{#2}
    {\Expectation \left[#1\right]}
    {\Expectation_{#1}\left[#2\right]}
  }
}
\DeclareMathOperator*{\Variance}{\mathbb{V}\mathrm{ar}}
\NewDocumentCommand\Var{ m g }{
  \ensuremath{
    \IfNoValueTF{#2}
    {\Variance \left[#1\right]}
    {\Variance_{#1}\left[#2\right]}
  }
}
\NewDocumentCommand\GammaDist{ m g }{
  \ensuremath{
    \IfNoValueTF{#2}
    {\operatorname{\mathcal{G}amma}\left( {#1} \right)}
    {\operatorname{\mathcal{G}amma}\left( {#1}, {#2} \right)}
  }
}
\NewDocumentCommand\LapNoise{ m g }{
  \ensuremath{
    \IfNoValueTF{#2}
    {\operatorname{\mathbb{L}ap}\left( {#1} \right)}
    {\operatorname{\mathbb{L}ap}\left( {#1}, {#2} \right)}
  }
}
\NewDocumentCommand\GumbelNoise{ m g }{
  \ensuremath{
    \IfNoValueTF{#2}
    {\operatorname{\mathbb{G}umbel}\left( {#1} \right)}
    {\operatorname{\mathbb{G}umbel}\left( {#1}, {#2} \right)}
  }
}
\newcommand{\subsetsampling}{Subset Sampling\xspace}
\newcommand{\subsetSequencesampling}{Sequence Sampling\xspace}
\newcommand{\ourAlgo}{\mathcal{A}}
\newcommand{\ourAlgoName}{\textsc{FastJoint}\xspace}
\newcommand{\joint}{\textsc{Joint}\xspace}
\newcommand{\jointMesm}{\mathcal{M}_{\joint}}
\newcommand{\NameExpoMesm}{\textsc{exp}}
\newcommand{\ExpoMesm}{\mathcal{M}_{\textsc{exp}}}
\newcommand{\SensitivityExpoMesm}{\Delta_{\textsc{exp}}}
\newcommand{\cpeel}{\textsc{CDP-Peel}\xspace}
\newcommand{\ppeel}{\textsc{PNF-Peel}\xspace}
\newcommand{\partialPermutation}[2]{\mathcal{P}_{ {#1}, {#2} } }
\newcommand{\hist}{\vec{h}}
\newcommand{\forwardSum}{\vec{\sigma}}
\newcommand{\cev}[1]{\reflectbox{\ensuremath{\vec{\reflectbox{\ensuremath{#1}}}}}}
\newcommand{\backwardSum}{\cev{\sigma}}
\newcommand{\cntUnified}[2]{\bar{C}_{#1, #2}}
\newcommand{\cnt}[2]{{C}_{#1, #2}}
\newcommand{\topkSoln}{\vec{s}}
\newcommand{\uniformSample}{\overset{\,\,r}{\longleftarrow}}
\newcommand{\DataDomain}{\mathcal{D}}
\newcommand{\IntSet}[2]{[{#1}\,.\,.\,{#2}]}
\newcommand{\indicator}[1]{\mathds{1}_{\left[#1\right]}}
\NewDocumentCommand\loss{g g g}{
  \ensuremath{
    {\IfNoValueTF{#3}
        { 
            \IfNoValueTF{#2} 
                {
                    \IfNoValueTF{#1}
                        {\cE}
                        {\cE_{#1}}
                }
                {\cE\paren{{#1, #2}}}
        }
        {
            \cE_{#1}\paren{{#2, #3}}
        }
    }
  }
}
\NewDocumentCommand\groupSeq{ m g }{
  \ensuremath{
    \IfNoValueTF{#2}
    {{\mathcal{S}}_{#1}}
    {{\mathcal{S}}_{#1, #2}}
  }
}
\NewDocumentCommand\unifiedGroupSeq{ m g }{
  \ensuremath{
    \IfNoValueTF{#2}
    {\bar{\mathcal{S}}_{#1}}
    {\bar{\mathcal{S}}_{#1, #2}}
  }
}
\NewDocumentCommand\DiscreteLapNoise{ m g }{
  \ensuremath{
    \IfNoValueTF{#2}
    {\operatorname{\mathbb{DL}ap}\left( {#1} \right)}
    {\operatorname{\mathbb{DL}ap}\left( {#1}, {#2} \right)}
  }
}
\NewDocumentCommand\TDiscreteLapNoise{ m g }{
  \ensuremath{
    \IfNoValueTF{#2}
    {\operatorname{\mathbb{TDL}ap}\left( {#1} \right)}
    {\operatorname{\mathbb{TDL}ap}\left( {#1}, {#2} \right)}
  }
}
\NewDocumentCommand\ExpNoise{ m g }{
  \ensuremath{
    \IfNoValueTF{#2}
    {\operatorname{\mathbb{E}xp}\left( {#1} \right)}
    {\operatorname{\mathbb{E}xp}\left( {#1}, {#2} \right)}
  }
}
\newif\ifcomment
\definecolor{DarkGreen}{rgb}{0.1,0.5,0.1}
\newcommand{\hao}[1]{\textcolor{blue}{[HAO: #1]}}
\newcommand{\hao}[1]{%
  \@bsphack
  \@esphack
}
\title{Faster Differentially Private Top-$k$ Selection: \hspace{2cm} A Joint Exponential Mechanism with Pruning}
\author{%
  Hao WU\thanks{This work was conducted while the author was a Postdoctoral Fellow at the University of Copenhagen.
  } \\
  University of Waterloo \\
    Canada\\
  \texttt{hao.wu1@uwaterloo.ca} \\
  \And
  Hanwen Zhang \\
  University of Copenhagen \\
  Denmark \\
  \texttt{hazh@di.ku.dk} 
}
\begin{document}

\maketitle

\begin{abstract}
    We study the differentially private top-$k$ selection problem,
    aiming to identify a sequence of $k$ items with approximately the highest scores from $d$ items. 
    Recent work by \citeauthor{GillenwaterJMD22} (ICML '$22$) employs a direct sampling approach from the vast collection of $d^{\,\Theta(k)}$ possible length-$k$ sequences, showing superior empirical accuracy compared to previous pure or approximate differentially private methods. 
    Their algorithm has a time and space complexity of $\Tilde{O}(dk)$. 

    \vspace{1mm}
    In this paper, we present an improved algorithm with time and space complexity 
    $
        O(d + k^2 / \eps \cdot \ln d)
        \footnote{A simplified bound from Theorem~\ref{theorem: main result} for a wide range of failure probabilities concerning solution quality.}
    $
    , where $\eps$ denotes the privacy parameter.
    Experimental results show that our algorithm runs orders of magnitude faster than their approach, while achieving similar empirical accuracy. 
    
\end{abstract}

\vspace{-2.5mm}
\section{Introduction}
\label{sec: introduction}

\vspace{-0.5mm}
Top-$k$ selection is a fundamental operation with a wide range of applications: \emph{search engines, e-commerce recommendations, data analysis, social media feeds etc}. 
Here, we consider the setting where the dataset consists of $d$ items evaluated by $n$ people.
Each person can cast at most one vote for each item, and vote for unlimited number of items. 
Our goal is to find a sequence of $k$ items which receives the highest number of votes.

Given that data can contain sensitive personal information such as medical conditions, browsing history, or purchase records, we focus on top-$k$ algorithms that are \emph{differentially private}~\citep{DworkMNS06}: 
it is guaranteed that adding/removing an arbitrary single person to/from the dataset does not substantially affect the output.
Research for algorithms under this model centers around how accurate the algorithms can be and how efficient they are.

Significant progress has been made in understanding the theoretical boundaries.
There are approximate differentially private algorithms~\citep{DurfeeR19, QiaoSZ21} that achieve asymptotic accuracy lower bound~\citep{BafnaU17, SteinkeU17}, and have $O(d)$ time and space usage.

There is also a research endeavor aimed at enhancing the empirical performance of the algorithms.
A particularly noteworthy one is the \joint mechanism by~\citet*{GillenwaterJMD22}, which exhibits best empirical accuracy across various parameter settings. 
Diverging from the prevalent \emph{peeling strategy} for top-$k$ selection--wherein items are iteratively selected, removed and repeated $k$ times--the \joint mechanism considers the sequence holistically, directly selecting an output from the space comprising all $d^{\,\Theta(k)}$ possible length-$k$ sequences.

While the algorithm has running time and space $\tilde{O}(dk)$, successfully avoiding an exponential time or space consumption, it notably incurs a higher computational cost than its $O(d)$ counterparts. 
This prompts the interesting question:
\begin{quote}
    \it
    {Research Question:}
    Can we design a mechanism equivalent to the \joint mechanism with running time and space linear in $d$?
\end{quote}

\paragraph{Our Contributions.}
Our paper answers the research question when $k$ is not too large.
Specifically,
\vspace{-1mm}
\begin{itemize}[leftmargin=4.5mm, topsep=2pt, itemsep=2pt, partopsep=2pt, parsep=2pt]
    \item We present an improved algorithm with time and space complexity of
    $
        O(d + k^2 / \eps \cdot \ln d)
    $
\end{itemize}

This is an informal statement of Theorem~\ref{theorem: main result}.
When $k \in O( \sqrt{d} )$ (a common scenario in practical settings), the time and space complexity simplifies to $\Tilde{O}(d)$.
Moreover, the proposed algorithm achieves the same asymptotic accuracy guarantee as the \joint mechanism.

Similar to the \joint mechanism, our algorithm is an instance of the exponential mechanism (detailed in Section~\ref{sec: preliminary}) that directly samples from the output space comprising all length-$k$ sequences. 
We introduce a "group by" sampling framework, which partitions the sequences in the output space into $O(nk)$ subsets, aiming to streamline the sampling process. 
The framework consists of two steps: sampling a subset and then sampling a sequence from that subset.
We provide efficient algorithms for both steps. 
Furthermore, we introduce a pruning technique to handle outputs with low accuracy uniformly. 
This technique effectively reduces the number of subsets to $\tilde{O}(k^2)$, leading to an algorithm in $\Tilde{O}\paren{d + k^2}$ time and space complexity. 

Finally, we perform extensive experiments to
\begin{itemize}[leftmargin=4.5mm, topsep=2pt, itemsep=2pt, partopsep=2pt, parsep=2pt]
    \item Verify the theoretical analysis of our algorithm.
    \item Demonstrate that our algorithm runs 10-100 times faster than \joint on the tested datasets.
    \item Show that our algorithm maintains comparable accuracy to \joint. 
\end{itemize}

\paragraph{Organization.} 
Our paper is structured as follows: Section~\ref{sec: problem definition} formally introduces the problem, while Section~\ref{sec: preliminary} delves into the necessary preliminaries for our algorithm. Section~\ref{sec: algorithm} introduces our novel algorithm, and Section~\ref{sec: experiments} presents our experiment results.

\vspace{-1mm}
\section{Problem Description}
\label{sec: problem definition}

Let $\DataDomain \doteq \set{1, \ldots, d}$ be a set of~$d$ items and~$\cU \doteq \set{1, \ldots, n}$ be a set of~$n$ clients.
Each client~$v \in \cU$ can cast at most one vote for each item, and can vote for an unlimited number of items.
For each item $i \in \DataDomain$, its score $\hist[i]$ is the number of votes it received.
The \emph{histogram} is a vector~$\hist \doteq \paren{ \hist[1], \ldots, \hist[d] } \in \IntSet{0}{n}^d$. 
Define $\partialPermutation{\DataDomain}{k} \doteq \set{ \paren{i_1, \ldots, i_k} \in \DataDomain^k : i_1, \ldots i_k \, \text{ are distinct} }$ be the collection of all possible length-$k$ sequences. 

The differentially private top-$k$ selection problem aims at finding a sequence from $\partialPermutation{\DataDomain}{k}$ with approximately largest scores, while protecting the privacy of each individual vote. 

\textit{Privacy Guarantee.}
Two voting histograms $\hist, \hist' \in \N^d$ are neighboring, denoted by~$\hist \sim \hist'$, if $\hist'$ can be obtained from $\hist$ by adding or removing an arbitrary individual's votes.
Therefore, when $\hist \sim \hist'$, we have $|| \hist - \hist' ||_\infty \le 1,$ and $\hist \le \hist'$ or $\hist \ge \hist'$.
To protect personal privacy, a top-$k$ selection algorithm should have similar output distributions on neighboring inputs. 

\begin{definition}[$\paren{\eps, \delta}$-Private Algorithm~\citep{DR14}] \label{def: Differential Privacy}
    Given~$\eps, \delta > 0$, a randomized algorithm 
    $\cM: \N^d \rightarrow \partialPermutation{\DataDomain}{k}$
    is called~$\paren{\eps, \delta}$-differentially private (DP),
    if for every~$\hist, \hist' \in \N^d$ such that~$\hist \sim \hist'$, 
    and all $Z \subseteq \partialPermutation{\DataDomain}{k}$,
    \vspace{-2mm}
    \begin{equation} \label{ineq: def private algo}
        \begin{array}{c}
            \Pr[ \cM (\hist) \in Z ] \le e^\eps \cdot \Pr [ \cM (\hist') \in Z ] + \delta\,.
        \end{array}    
    \end{equation}
\end{definition}

\textit{Remark:} 
An algorithm~$\cM$ is also called~$\eps$-DP for short, if it is~$\paren{\eps, 0}$-DP. 
If an algorithm is~$\eps$-DP, it is also called \emph{pure DP}, whereas it is called \emph{approximate DP} if it is $\paren{\eps, \delta}$-DP.
Although we present the definition in the context of top-$k$ selection algorithms, it applies more generally to any randomized algorithms $\cM: \cX \rightarrow \cY$, where $\cX$ is the input space, which is associated with a symmetric relation $\sim$ that defines neighboring inputs.

\vspace{-2mm}
\section{Preliminaries}
\label{sec: preliminary}

\subsection{Exponential Mechanism} 

The exponential mechanism~\citep{McSherryT07} is a well-known differentially private algorithm for publishing discrete values. 
Given a general input space $\cX$ (associated with a relation $\sim$ which defines neighboring datasets), a finite output space $\cY$, the exponential mechanism~$\ExpoMesm: \cX \rightarrow \cY$ is a randomized algorithm given by 
\begin{align}
    \P{\ExpoMesm(x) = y} 
        \propto \exp \bigparen{ - \eps \cdot \
        \loss{\NameExpoMesm}{x}{y} \, / \, \paren{ 2 \cdot \SensitivityExpoMesm } }, 
    &\quad  \forall x \in \cX, \, y \in \cY,
\end{align}
where $\loss{\NameExpoMesm}: \cX \times \cY \rightarrow \R$ is called the \emph{loss function} measuring how ``bad'' $y$ is when the input is $x$, and 
$\SensitivityExpoMesm$ is the \emph{sensitivity} of $\loss{\NameExpoMesm}$ 
which is the maximum deviation of $\loss{\NameExpoMesm}$:
\begin{align}
    \SensitivityExpoMesm \doteq \max_{x \sim x', y \in \cY} \card{
        \loss{\NameExpoMesm}{x}{y} - \loss{\NameExpoMesm}{x'}{y}
    }.
\end{align}

\begin{fact}[Privacy~\citep{McSherryT07}]
    The exponential mechanism $\ExpoMesm$ is $\eps$-DP.
\end{fact}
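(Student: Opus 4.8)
The plan is to prove the stronger pointwise statement that for neighboring inputs the output likelihoods at every single $y$ differ by at most a factor $e^{\eps}$, and then sum over the set. First I would make the normalizing constant explicit: writing $W(x) \doteq \sum_{y' \in \cY} \exp\bigl(-\eps \cdot \loss{\NameExpoMesm}{x}{y'} / (2 \SensitivityExpoMesm)\bigr)$, the definition of $\ExpoMesm$ becomes the exact identity $\Pr[\ExpoMesm(x) = y] = \exp\bigl(-\eps \cdot \loss{\NameExpoMesm}{x}{y} / (2 \SensitivityExpoMesm)\bigr) / W(x)$ for all $x \in \cX$ and $y \in \cY$. This is where finiteness of $\cY$ matters, so that $W(x)$ is a finite positive sum and the probabilities are well defined.

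Next, I would fix $x \sim x'$ and $y \in \cY$ and split the likelihood ratio into two factors:
$$
\frac{\Pr[\ExpoMesm(x) = y]}{\Pr[\ExpoMesm(x') = y]}
= \underbrace{\exp\!\Bigl( \tfrac{\eps}{2 \SensitivityExpoMesm}\bigl( \loss{\NameExpoMesm}{x'}{y} - \loss{\NameExpoMesm}{x}{y} \bigr) \Bigr)}_{(\mathrm{I})}
\cdot \underbrace{\frac{W(x')}{W(x)}}_{(\mathrm{II})}.
$$
For $(\mathrm{I})$, the definition of sensitivity gives $\card{\loss{\NameExpoMesm}{x}{y} - \loss{\NameExpoMesm}{x'}{y}} \le \SensitivityExpoMesm$, so $(\mathrm{I}) \le e^{\eps/2}$. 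For $(\mathrm{II})$, the same bound applied termwise yields $\exp\bigl(-\eps \cdot \loss{\NameExpoMesm}{x'}{y'} / (2\SensitivityExpoMesm)\bigr) \le e^{\eps/2} \cdot \exp\bigl(-\eps \cdot \loss{\NameExpoMesm}{x}{y'} / (2\SensitivityExpoMesm)\bigr)$ for every $y' \in \cY$; summing over $y'$ gives $W(x') \le e^{\eps/2} W(x)$, hence $(\mathrm{II}) \le e^{\eps/2}$. Multiplying the two bounds gives the pointwise inequality $\Pr[\ExpoMesm(x) = y] \le e^{\eps} \cdot \Pr[\ExpoMesm(x') = y]$.

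Finally I would lift this to sets: for any $Z \subseteq \cY$, summing the pointwise inequality over $y \in Z$ gives $\Pr[\ExpoMesm(x) \in Z] \le e^{\eps} \cdot \Pr[\ExpoMesm(x') \in Z]$, which is exactly \eqref{ineq: def private algo} with $\delta = 0$; since $\sim$ is symmetric, the roles of $x$ and $x'$ can be swapped, so the bound holds for every neighboring pair and $\ExpoMesm$ is $\eps$-DP. I do not expect a genuine obstacle here: the only step that is easy to overlook is the termwise bound on the normalizer $(\mathrm{II})$, and the factor $2$ in the denominator of the exponent is precisely what lets each of $(\mathrm{I})$ and $(\mathrm{II})$ consume half of the $e^{\eps}$ budget.
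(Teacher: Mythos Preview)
Your argument is the standard and correct proof of $\eps$-DP for the exponential mechanism: bound the numerator and the normalizer each by $e^{\eps/2}$ using the sensitivity definition, then sum over $Z$. The paper itself does not prove this fact at all---it is stated as a cited result from \citep{McSherryT07} with no accompanying argument---so there is nothing to compare against; your write-up simply fills in the classical proof that the citation points to.
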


\begin{fact}[Utility Guarantee~\citep{McSherryT07}]
    \label{fact: utility of exponential mechanism}
    For each $\beta \in \paren{0, 1}$, and
    $
        \tau \doteq \frac{2 \cdot \SensitivityExpoMesm}{\eps} \cdot \ln \frac{\card{\cY}}{\beta}, 
    $
    the exponential mechanism $\ExpoMesm$ 
    satisfies
    \begin{equation*}
        \begin{array}{cc} 
            \P{
                \loss{\NameExpoMesm}{x}{\ExpoMesm(x)} \ge { \min_{y \in \cY} \, \loss{\NameExpoMesm}{x}{y} } + \tau 
            } \le \beta,
            &
            \forall x \in \cX. 
        \end{array}
    \end{equation*}
\end{fact}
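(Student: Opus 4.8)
The plan is to estimate directly the probability mass that $\ExpoMesm$ places on ``bad'' outputs, i.e.\ those $y$ whose loss exceeds the optimum by at least $\tau$. Write $L^\star \doteq \min_{y \in \cY} \loss{\NameExpoMesm}{x}{y}$ and fix a minimizer $y^\star \in \cY$ attaining it (such $y^\star$ exists because $\cY$ is finite). Since the selection probabilities are proportional to $\exp\bigparen{ -\eps \cdot \loss{\NameExpoMesm}{x}{y} / \paren{2 \SensitivityExpoMesm} }$, the common normalizing constant cancels, and the quantity to control becomes the ratio
\begin{equation*}
    \P{ \loss{\NameExpoMesm}{x}{\ExpoMesm(x)} \ge L^\star + \tau }
    = \frac{ \sum_{y \,:\, \loss{\NameExpoMesm}{x}{y} \ge L^\star + \tau } \exp\bigparen{ -\eps \loss{\NameExpoMesm}{x}{y} / \paren{2 \SensitivityExpoMesm} } }
           { \sum_{y \in \cY} \exp\bigparen{ -\eps \loss{\NameExpoMesm}{x}{y} / \paren{2 \SensitivityExpoMesm} } }.
\end{equation*}

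The two estimates I need are then immediate. For the numerator, every summand with $\loss{\NameExpoMesm}{x}{y} \ge L^\star + \tau$ is at most $\exp\bigparen{ -\eps \paren{L^\star + \tau} / \paren{2 \SensitivityExpoMesm} }$, and there are at most $\card{\cY}$ such terms, so the numerator is bounded by $\card{\cY} \cdot \exp\bigparen{ -\eps \paren{L^\star + \tau} / \paren{2 \SensitivityExpoMesm} }$. For the denominator, I discard all terms except the one coming from $y^\star$, giving the lower bound $\exp\bigparen{ -\eps L^\star / \paren{2 \SensitivityExpoMesm} }$. Dividing, the $L^\star$ contributions cancel and the bound collapses to $\card{\cY} \cdot \exp\bigparen{ -\eps \tau / \paren{2 \SensitivityExpoMesm} }$.

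Finally I substitute the stated threshold $\tau = \frac{2 \SensitivityExpoMesm}{\eps} \ln \frac{\card{\cY}}{\beta}$, which makes the exponent equal to $-\ln \frac{\card{\cY}}{\beta}$, so that $\exp\bigparen{ -\eps \tau / \paren{2 \SensitivityExpoMesm} } = \beta / \card{\cY}$ and the whole expression becomes $\card{\cY} \cdot \beta / \card{\cY} = \beta$, as claimed.

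I do not expect a genuine obstacle here, as this is the classical McSherry--Talwar utility argument; the only points that deserve care are (i) that the normalizing constant must be cancelled rather than evaluated, which is precisely what makes the estimate clean, and (ii) that the crude step of bounding the number of bad outputs by $\card{\cY}$ is what forces the $\ln \card{\cY}$ factor into $\tau$. One should also note that the lower bound on the denominator relies only on a single optimal output, so no assumption about the shape of the loss landscape is needed.
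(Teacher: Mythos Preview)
Your proposal is correct and is exactly the classical McSherry--Talwar argument. The paper does not prove Fact~\ref{fact: utility of exponential mechanism} itself (it is cited), but the identical reasoning---bounding the numerator by $\card{\cY}$ times the worst term and the denominator by the single optimal term---appears verbatim in the paper as inequality~\eqref{ineq: tail probability} when specializing to the \joint loss.
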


{\it Implementation.}
Given input $x \in \cX$, a technique for implementing the exponential mechanism is to add i.i.d. Gumbel noises to the terms of $\set{ - \eps \cdot \loss{\NameExpoMesm}{x}{y} \, / \, \paren{ 2 \cdot \SensitivityExpoMesm } : y \in \cY}$, and then select the $y$ corresponding to the noisy maximum.

\begin{definition}
    Given $b > 0$, the Gumbel distribution with parameter $b$, denoted by $\GumbelNoise{b}$, has probability density function
    $
        p(x) = \frac{1}{b} \cdot \exp \PAREN{ - \PAREN{ \frac{x}{b} + \exp \PAREN{ - \frac{x}{b} } }  }, \,
        \forall x \in \R.
    $
\end{definition}

\begin{fact}[\citep{YELLOTT1977109}]
    \label{fact: sampling based on grumbel noisy max}
    Assume that $w_i \ge 0,$ for $i \in [m]$, 
    and $X_i \sim \GumbelNoise{1}, i \in [m]$ are independent random variables.
    Then 
    $       
        \Pr \big[
            i = \argmax_{j \in [m]} \PAREN{
                X_j + \ln w_j
            }
        \big] 
        \propto  
        w_i.
    $
\end{fact}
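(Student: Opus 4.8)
The plan is to prove the stronger exact identity $\Pr\!\big[i = \argmax_{j}(X_j + \ln w_j)\big] = w_i \,/\, \sum_{j} w_j$, from which the stated proportionality follows immediately. First I would record the distribution of the shifted variables $Y_j \doteq X_j + \ln w_j$. Since $X_j \sim \GumbelNoise{1}$ has CDF $\exp(-e^{-x})$, a direct substitution shows that $Y_j$ has CDF $G_j(x) = \exp(-w_j e^{-x})$ and density $g_j(x) = w_j e^{-x}\exp(-w_j e^{-x})$ when $w_j > 0$; the degenerate case $w_j = 0$ sends $Y_j$ to $-\infty$ under the convention $\ln 0 = -\infty$, so it is never the maximizer, which matches the value $0$ predicted by the formula.

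Next I would condition on the value attained by the candidate maximizer. Because the $Y_j$ are independent and continuous, ties occur with probability zero, so the event that index $i$ is the (almost surely unique) argmax can be written by integrating over $Y_i = x$ while demanding $Y_j \le x$ for all $j \ne i$:
\[
    \Pr\!\big[i = \argmax_{j} Y_j\big] = \int_{-\infty}^{\infty} g_i(x) \prod_{j \ne i} G_j(x)\, dx.
\]

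The key simplification is that the product of these Gumbel-type CDFs collapses cleanly, since the exponents add. Substituting the expressions above, the integrand becomes $w_i\, e^{-x}\exp(-W e^{-x})$ where $W \doteq \sum_{j} w_j$. I would then evaluate the integral by the change of variables $u = e^{-x}$, so that $e^{-x}\,dx = -\,du$ and the limits become $0$ and $\infty$; this turns it into $w_i \int_0^\infty e^{-W u}\, du = w_i / W$. Hence the exact probability is $w_i / W$, and the claimed $\propto w_i$ follows.

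I do not expect a genuine obstacle here, as the computation is short and self-contained. The only points that warrant care are (i) justifying that ties occur with probability zero, which legitimizes the ``integrate the density of the winner against the CDFs of the losers'' formula, and (ii) handling the boundary case $w_i = 0$ consistently with $\ln 0 = -\infty$ so that the formula remains valid across the full range $w_i \ge 0$.
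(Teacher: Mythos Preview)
Your argument is correct and is the standard derivation of the Gumbel--max trick. Note, however, that the paper does not give its own proof of this statement: it is recorded as a background fact with a citation to \citep{YELLOTT1977109} and used as a black box. So there is nothing to compare against; your write-up simply supplies the elementary calculation the paper chose to omit.
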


It follows that, if $X_y \sim \GumbelNoise{1}, y \in \cY$ are independent random variables, then 
\begin{equation*} 
    \begin{array}{c}
        \P{
            y = \argmax_{y' \in \cY} \set{
                X_{y'} - \eps \cdot \loss{\NameExpoMesm}{x}{y'} \, / \, \paren{ 2 \cdot \SensitivityExpoMesm }
            } 
        } 
        \propto
        \exp \bigparen{ -\eps \cdot \loss{\NameExpoMesm}{x}{y} \, / \, \paren{ 2 \cdot \SensitivityExpoMesm } }.
    \end{array}
\end{equation*}

\subsection{\joint Mechanism} 
The \joint mechanism $\jointMesm: \N^d \rightarrow \partialPermutation{\DataDomain}{k}$~\citep{GillenwaterJMD22} is an instance of the exponential mechanism which samples a sequence $\topkSoln = \paren{\topkSoln[1], \ldots, \topkSoln[k]}$ directly from $\partialPermutation{\DataDomain}{k}$, 
with the loss function 
\begin{align}
    \label{eq: definition of error for joint}
    \begin{array}{c}
        \loss{\joint}{\hist}{\topkSoln} 
            \doteq \max_{i \in [k]} \bigparen{ \hist_{(i)} - \hist\BigBracket{\topkSoln[i]} },
    \end{array}
\end{align}
where $\hist_{(i)}$ is the true $i^{(th)}$ largest entry in $\hist$.
It can be seen that $\loss{\joint}(\cdot)$ has sensitivity $\Delta_\joint = 1$.

Observe that a naive implementation of this exponential mechanism needs evaluating and storing the scores of $\card{\partialPermutation{\DataDomain}{k}} = d^{\, \Omega(k)}$ sequences. 
Remarkably, \citet*{GillenwaterJMD22} demonstrate that the exponential time and space requirements can be reduced to polynomial.

\begin{fact}[\joint Mechanism~\citep{GillenwaterJMD22}]
    \label{fact: joint mechanism}
    There is an implementation of exponential mechanism which directly sample a sequence from $\partialPermutation{\DataDomain}{k}$ according to loss function $\loss{\joint}{\hist}{\topkSoln} = \max_{i \in [k]} \big( \hist_{(i)} - \hist\BigBracket{\topkSoln[i]})$
    with time $O(dk \log k + d \log d)$ time and space~$O(dk)$.
\end{fact}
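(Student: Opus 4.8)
The plan is to realize $\jointMesm$ by a two-stage strategy: first sample the loss value $\tau$ that the output sequence will attain, then sample a uniformly random sequence attaining exactly that loss. Write $\hist_{(1)} \ge \cdots \ge \hist_{(d)}$ for the sorted scores, computed once in $O(d\log d)$ time. For an integer $\tau \ge 0$, a sequence $\topkSoln$ has $\loss{\joint}{\hist}{\topkSoln} \le \tau$ precisely when $\hist[\topkSoln[j]] \ge \hist_{(j)} - \tau$ for every $j \in [k]$. Let $c_j(\tau)$ be the number of items whose score is at least $\hist_{(j)} - \tau$. Since $\hist_{(j)}$ is nonincreasing in $j$, the per-position eligible sets are nested, so any item chosen for an earlier position is automatically eligible for every later one. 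Filling positions $1,\dots,k$ in order therefore leaves exactly $f_j(\tau) := c_j(\tau)-(j-1)$ choices at position $j$, and the number of sequences of loss at most $\tau$ is the clean product $P(\tau) = \prod_{j=1}^{k}\max(0, f_j(\tau))$. The first step is to prove this counting identity; the only subtlety is invoking the nesting property to rule out eligibility conflicts among already-chosen items.

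The second step converts $P(\cdot)$ into group weights without paying for the full range of $\tau$. Each $c_j$ is a step function jumping only at the $O(d)$ values $\hist_{(j)}-\hist_{(m)}$, and because the scores are already sorted these breakpoints come pre-sorted in $\tau$ for each fixed $j$. Hence $P$ is piecewise constant, and the count of sequences of loss exactly $\tau$, namely $P(\tau)-P(\tau-1)$, is nonzero only at these $O(dk)$ breakpoints. I would merge the $k$ pre-sorted breakpoint lists with a $k$-way heap merge, processing breakpoints in increasing $\tau$ and maintaining $P$ incrementally (updating one factor per jump and tracking the number of zero factors, so each update is $O(1)$); this is the step that produces the $O(dk\log k)$ term. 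At each realized value I record the group weight $\bigl(P(\tau)-P(\tau-1)\bigr)\exp(-\eps\tau/2)$ (using $\Delta_\joint = 1$) and then sample one group $\tau$ from the resulting categorical distribution in a single pass.

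The third step is to sample a uniform sequence of loss exactly $\tau$. I would use the telescoping identity
\[
P(\tau)-P(\tau-1) = \sum_{j=1}^{k}\Big(\prod_{i<j} f_i(\tau-1)\Big)\,\bigl(c_j(\tau)-c_j(\tau-1)\bigr)\,\Big(\prod_{i>j} f_i(\tau)\Big),
\]
whose $j$-th term counts sequences whose \emph{first} position attaining loss exactly $\tau$ is $j$: positions before $j$ draw from the tighter level $\tau-1$, position $j$ draws one of the items of score exactly $\hist_{(j)}-\tau$, and later positions draw from level $\tau$. Sampling $j$ proportionally to these terms (prefix/suffix products precomputed in $O(k)$) and then filling each position uniformly among its eligible-but-unused items reduces to drawing a uniform index from a prefix of the sorted array while avoiding fewer than $k$ forbidden indices, at cost $O(k)$ per position. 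The nesting property again guarantees the band items at position $j$ have score strictly below everything chosen earlier, so the decomposition is exact and conflict-free. Each sequence of loss $\tau$ thus receives probability proportional to $\exp(-\eps\tau/2)=\exp\!\bigl(-\eps\,\loss{\joint}{\hist}{\topkSoln}/2\bigr)$, which is exactly the target distribution of $\jointMesm$.

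Combining the stages gives time $O(d\log d + dk\log k)$ (the $O(k^2)$ within-group work is dominated since $k \le d$) and space $O(dk)$ for the breakpoints, matching the claim. The main obstacle I anticipate is the within-group sampler: verifying that the telescoping decomposition exactly partitions the loss-$\tau$ class, with no double counting and no collision with previously used items, is where the nesting structure must be invoked most carefully, and it is also where an off-by-one between the thresholds $\hist_{(j)}-\tau$ and $\hist_{(j)}-\tau+1$ would silently break correctness.
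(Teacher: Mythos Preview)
Your approach is correct. Both your argument and the paper's hinge on a $k$-way merge over $O(dk)$ events with $O(1)$ incremental product updates, but the underlying partitions differ. The paper's proof partitions $\partialPermutation{\DataDomain}{k}$ into the $dk$ classes $U_{i,j}$ indexed by a position $i\in[k]$ and the specific item $j\in[d]$ placed there; it sorts the pairs $(i,j)$ by $\hist_{(i)}-\hist[j]$ together with a fractional tiebreak $\tfrac{i}{2k}-\tfrac{j}{2dk}$, arranged so that consecutive pairs in the sorted order differ by \emph{exactly one} unit increment in the relevant count array, and the $k$-way merge then steps through these pairs one at a time. You instead partition first by the loss value $\tau$ and then, via your telescoping identity, by the first position attaining it---precisely the partition $\groupSeq{r,i}$ the paper introduces in Section~\ref{subsec: algorithm overview} for its \emph{new} algorithm---and the $j$-th telescoping term is exactly their formula for $|\groupSeq{r,j}|$ in Lemma~\ref{lemma: formula for the group size}. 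You recover the $O(dk)$ event bound not by having $dk$ classes outright but by observing that only $O(dk)$ of the $(n{+}1)k$ classes are nonempty (the breakpoints of $P$). The paper's finer partition buys a clean single-increment-per-step invariant with no need to batch multiple $c_j$-jumps landing at a common $\tau$; your coarser one is more direct and, interestingly, coincides with the partition underlying the paper's main contribution.
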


For completeness, we includes a short proof of Fact~\ref{fact: joint mechanism} in Appendix~\ref{appendix: missing proofs}. 
Let $\topkSoln\,^*$ corresponds to the $k$ items with the largest scores. 
Then clearly $\min_{\topkSoln} \, \loss{\joint}{\hist}{\topkSoln} = \loss{\joint}{\hist}{\topkSoln\,^*} = 0.$
Combining $\card{\partialPermutation{\DataDomain}{k}} = \binom{d}{k} \cdot k!$ and $\Delta_\joint = 1$,
and applying Fact~\ref{fact: utility of exponential mechanism}, provide the theoretic utility guarantee of \joint.

\begin{fact}[Utility Guarantee]
    \label{fact: utility of joint mechanism}
    For each $\beta \in \paren{0, 1}$,  
    $
        \tau \doteq 
        \Big\lceil \frac{2}{\eps} \cdot \ln \frac{\binom{d}{k} \cdot k!}{\beta} 
        \Big\rceil
        \in \Theta \bigparen{
            \frac{k}{\eps} \cdot (k\ln d + \ln \frac{1}{\beta}) 
        }, 
    $
    \vspace{-2mm}
    \begin{equation}
        \begin{array}{c}
            \Pr[ \loss{\joint}{\hist}{\jointMesm(\hist)} \ge \tau ] \le \beta.
        \end{array}
    \end{equation}
    \vspace{-3mm}
\end{fact}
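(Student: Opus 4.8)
The plan is to obtain this statement as a direct corollary of the general exponential-mechanism utility bound, Fact~\ref{fact: utility of exponential mechanism}, specialized to the parameters of \joint, followed by a routine asymptotic simplification.

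First I would instantiate Fact~\ref{fact: utility of exponential mechanism} with the output space $\cY = \partialPermutation{\DataDomain}{k}$, the loss $\loss{\joint}$, and the sensitivity $\SensitivityExpoMesm = \Delta_\joint = 1$ already recorded below Eq.~\eqref{eq: definition of error for joint}. The cardinality is $\card{\cY} = \binom{d}{k}\cdot k!$, since a length-$k$ sequence of distinct items is a choice of $k$ items out of $d$ together with an ordering. The one fact specific to \joint that I need is that the minimum loss is zero: taking $\topkSoln\,^*$ to be the sequence listing the $k$ highest-scoring items gives $\hist\BigBracket{\topkSoln\,^*[i]} = \hist_{(i)}$ for every $i \in [k]$, hence $\loss{\joint}{\hist}{\topkSoln\,^*} = \max_{i\in[k]}\bigparen{\hist_{(i)} - \hist\BigBracket{\topkSoln\,^*[i]}} = 0$ and $\min_{\topkSoln} \loss{\joint}{\hist}{\topkSoln} = 0$.

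Plugging these values into Fact~\ref{fact: utility of exponential mechanism} yields, with $\tau_0 \doteq \frac{2}{\eps}\ln\frac{\binom{d}{k}\cdot k!}{\beta}$, the bound $\Pr[\loss{\joint}{\hist}{\jointMesm(\hist)} \ge \tau_0] \le \beta$ for every $\hist$. To replace $\tau_0$ by the integer $\tau = \lceil \tau_0 \rceil$ claimed in the statement, I would note that $\loss{\joint}$ is integer-valued (a maximum of differences of integer histogram entries) and $\tau \ge \tau_0$, so $\Pr[\loss{\joint}{\hist}{\jointMesm(\hist)} \ge \tau] \le \Pr[\loss{\joint}{\hist}{\jointMesm(\hist)} \ge \tau_0] \le \beta$; rounding up therefore only strengthens the guarantee.

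It remains to justify the asymptotic form $\tau \in \Theta\bigparen{\frac{k}{\eps}(k\ln d + \ln\frac{1}{\beta})}$. Writing $\binom{d}{k}\cdot k! = \frac{d!}{(d-k)!} = \prod_{i=0}^{k-1}(d-i)$, I would bound $\ln\bigparen{\binom{d}{k} k!} = \sum_{i=0}^{k-1}\ln(d-i)$ above by $k\ln d$ and below by a matching $\Omega(k\ln d)$ term: the upper bound is immediate, while the lower bound follows from $\prod_{i=0}^{k-1}(d-i) \ge (d/2)^k$ when $k \le d/2$, and from Stirling's approximation $\ln(m!) = m\ln m - m + \Theta(\ln m)$ applied to both $d!$ and $(d-k)!$ when $k > d/2$. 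Combining $\Theta(k\ln d)$ with the $\ln\frac{1}{\beta}$ term and the factor $\frac{2}{\eps}$ gives the stated $\Theta$-bound. I do not anticipate a real obstacle here, as the whole argument is a substitution into an already-proved utility bound; the only mildly delicate point is the two-sided estimate of the falling factorial $\frac{d!}{(d-k)!}$ as $k$ approaches $d$, which Stirling handles.
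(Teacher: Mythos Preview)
Your approach matches the paper's exactly: apply Fact~\ref{fact: utility of exponential mechanism} with $\card{\cY} = \binom{d}{k}\,k!$, $\Delta_\joint = 1$, and $\min_{\topkSoln}\loss{\joint}{\hist}{\topkSoln} = 0$, which is precisely the one-line justification the paper gives immediately before the statement. One caveat on the last paragraph: your own computation yields $\tau = \Theta\bigparen{\tfrac{1}{\eps}\,(k\ln d + \ln\tfrac{1}{\beta})}$, not the displayed $\Theta\bigparen{\tfrac{k}{\eps}\,(k\ln d + \ln\tfrac{1}{\beta})}$ --- the extra factor of $k$ in the statement appears to be a typo (the running-time bound $O(d + k\tau)$ in Theorem~\ref{theorem: main result} and its simplification to $O(d + k^2/\eps\cdot\ln d)$ are consistent with the former, not the latter), so do not try to manufacture that missing $k$.
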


\section{Algorithm}
\label{sec: algorithm}

In this section, we present an algorithm which has similar output distribution as the~\joint mechanism~\citep{GillenwaterJMD22}, but reduces the time and space complexity to $O(d + k \cdot \tau)$.
The main result is stated as follows. 

\begin{theorem}
    \label{theorem: main result}
    Let $\beta \in (0, 1)$, 
    $
        \tau \doteq 
        \Big\lceil \frac{2}{\eps} \cdot \ln \frac{\binom{d}{k} \cdot k!}{\beta} 
        \Big\rceil,
    $ 
    and~$\ourAlgo : \N^d \rightarrow \partialPermutation{\DataDomain}{k}$ be the top-$k$ algorithm 
    s.t.
    \begin{align}
        \begin{array}{c}
            \Pr\big[ \ourAlgo(\hist) = \topkSoln \big] 
                \propto \exp \bigparen{ - \eps \cdot \loss{\ourAlgo}{\hist}{\topkSoln} \, / \, \paren{ 2 \cdot \Delta_\cA } },
        \end{array}
    \end{align}
    where 
    $
        \loss{\ourAlgo}{\hist}{\topkSoln} 
            \doteq \min \paren{
                \loss{\joint}{\hist}{\topkSoln}, 
                \tau
            },
    $
    and $\Delta_\ourAlgo$ is the sensitivity of $\loss{\ourAlgo}$.
    Then $\ourAlgo$ is $\eps$-DP and has an implementation with time and space complexity $O(d + k \cdot \tau)$.
    It satisfies the following condition:
    \begin{equation}
        \begin{array}{c}
            \Pr[ \loss{\joint}{\hist}{\ourAlgo(\hist)} \ge \tau ] \le \beta.
        \end{array}
    \end{equation}
\end{theorem}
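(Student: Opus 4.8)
The plan is to prove the three assertions of the theorem in turn: the $\eps$-DP guarantee, the utility bound $\Pr[\loss{\joint}{\hist}{\ourAlgo(\hist)} \ge \tau] \le \beta$, and the $O(d + k\tau)$ time/space bound. The first two follow quickly from the facts on the exponential mechanism together with elementary properties of the clipping map $x \mapsto \min(x, \tau)$; the genuinely new work lies in the efficient implementation, where I expect the main difficulty.

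\textbf{Privacy.} First I would note that $\ourAlgo$ is by construction the exponential mechanism instantiated with loss $\loss{\ourAlgo} = \min(\loss{\joint}, \tau)$ and sensitivity $\Delta_\ourAlgo$. Since $x \mapsto \min(x, \tau)$ is $1$-Lipschitz, i.e. $|\min(a,\tau) - \min(b,\tau)| \le |a-b|$, clipping cannot increase sensitivity, so $\Delta_\ourAlgo \le \Delta_\joint = 1$. The privacy guarantee of the exponential mechanism then gives that $\ourAlgo$ is $\eps$-DP with no further work. For utility I would apply Fact~\ref{fact: utility of exponential mechanism} to $\ourAlgo$ with output space $\partialPermutation{\DataDomain}{k}$, so $\card{\cY} = \binom{d}{k}\cdot k!$. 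Because $\loss{\joint}{\hist}{\topkSoln^{*}} = 0$ for the optimal sequence $\topkSoln^{*}$ and clipping fixes $0$, we have $\min_{\topkSoln}\loss{\ourAlgo}{\hist}{\topkSoln} = 0$; combined with $\Delta_\ourAlgo \le 1$, the threshold $\frac{2\Delta_\ourAlgo}{\eps}\ln\frac{\card{\cY}}{\beta}$ supplied by the Fact is at most $\tau$, whence $\Pr[\loss{\ourAlgo}{\hist}{\ourAlgo(\hist)} \ge \tau] \le \beta$. The closing observation is that clipping makes the two tail events coincide: $\loss{\ourAlgo} = \min(\loss{\joint}, \tau) \ge \tau$ if and only if $\loss{\joint} \ge \tau$. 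Substituting gives exactly $\Pr[\loss{\joint}{\hist}{\ourAlgo(\hist)} \ge \tau] \le \beta$.

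\textbf{Efficiency (the main obstacle).} The crux is to sample from the exponential-mechanism distribution over the $d^{\,\Theta(k)}$ sequences without enumerating them. I would realize $\ourAlgo$ through the Gumbel-max equivalence of Fact~\ref{fact: sampling based on grumbel noisy max}, coupled with a ``group by'' strategy: partition $\partialPermutation{\DataDomain}{k}$ into subsets on which $\loss{\ourAlgo}$ is constant, so that every sequence in a subset carries the identical weight $\exp\bigparen{-\eps \cdot \ell/(2\Delta_\ourAlgo)}$ determined by its common loss level $\ell$. Sampling then decomposes into two steps: (i) choose a subset with probability proportional to (subset size) $\times$ (common weight), which by Fact~\ref{fact: sampling based on grumbel noisy max} reduces to one Gumbel draw per subset; and (ii) draw a sequence uniformly from the chosen subset. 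The decisive role of clipping is that all sequences with $\loss{\joint} \ge \tau$ share the single value $\tau$ and collapse into one bucket that is never expanded; this pruning is what caps the number of relevant subsets at $O(k\tau)$. Finishing the bound then requires an $O(d)$ preprocessing pass to read and sort the histogram, after which the $O(k\tau)$ surviving subsets are processed in amortized $O(1)$ each, assembling to $O(d + k\tau)$.

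I expect the hard part to be twofold. First, counting, for each admissible loss level $\ell < \tau$, the number of length-$k$ sequences in $\partialPermutation{\DataDomain}{k}$ with $\loss{\joint} = \ell$ under the distinctness constraint, and computing these counts from aggregate statistics of the sorted scores rather than per item, so that the per-level work does not silently reintroduce a factor of $d$. Second, implementing the uniform within-subset draw of step (ii)---in particular sampling a uniform sequence from the large pruned bucket---in time proportional to $k$ rather than to the bucket's cardinality, while verifying that the two-step procedure reproduces the target exponential-mechanism distribution \emph{exactly} rather than merely approximately. Once these counting and sampling primitives are established and shown correct, they combine with the $O(d)$ preprocessing to yield the claimed $O(d + k\tau)$ time and space.
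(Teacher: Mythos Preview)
Your plan is correct and mirrors the paper's approach: privacy and utility follow exactly as you describe (exponential mechanism plus the $1$-Lipschitz property of clipping, with the tail-event equivalence $\loss{\ourAlgo}\ge\tau \iff \loss{\joint}\ge\tau$), and the efficiency argument is the same group-by-then-prune strategy realized via Gumbel-max. The one technical point to sharpen is the choice of partition: the paper indexes subsets not merely by loss level $r$ but also by the \emph{first coordinate} $i\in[k]$ attaining that loss (their $\groupSeq{r}{i}$), which is precisely what yields the $O(k\tau)$ subset count you quote, gives each subset size a closed-form product (Lemma~\ref{lemma: formula for the group size}), and makes the uniform within-subset draw a simple left-to-right procedure; correspondingly the pruned tail is kept as $k$ buckets $\groupSeq{\ge\tau}{i}$ rather than one, and the $O(d)$ preprocessing avoids a full sort by using hashing plus linear-time top-$k$ selection.
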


{\it Simplification.} When ${1} / {\beta} \in O\PAREN{ d^k }$, the error $\tau$ reduces to 
$
    O(d + k^2 / \eps \cdot \ln d).
$

It can be verified that the sensitivity $\Delta_\ourAlgo = 1.$
Since $\ourAlgo$ is also an instance of the exponential mechanism, its privacy guarantee naturally follows from this property. 
As for the utility guarantee, instead of expressing it in terms of $\loss{\ourAlgo}$ (and directly applying Fact~\ref{fact: utility of exponential mechanism}), we express it in terms of the loss function $\loss{\joint}$. 
The error achieved mirrors that of \joint, as stated in Corollary~\ref{fact: utility of joint mechanism}.

\vspace{2mm}
{\it Road map.} 
In Section~\ref{subsec: algorithm overview}, we first propose a novel "group by" framework for sampling a sequence from the space of all possible length-$k$ sequences. 
Then, we introduce a new choice of groups to materialize this framework, leading to a new algorithm with $O(d + nk)$ time complexity. 
In Section~\ref{subsec: pruning}, we propose a new pruning technique that reduces the running time to $O(d + \tau k)$.

A detailed comparison with \joint is deferred to Section~\ref{sec: related-work}, where we will explain \joint in the context of the novel framework and compare it with our new algorithm.

\emph{
To simplify notation, when the input histogram is clear from the context, we use $\loss(\topkSoln)$ as shorthand for $\loss{\joint}{\hist}{\topkSoln}$, and $\loss{\ourAlgo}(\topkSoln)$ as shorthand for $\loss{\ourAlgo}{\hist}{\topkSoln}$.
}

\subsection{Sampling Framework}
\label{subsec: algorithm overview}

\paragraph{Partitioning.}
We begin with a novel framework for designing algorithms that produces the same output distribution as \joint.
Let $P_1, \ldots, P_m$ be an arbitrary partition of $\partialPermutation{\DataDomain}{k}$.
It is called \emph{$\loss$-consistent}, if all sequence belonging the same subset have the same loss (w.r.t loss function $\loss$): 
$
    \forall i \in [m], \, \forall \topkSoln, \topkSoln\,' \in P_i, 
    \, 
    \loss(\topkSoln) = \loss(\topkSoln\,').
$
We regard $\loss(P_i)$ as the loss of the sequences in $P_i$.
Given this partition, we can design an algorithm that reproduces the output distribution of \joint using a two-step approach:

\begin{description}
    \item[\hspace{3mm}\textbf{\textit{\subsetsampling}}:] 
        sample a subset $P_i$ with probability proportional to
          $
            |P_i| \cdot \exp \PAREN{ - \eps \cdot \loss(P_i) / 2}. 
          $
    \item[\hspace{3mm}\textbf{\textit{\subsetSequencesampling}}:] sample an $\topkSoln \in P_i$ uniformly.
\end{description}

There can be more than one choices of partitions of $\partialPermutation{\DataDomain}{k}.$
We would like to find one which enables efficient sampling algorithms for both steps. 
We first consider the partition $\groupSeq{r}{i}, r \in \IntSet{0}{n}, i \in [k]$, given by: 
\vspace{-2mm}
\begin{align}
    \begin{array}{c}
        \groupSeq{r}{i}
            \doteq \set{ 
                \topkSoln = \paren{\topkSoln[1], \ldots, \topkSoln[k]} \in \partialPermutation{\DataDomain}{k} : 
                    \, \loss(\topkSoln) = r  
                    \text{ and } \, 
                    \begin{matrix}
                        \hist\BigBracket{\topkSoln[j]} > \hist_{(j)} - r, & \forall j < i \\
                        \hist\BigBracket{\topkSoln[j]} = \hist_{(j)} - r, & j = i \\            
                        \hist\BigBracket{\topkSoln[j]} \ge \hist_{(j)} - r, & \forall j > i \\
                    \end{matrix}
            }. 
    \end{array}
\end{align}
Based on the definition of~$\loss$ in Equation~\eqref{eq: definition of error for joint}, $\groupSeq{r}{i}$ represents the subset of sequences with an loss equal to $r$, and $i$ denotes the index of the first coordinate reaching this loss.
Hence, $\loss (\groupSeq{r}{i}) = r$.
Via Fact~\ref{fact: sampling based on grumbel noisy max}, to sample an $\groupSeq{r}{i}$ with probability proportional to
$
    |\groupSeq{r}{i}| \cdot \exp \PAREN{ - \eps \cdot r / 2},  
$
we can compute the maximum of $\set{X_{r, i} + \ln \bigparen{|\groupSeq{r}{i}| \cdot \exp \PAREN{ - \eps \cdot r / 2}} : (r, i) \in \IntSet{0}{n}\times[k]}$, where $X_{r, i} \sim \GumbelNoise{1}$.

\emph{The first key advantage} of the partition being discussed is that, each $\ln |\groupSeq{r}{i}|$ can be expressed as a sum of $k$ terms and it can be computed efficiently.

\begin{definition}
    For each $r \in \IntSet{0}{n}, j \in [k]$, define 
    $
        \cnt{r}{j} \doteq 
            | \{ j' \in [d] : \hist\bracket{j'} \ge \hist_{(j)} - r \} |.
    $
\end{definition}

\begin{lemma}
    \label{lemma: formula for the group size}
    For each $r \in \IntSet{0}{n}, i \in [k]$, it holds that  
    \begin{equation} 
        \label{eq: formula for the group size}
        \begin{array}{c}
            \ln |\groupSeq{r}{i}| 
                = \sum_{j = 1}^{i - 1} \ln \PAREN{ \cnt{r - 1}{j} - (j - 1) } 
                + \ln \big( \cnt{r}{i} -  \cnt{r - 1}{i} \big)
                + \sum_{j = i + 1}^{k} \ln \PAREN{ \cnt{r}{j}  - (j - 1) }.
        \end{array}
    \end{equation}
\end{lemma}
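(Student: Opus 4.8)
The plan is to prove the stronger multiplicative identity
\begin{equation*}
    |\groupSeq{r}{i}| = \prod_{j=1}^{i-1} \bigparen{ \cnt{r-1}{j} - (j-1) } \cdot \bigparen{ \cnt{r}{i} - \cnt{r-1}{i} } \cdot \prod_{j=i+1}^{k} \bigparen{ \cnt{r}{j} - (j-1) },
\end{equation*}
from which Equation~\eqref{eq: formula for the group size} follows immediately by taking logarithms. I would count the sequences $\topkSoln = (\topkSoln[1], \ldots, \topkSoln[k]) \in \groupSeq{r}{i}$ by choosing their coordinates one at a time in the order $j = 1, 2, \ldots, k$, and argue that the number of admissible choices for $\topkSoln[j]$, given any fixed admissible prefix $(\topkSoln[1], \ldots, \topkSoln[j-1])$, is exactly the $j$-th factor above; the product rule then gives the size.

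First I would rewrite each coordinate constraint from the definition of $\groupSeq{r}{i}$ as membership in an explicit candidate set, whose cardinality is read off from the definition of $\cnt{r}{j}$. For $j < i$ the constraint $\hist[\topkSoln[j]] > \hist_{(j)} - r$ (equivalently $\hist[\topkSoln[j]] \ge \hist_{(j)} - (r-1)$, as scores are integers) places $\topkSoln[j]$ in a set of size $\cnt{r-1}{j}$; for $j = i$ the equality constraint $\hist[\topkSoln[i]] = \hist_{(i)} - r$ places $\topkSoln[i]$ in a set of size $\cnt{r}{i} - \cnt{r-1}{i}$; and for $j > i$ the constraint $\hist[\topkSoln[j]] \ge \hist_{(j)} - r$ places $\topkSoln[j]$ in a set of size $\cnt{r}{j}$. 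The only remaining requirement on the tuple is pairwise distinctness of the coordinates.

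The heart of the argument is the sortedness $\hist_{(1)} \ge \hist_{(2)} \ge \cdots \ge \hist_{(k)}$, which I would use to turn distinctness into a uniform correction. For $j \ne i$, I would show that every item $\topkSoln[j']$ chosen at an earlier position $j' < j$ automatically meets the threshold defining the candidate set for position $j$: since $\hist_{(j')} \ge \hist_{(j)}$, the prefix constraint at $j'$ gives $\hist[\topkSoln[j']] > \hist_{(j')} - r \ge \hist_{(j)} - r$ when $j < i$ (both thresholds strict), and $\hist[\topkSoln[j']] \ge \hist_{(j')} - r \ge \hist_{(j)} - r$ when $j > i$ (threshold at $j$ non-strict). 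In either case all $j-1$ previously chosen items lie in the candidate set for $\topkSoln[j]$; being distinct, they remove exactly $j-1$ options, leaving $\cnt{r-1}{j} - (j-1)$ choices for $j < i$ and $\cnt{r}{j} - (j-1)$ choices for $j > i$. The single exception is position $i$, whose candidate set consists of items with score \emph{exactly} $\hist_{(i)} - r$; each earlier coordinate $j' < i$ satisfies the strict inequality $\hist[\topkSoln[j']] > \hist_{(j')} - r \ge \hist_{(i)} - r$, so none of the $i-1$ prior picks can lie in this candidate set, and the full factor $\cnt{r}{i} - \cnt{r-1}{i}$ survives with no correction.

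I expect the main obstacle to be precisely the bookkeeping in this last paragraph: verifying that the prefix conditions really do place \emph{all} earlier picks inside the current candidate set for every $j \ne i$ (so the correction is the prefix-independent quantity $j-1$), while placing \emph{none} of them inside the candidate set at $j = i$. Both facts hinge entirely on the sortedness of $\hist_{(\cdot)}$ together with the strict-versus-non-strict pattern built into the definition of $\groupSeq{r}{i}$, and once they are established the product identity, and hence the claimed logarithmic formula, is immediate.
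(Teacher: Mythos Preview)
Your proposal is correct and follows essentially the same approach as the paper: both prove the multiplicative identity by a sequential counting argument, choosing $\topkSoln[1],\ldots,\topkSoln[k]$ in order and using the monotonicity $\hist_{(1)}\ge\cdots\ge\hist_{(k)}$ together with the strict/non-strict pattern in the definition of $\groupSeq{r}{i}$ to show that, for $j\ne i$, exactly the $j-1$ previously chosen items lie in the current candidate set, while at $j=i$ none of them do.
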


\begin{lemma}
    \label{lemma: computing formula for the group size}
    For all $r \in \IntSet{0}{n}, j \in [k]$, $\cnt{r}{j}$ can be computed in $O(d + nk)$ time.
    Furthermore, given the $\cnt{r}{j}$'s, for all $r \in \IntSet{0}{n}$, $\ln |\groupSeq{r}{i}|$ can be computed in $O(nk)$ time.
\end{lemma}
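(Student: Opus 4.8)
The plan is to reduce every quantity to lookups in a single cumulative-count table over the $n+1$ possible score values, built in one pass over the histogram. First I would compute, for each value $v \in \IntSet{0}{n}$, the number $g[v] \doteq | \{ j' \in [d] : \hist[j'] \ge v \} |$ of items whose score is at least $v$. This is obtained by tallying the frequency of each score (one pass over the $d$ items into $n+1$ buckets, $O(d+n)$ time) and then taking a suffix sum over the buckets ($O(n)$ time); I extend the table by $g[v] = d$ for $v \le 0$ and $g[v] = 0$ for $v > n$. The order statistics $\hist_{(1)}, \ldots, \hist_{(k)}$ that anchor the thresholds are read off the same table: $\hist_{(j)}$ is the largest $v$ with $g[v] \ge j$, and since $g$ is monotone in $v$, a single descending scan that assigns each $\hist_{(j)}$ the first time $g[v]$ reaches $j$ produces all $k$ of them in $O(n+k)$ time. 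Crucially this avoids a full $O(d \log d)$ sort of the histogram, which could break the claimed bound.

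Given this table, each $\cnt{r}{j} = | \{ j' : \hist[j'] \ge \hist_{(j)} - r \} | = g[\,\hist_{(j)} - r\,]$ is a single table lookup, where the extension of $g$ handles the argument falling below $0$ (value $d$) or, in the column $r=-1$ used below, above $n$ (value $0$). There are $O(nk)$ pairs $(r,j)$, so all of them are produced in $O(nk)$ time, and the total for the first claim is $O(d + n + nk) = O(d + nk)$.

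For the second claim I would exploit that the formula of Lemma~\ref{lemma: formula for the group size} splits, for each fixed $r$, into a prefix sum over a block of terms involving only $\cnt{r-1}{j}$ and a suffix sum over a block involving only $\cnt{r}{j}$. Concretely, for each $r$ I precompute $\forwardSum_r[i] \doteq \sum_{j=1}^{i} \ln \bigparen{ \cnt{r-1}{j} - (j-1) }$ and $\backwardSum_r[i] \doteq \sum_{j=i}^{k} \ln \bigparen{ \cnt{r}{j} - (j-1) }$, each array in $O(k)$ time, hence $O(nk)$ over all $r$. Then
\[
    \ln |\groupSeq{r}{i}| = \forwardSum_r[i-1] + \ln \bigparen{ \cnt{r}{i} - \cnt{r-1}{i} } + \backwardSum_r[i+1],
\]
with the conventions $\forwardSum_r[0] = 0$ and $\backwardSum_r[k+1] = 0$, is an $O(1)$ assembly for every one of the $O(nk)$ pairs $(r,i)$, giving $O(nk)$ in total.

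The routine-but-delicate points, rather than a genuine obstacle, are the boundary conventions: the empty prefix/suffix for $i=1$ and $i=k$; the appearance of $\cnt{-1}{j}$ when $r=0$, which is why I compute the $\cnt{r}{j}$ over $r \in \IntSet{-1}{n}$ (an extra column, still within $O(nk)$); and the convention $\ln 0 = -\infty$ for empty groups, which merely assigns them sampling weight $0$, consistent with the \subsetsampling step. None of these affects the asymptotics, so the two bounds $O(d+nk)$ and $O(nk)$ follow.
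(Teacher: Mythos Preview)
Your proposal is correct and takes essentially the same approach as the paper: both build a frequency/cumulative-count structure over the score values in $O(d)$ time, fill in all $\cnt{r}{j}$'s in $O(nk)$ via constant-time lookups (the paper phrases this as the recursion $\cnt{r}{j} - \cnt{r-1}{j} = |f_{\hist}(\hist_{(j)} - r)|$, which is exactly the first difference of your $g$ table), and then compute the $\ln|\groupSeq{r}{i}|$'s by the identical prefix/suffix-sum decomposition $\forwardSum + \text{middle term} + \backwardSum$. Your explicit extraction of the order statistics $\hist_{(j)}$ from the cumulative table without a full sort is a detail the paper leaves implicit.
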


The algorithms for proving Lemma~\ref{lemma: computing formula for the group size} are detailed in Appendix~\ref{appendix: implementation}. 
At a high level, for a fixed \(r\), the \(\cnt{r}{j}, j \in [k]\) constitute a monotone sequence, enabling us to devise a recursion formula to compute them. 
Additionally, the prefix sums (the first term) and the suffix sums (the last term) in Equation~\eqref{eq: formula for the group size} can be pre-computed, simplifying the computation of \( \ln |\groupSeq{r}{i}|\) to adding only three terms.

Here, we present a proof for Lemma~\ref{lemma: formula for the group size}, offering insights into the structure of $\groupSeq{r}{i}$.

\begin{proof}[Proof for Lemma~\ref{lemma: formula for the group size}]
    It suffices to show that 
    \begin{equation}
        \label{eq: formula for the un-log group size}
        \begin{array}{c}
            |\groupSeq{r}{i}| 
                = {
                    \prod_{j = 1}^{i - 1} \PAREN{ \cnt{r - 1}{j} - (j - 1) }  \cdot \big( \cnt{r}{i} -  \cnt{r - 1}{i} \big)
                    \cdot \prod_{j = i + 1}^{k} \PAREN{ \cnt{r}{j}  - (j - 1) }.
                } 
        \end{array}
    \end{equation}
    The proof is via standard counting argument: assume we want to select a sequence $\topkSoln \in \groupSeq{r}{i}$. 
    Since $\topkSoln[1] \in \{ j' \in [d] : \hist\bracket{j'} > \hist_{(1)} - r \}$, 
    the number of possible choices for $\topkSoln[1]$ is 
    $$
        | \{ j' \in [d] : \hist\bracket{j'} > \hist_{(1)} - r \} |
        = | \{ j' \in [d] : \hist\bracket{j'} \ge \hist_{(1)} - (r - 1) \} |
        = \cnt{r - 1}{1}.
    $$
    The first equality holds because the $\hist[j']$ values are integers.
    
    Next, since $\hist_{(1)} \ge \hist_{(2)}$, it also holds that $\topkSoln[1] \in \{ j' \in [d] : \hist\bracket{j'} > \hist_{(2)} - r \}$. 
    After determining $\topkSoln[1]$, the number of choices for $\topkSoln[2]$ is $|\{ j' \in [d] : \hist\bracket{j'} > \hist_{(2)} - r \}| - 1 = \cnt{r - 1}{2} - 1$.
    Continuing this argument, for each $j \in \IntSet{1}{i - 1}$, the number of choices for $\topkSoln[j]$, after determining $\topkSoln\IntSet{1}{j-1}$, is $|\{ j' \in [d] : \hist\bracket{j'} > \hist_{(j)} - r \}| - (j - 1) = { \cnt{r - 1}{j} - (j - 1) }$.
    
    Now we consider the number of choices for $\topkSoln[i]$.
    Since $\topkSoln[1], \ldots, \topkSoln[i - 1] \in \{ j' \in [d] : \hist\bracket{j'} > \hist_{(i)} - r \}$, they do not appear in $\{ j' \in [d] : \hist\bracket{j'} = \hist_{(i)} - r \}$. 
    The number of choices for $\topkSoln[i]$ is exactly 
    $
        |\{ j' \in [d] : \hist\bracket{j'} = \hist_{(i)} - r \}|
            = \cnt{r}{i} -  \cnt{r - 1}{i}.
    $
    
    The cases for $j \in \IntSet{i + 1}{k}$ are similar to the cases of $j \in \IntSet{1}{i - 1}$. 
    Since $\hist_{(1)} \ge \hist_{(2)} \ge \cdots \ge \hist_{(j - 1)}$, 
    it holds that $\topkSoln[1], \ldots, \topkSoln[j - 1] \in \{ j' \in [d] : \hist\bracket{j'} \ge \hist_{(j)} - r \}$. 
    As a result, for $j \in \IntSet{i + 1}{k}$, the number of choices for $\topkSoln[j]$, after determining $\topkSoln\IntSet{1}{j-1}$, is ${ \cnt{r}{j} - (j - 1) }$. 

    Multiplying the number of choices for each element in $\topkSoln \in \groupSeq{r}{i}$, we obtain Equation~\eqref{eq: formula for the un-log group size}.
\end{proof}

\emph{The second key advantage} of the partition being considered is that, there is an algorithm for sampling a uniform random sequence from $\groupSeq{r}{i}$ in $O(d)$ time, as implicitly suggested by the proof for Lemma~\ref{lemma: formula for the group size}. 
Further details of this implementation are provided in Appendix~\ref{appendix: implementation}.

\subsection{Pruning}
\label{subsec: pruning}
The previous discussion suggests an new algorithm with $O(d + nk)$ running time. 
Based on Lemma~\ref{lemma: computing formula for the group size}, computing the $\cnt{r}{j}$ values for $r \in \IntSet{0}{n}$ and $j \in [k]$ takes $O(d + nk)$ time. 
The total time to compute $\ln |\groupSeq{r}{i}|$ for $r \in \IntSet{0}{n}$ and $i \in [k]$ is $O(nk)$. 
Finally, sampling a sequence from the chosen $\groupSeq{r}{i}$ takes $O(d)$ time.
The bottleneck here lies in the $nk$ term, which arises from the need to compute the $\cnt{r}{j}$'s and $\ln |\groupSeq{r}{i}|$'s for all $r \in \IntSet{0}{n}$.

However, this is unnecessary. 
We need only to consider the cases for $r \in \IntSet{0}{\tau}$.
The key observation is that, the probability of sampling an $\topkSoln \in \partialPermutation{\DataDomain}{k}$ %
decreases exponentially with increasing $\loss(\topkSoln)$. 
\begin{claim}[Restatement of Fact~\ref{fact: utility of joint mechanism}]
    The probability of sampling an $\topkSoln$ with $\loss(\topkSoln) \ge \tau$ is at most $\beta$.
\end{claim}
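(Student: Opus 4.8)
The plan is to recognize this claim as exactly the utility guarantee of the exponential mechanism (Fact~\ref{fact: utility of exponential mechanism}) specialized to \joint, so that the proof reduces to verifying that the parameters match the stated threshold $\tau$. Indeed, the sampling distribution in question is that of $\jointMesm$, an exponential mechanism over $\cY = \partialPermutation{\DataDomain}{k}$ with loss $\loss{\joint}$ and sensitivity $\Delta_\joint = 1$; under it $\P{\jointMesm(\hist) = \topkSoln} \propto \exp \PAREN{ -\eps \cdot \loss(\topkSoln) / 2 }$.

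First I would collect the three ingredients needed to invoke Fact~\ref{fact: utility of exponential mechanism}: the sensitivity $\SensitivityExpoMesm = \Delta_\joint = 1$, the cardinality $\card{\cY} = \binom{d}{k} \cdot k!$, and the minimum loss $\min_{\topkSoln} \loss(\topkSoln) = 0$. The last point holds because the sequence $\topkSoln\,^*$ listing the $k$ highest-scoring items satisfies $\loss(\topkSoln\,^*) = 0$, as already observed before Theorem~\ref{theorem: main result}. Plugging these into Fact~\ref{fact: utility of exponential mechanism} yields a tail threshold of $\frac{2}{\eps} \ln \frac{\binom{d}{k} k!}{\beta}$; since the $\tau$ in Theorem~\ref{theorem: main result} is the ceiling of this expression, it is at least this threshold, and rounding up only decreases the tail probability. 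Hence $\P{\loss(\jointMesm(\hist)) \ge \tau} \le \beta$, which is the claim because $\loss(\topkSoln)$ abbreviates $\loss{\joint}{\hist}{\topkSoln}$.

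To keep the argument self-contained I would instead expand the probability directly as a ratio of Gibbs weights. The numerator $\sum_{\topkSoln \,:\, \loss(\topkSoln) \ge \tau} \exp \PAREN{ -\eps \loss(\topkSoln) / 2 }$ is at most $\card{\cY} \cdot \exp \PAREN{ -\eps \tau / 2 }$, while the normalizing denominator $\sum_{\topkSoln \in \cY} \exp \PAREN{ -\eps \loss(\topkSoln) / 2 }$ is bounded below by the single term contributed by $\topkSoln\,^*$, namely $\exp(0) = 1$. Dividing gives the bound $\card{\cY} \cdot \exp \PAREN{ -\eps \tau / 2 } \le \beta$, which holds by the choice of $\tau$. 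The only points requiring (minor) care are that the lower bound on the normalizer relies on the existence of a zero-loss sequence in $\partialPermutation{\DataDomain}{k}$ and that the ceiling in the definition of $\tau$ is harmless; there is no genuine obstacle here, since this is the textbook tail bound for the exponential mechanism and all of the specializing constants are immediate.
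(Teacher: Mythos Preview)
Your proposal is correct and, in its self-contained second paragraph, mirrors the paper's own proof exactly: bound the numerator by $\card{\partialPermutation{\DataDomain}{k}} \cdot e^{-\eps \tau / 2}$, lower-bound the normalizer by the single term $e^{-\eps \cdot \loss(\topkSoln\,^*)/2} = 1$, and conclude via the definition of $\tau$. Your first paragraph's appeal to Fact~\ref{fact: utility of exponential mechanism} is an equally valid shortcut that the paper could have taken but chose to unpack.
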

To provide further insight, we present a short proof here.
Let $\mathcal{S}_{\ge \tau} \doteq \{\topkSoln \in \partialPermutation{\DataDomain}{k} : \loss(\topkSoln) \ge \tau\}$, then 
\begin{equation}
    \label{ineq: tail probability}
    \begin{array}{c}
        \P{ \text{sampling an } \topkSoln \in \mathcal{S}_{\ge \tau}}
            \le \frac{ \sum_{\topkSoln \in \mathcal{S}_{\ge \tau}}
                e^{-\eps \cdot \loss(\topkSoln) / 2}
            }{
                e^{-\eps \cdot \loss(\topkSoln\,^*) / 2}
            }
            \le |\partialPermutation{\DataDomain}{k}| \cdot e^{-\eps \cdot \tau / 2}
            = \beta,
    \end{array}
\end{equation}
where $\topkSoln\,^*$ is the $k$ items with the largest scores and $\loss(\topkSoln\,^*) = 0$.

Given this, if we slightly adjust the loss function of sequences in $\groupSeq{\ge \tau}$, their probabilities of being outputted will not be significantly affected.
It motivates to consider the truncated loss function: 
$\loss{\ourAlgo}(\topkSoln)
    \doteq \min \paren{
        \loss(\topkSoln), 
        \tau
    },
$
and an algorithm $\ourAlgo$ which samples an $\topkSoln$ with probability proportional to $e^{-\eps \cdot \loss{\ourAlgo}(\topkSoln) / 2}$.
As inequality~\eqref{ineq: tail probability} still holds if we the $\loss(\cdot)$ with $\loss{\ourAlgo}(\cdot)$, we immediately obtain the following lemma.

\begin{lemma}
    \label{lemma: utility guarantee of our algorithm}
    The probability of $\ourAlgo$ sampling an $\topkSoln$ with $\loss(\topkSoln) \ge \tau$ is at most $\beta$. 
\end{lemma}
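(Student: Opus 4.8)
The plan is to reproduce the argument behind inequality~\eqref{ineq: tail probability} almost verbatim, replacing $\loss$ with the truncated loss $\loss{\ourAlgo}$; the truncation was engineered precisely so that this substitution goes through. Writing $\mathcal{S}_{\ge \tau} \doteq \{\topkSoln \in \partialPermutation{\DataDomain}{k} : \loss(\topkSoln) \ge \tau\}$, I would first express the probability of $\ourAlgo$ landing in this set as a ratio of unnormalized exponential weights:
\[
    \P{\ourAlgo(\hist) \in \mathcal{S}_{\ge \tau}}
        = \frac{\sum_{\topkSoln \in \mathcal{S}_{\ge \tau}} e^{-\eps \cdot \loss{\ourAlgo}(\topkSoln)/2}}
               {\sum_{\topkSoln \in \partialPermutation{\DataDomain}{k}} e^{-\eps \cdot \loss{\ourAlgo}(\topkSoln)/2}}.
\]

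Two observations make this collapse to the desired bound. First, for every $\topkSoln \in \mathcal{S}_{\ge \tau}$ the truncated loss saturates, $\loss{\ourAlgo}(\topkSoln) = \min(\loss(\topkSoln), \tau) = \tau$, so each numerator term equals $e^{-\eps\tau/2}$ and the numerator is at most $|\partialPermutation{\DataDomain}{k}| \cdot e^{-\eps\tau/2}$. Second, the optimal sequence $\topkSoln\,^*$ of the $k$ highest-scoring items still has $\loss(\topkSoln\,^*) = 0$, hence $\loss{\ourAlgo}(\topkSoln\,^*) = 0$; retaining only this single term lower-bounds the denominator by $e^{0} = 1$. Combining the two estimates yields $\P{\ourAlgo(\hist) \in \mathcal{S}_{\ge \tau}} \le |\partialPermutation{\DataDomain}{k}| \cdot e^{-\eps\tau/2}$.

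The final step substitutes $|\partialPermutation{\DataDomain}{k}| = \binom{d}{k} \cdot k!$ and the chosen threshold, which satisfies $\tau \ge \frac{2}{\eps}\ln\frac{\binom{d}{k}\cdot k!}{\beta}$ by definition (the ceiling only helps); this makes the right-hand side at most $\beta$, exactly as in inequality~\eqref{ineq: tail probability}.

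There is essentially no hard step: the whole design of $\loss{\ourAlgo}$ is to preserve this tail bound. The only points requiring care are the directions of the two estimates---upper-bounding the numerator by saturating each loss at $\tau$, and lower-bounding the denominator by the single zero-loss term $\topkSoln\,^*$---together with the remark that truncation can only decrease losses and therefore can only increase the total denominator weight, so discarding all but the $\topkSoln\,^*$ term is safe.
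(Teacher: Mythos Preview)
Your proposal is correct and follows essentially the same route as the paper, which simply observes that inequality~\eqref{ineq: tail probability} remains valid after replacing $\loss$ by the truncated loss $\loss{\ourAlgo}$. Your write-up is in fact more explicit than the paper's one-line justification, spelling out why the numerator saturates at $e^{-\eps\tau/2}$ and why the single $\topkSoln\,^*$ term suffices in the denominator.
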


{\it Subset Merging.}
The most important benefit of truncated loss is that, it allows us to reduce to the number of subsets in the partition $\groupSeq{r}{i}, r \in \IntSet{0}{n}, i \in [k]$ from $O(nk)$ to $O(\tau k)$.
In particular, for each $i \in [k]$, as the sequences in the subsets $\groupSeq{r}{i}, r \in \IntSet{\tau}{n}$ has the same truncated loss, it suffices to merge them into one 
\vspace{-2mm}
\begin{equation}
    \label{eq: def of tail groups}
    \begin{array}{c}
        \groupSeq{\ge \tau}{i} \doteq \cup_{r \in \IntSet{\tau}{n}} \groupSeq{r}{i}
        = \set{ 
            \topkSoln = \paren{\topkSoln[1], \ldots, \topkSoln[k]} \in \partialPermutation{\DataDomain}{k} : 
                \begin{matrix}
                    \hist\BigBracket{\topkSoln[j]} > \hist_{(j)} - \tau, & \forall j < i \\
                    \hist\BigBracket{\topkSoln[j]} \le \hist_{(j)} - \tau, & j = i
                \end{matrix}
        }.
    \end{array}
\end{equation}
$\groupSeq{\ge \tau}{i}$ shares a similar formula on its size as Equation~\eqref{eq: formula for the group size} and can be uniformly sampled efficiently. 
\begin{lemma}
    \label{lemma: formula for the merged group size}
    For each $i \in [k]$, we have 
    \begin{equation}
        \label{eq: formula for the merged group size}
        \begin{array}{c}
            \ln |\groupSeq{\ge \tau}{i}| 
                = \sum_{j = 1}^{i - 1} \ln \PAREN{ \cnt{\tau - 1}{j} - (j - 1) } 
                + \ln \big( d -  \cnt{r - 1}{i} \big)
                + \sum_{j = i + 1}^{k} \ln \PAREN{ d  - (j - 1) }.
        \end{array}
    \end{equation} 
\end{lemma}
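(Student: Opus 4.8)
The plan is to mirror the sequential counting argument that established Lemma~\ref{lemma: formula for the group size}, and then take logarithms at the very end. As there, it suffices to prove the product form
\[
    |\groupSeq{\ge \tau}{i}| = \prod_{j = 1}^{i - 1}\PAREN{\cnt{\tau - 1}{j} - (j - 1)} \cdot \PAREN{d - \cnt{\tau - 1}{i}} \cdot \prod_{j = i + 1}^{k}\PAREN{d - (j - 1)},
\]
working directly from the explicit description of $\groupSeq{\ge \tau}{i}$ in Equation~\eqref{eq: def of tail groups} and selecting the coordinates $\topkSoln[1], \ldots, \topkSoln[k]$ one at a time.

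For the coordinates $j < i$ the constraint $\hist\BigBracket{\topkSoln[j]} > \hist_{(j)} - \tau$ is identical to the one appearing in the unmerged groups $\groupSeq{r}{i}$, so the analysis from Lemma~\ref{lemma: formula for the group size} carries over verbatim: integrality turns $\{j' : \hist\bracket{j'} > \hist_{(j)} - \tau\}$ into $\{j' : \hist\bracket{j'} \ge \hist_{(j)} - (\tau - 1)\}$, which has size $\cnt{\tau - 1}{j}$, and since $\hist_{(1)} \ge \cdots \ge \hist_{(j)}$ the earlier choices $\topkSoln[1], \ldots, \topkSoln[j - 1]$ already sit inside this set, leaving $\cnt{\tau - 1}{j} - (j - 1)$ admissible choices. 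This produces the first product.

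The genuinely new ingredient is coordinate $i$, where the defining condition is the inequality $\hist\BigBracket{\topkSoln[i]} \le \hist_{(i)} - \tau$ rather than an equality. The candidate set $\{j' : \hist\bracket{j'} \le \hist_{(i)} - \tau\}$ has size $d - |\{j' : \hist\bracket{j'} > \hist_{(i)} - \tau\}| = d - \cnt{\tau - 1}{i}$. The one thing I must check is that no distinctness correction is needed at this step, i.e.\ that none of $\topkSoln[1], \ldots, \topkSoln[i - 1]$ lands in this candidate set; this follows from monotonicity, since for $j < i$ we have $\hist\BigBracket{\topkSoln[j]} > \hist_{(j)} - \tau \ge \hist_{(i)} - \tau$, placing each earlier item strictly above the threshold and hence outside the candidate set. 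Thus coordinate $i$ contributes exactly $d - \cnt{\tau - 1}{i}$ choices (so the term written as $d - \cnt{r - 1}{i}$ in the statement should read $d - \cnt{\tau - 1}{i}$, as $r$ is not bound here). For the coordinates $j > i$ the merged group imposes no score constraint at all, so the sole requirement is distinctness from the $j - 1$ previously selected items, yielding $d - (j - 1)$ choices apiece and the trailing product. Multiplying the three factors and taking logarithms gives Equation~\eqref{eq: formula for the merged group size}.

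I expect the only delicate point to be the bookkeeping at coordinate $i$: confirming that the clean count $d - \cnt{\tau - 1}{i}$ needs no subtraction for already-used items, and that the absence of any upper constraint for $j > i$ is precisely what distinguishes $\groupSeq{\ge \tau}{i}$ from a single $\groupSeq{r}{i}$. Both facts rest on the monotonicity of the order statistics $\hist_{(\cdot)}$, exactly as in the parent lemma, so no new technique is required.
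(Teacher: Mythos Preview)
Your proposal is correct and follows essentially the same sequential counting argument as the paper's proof: establish the product form, handle $j<i$ exactly as in Lemma~\ref{lemma: formula for the group size}, use monotonicity of the order statistics to show the earlier choices avoid the candidate set at coordinate $i$ (yielding $d-\cnt{\tau-1}{i}$), and observe that the unconstrained tail contributes $d-(j-1)$ per coordinate. You also correctly flag that the $\cnt{r-1}{i}$ in the statement is a typo for $\cnt{\tau-1}{i}$.
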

\begin{lemma}
    \label{lemma: time for the merged group size}
    Given the $\cnt{\tau - 1}{j}$'s, each $\ln |\groupSeq{\ge \tau}{i}|$ can be computed in $O(1)$ amortized time.
\end{lemma}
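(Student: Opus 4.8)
The plan is to express each $\ln |\groupSeq{\ge \tau}{i}|$ as the sum of three quantities, two of which I precompute once in $O(k)$ total time so that every individual evaluation costs only $O(1)$. Reading off Equation~\eqref{eq: formula for the merged group size}, the right-hand side decomposes into a \emph{prefix} sum $\sum_{j=1}^{i-1} \ln\PAREN{\cnt{\tau-1}{j} - (j-1)}$, a single \emph{middle} term $\ln\PAREN{d - \cnt{\tau-1}{i}}$, and a \emph{suffix} sum $\sum_{j=i+1}^{k} \ln\PAREN{d - (j-1)}$. Given the $\cnt{\tau-1}{j}$'s, the middle term is already an $O(1)$ computation, so all that remains is to make the prefix and suffix sums available in constant time.

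First I would tabulate the forward sums $\forwardSum[m] \doteq \sum_{j=1}^{m} \ln\PAREN{\cnt{\tau-1}{j} - (j-1)}$ for $m \in \IntSet{0}{k}$, using the recursion $\forwardSum[m] = \forwardSum[m-1] + \ln\PAREN{\cnt{\tau-1}{m} - (m-1)}$ with $\forwardSum[0] = 0$; a single left-to-right scan fills the whole table in $O(k)$ time. Next I would tabulate the backward sums $\backwardSum[m] \doteq \sum_{j=m}^{k} \ln\PAREN{d - (j-1)}$ for $m \in \IntSet{1}{k+1}$, via $\backwardSum[m] = \backwardSum[m+1] + \ln\PAREN{d - (m-1)}$ with $\backwardSum[k+1] = 0$; note this table depends only on $d$ and $k$, not on the histogram, and is likewise filled in $O(k)$ time by a single right-to-left scan.

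With both tables in hand, each target value is recovered by the constant-time formula $\ln |\groupSeq{\ge \tau}{i}| = \forwardSum[i-1] + \ln\PAREN{d - \cnt{\tau-1}{i}} + \backwardSum[i+1]$, where the sentinels $\forwardSum[0] = 0$ and $\backwardSum[k+1] = 0$ correctly absorb the boundary indices $i = 1$ and $i = k$, for which the prefix or suffix is empty. Amortizing the two $O(k)$ preprocessing scans over the $k$ indices yields the claimed $O(1)$ amortized cost per $i$. I do not expect a genuine obstacle here: this is the standard prefix/suffix-sum technique, and the only points needing care are the boundary bookkeeping and confirming that the middle factor is $\cnt{\tau-1}{i}$ --- matching the $r = \tau$ threshold in the definition~\eqref{eq: def of tail groups} of $\groupSeq{\ge \tau}{i}$ --- rather than a term depending on a free index $r$.
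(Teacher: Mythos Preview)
Your proposal is correct and mirrors the paper's approach exactly: the paper introduces unified counters $\cntUnified{r}{j}$ (with $\cntUnified{\tau}{j}=d$), precomputes the same prefix sums $\forwardSum$ and suffix sums $\backwardSum$ in a single $O(k)$ pass each, and then evaluates $\ln|\groupSeq{\ge\tau}{i}|$ as the three-term sum $\forwardSum[i-1]+\ln(d-\cnt{\tau-1}{i})+\backwardSum[i+1]$ with the same boundary sentinels. You also correctly resolve the middle factor as $\cnt{\tau-1}{i}$ (the paper's Equation~\eqref{eq: formula for the merged group size} has a typographical $r-1$ where $\tau-1$ is intended).
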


The proof for Lemma~\ref{lemma: formula for the merged group size} is provided in Appendix~\ref{appendix: missing proofs}, while an algorithmic proof for Lemma~\ref{lemma: time for the merged group size} can be found in Appendix~\ref{appendix: implementation}.

\vspace{-2mm}
\section{Experiment}
\label{sec: experiments}
\vspace{-2mm}

In this section, we compare our algorithm, referred to as \ourAlgoName, with existing state-of-the-art methods on real-world datasets. 
Our Python implementation is available publicly.\footnote{\url{https://github.com/wuhao-wu-jiang/Differentially-Private-Top-k-Selection}}

\vspace{-2mm}
\paragraph{Datasets.}
We utilize six publicly available datasets: 
Games (Steam video games with purchase counts)~\citep{T16}, 
Books (Goodreads books with review counts)~\citep{S19}, 
News (Mashable articles with share counts)~\citep{misc_online_news_popularity_332},
Tweets (Tweets with like counts)~\citep{DVN/JBXKFD_2017},
Movies (Movies with rating counts)~\citep{HarperK16} and
Foods (Amazon grocery and gourmet foods with review counts)~\citep{McAuleyTSH15}.
Table~\ref{tab:datasets} summarizes their sizes.

\vspace{-2mm}
\newcolumntype{C}{>{\centering\arraybackslash}X}
\begin{table}[!ht]
    \centering
    \begin{tabularx}{0.93\textwidth}{|C|C|C|C|C|C|C|}
    \hline
    Dataset      & Games & Books & News  & Tweets & Movies & Food   \\ \hline
    \#items      & 5,155  & 11,126 & 39,644 & 52,542  & 59,047  & 166,049 \\ \hline
    \end{tabularx}
    \vspace{1mm}
    \caption{Dataset Size Summary}
    \label{tab:datasets}
    \vspace{-7mm}
\end{table}

\paragraph{Baselines.}
Apart from the \joint mechanism~\citep{GillenwaterJMD22}, we consider the following two candidates: the peeling variant of permute-and-flip mechanism~\citep{McKennaS20}, denoted \ppeel; and the peeling exponential mechanism~\citep{DurfeeR19}, denoted \cpeel.
We don't compare with other mechanisms, e.g. the Gamma mechanism~\citep{SteinkeU16} and the Laplace mechanism~\citep{BLST10, QiaoSZ21}, which are empirically dominated by \ppeel and \cpeel respectively~\citep{GillenwaterJMD22}.

{\it \ppeel:} The permute-and-flip is an \(\eps\)-DP mechanism for top-1 selection.
It can be implemented equivalently by adding exponential noise (with privacy budget \(\eps / k\)) to each entry of \(\hist\) and reporting the item with the highest noisy value \citep{Ding20}.
To report \(k\) items, we use the \emph{peeling} strategy: select one item using the mechanism, remove it from the dataset, and repeat this process \(k\) times, resulting in a running time of \(O(dk)\).

{\it \cpeel:} The \((\eps, \delta)\)-DP peeling exponential mechanism samples \(k\) items without replacement, selecting one item at a time using a privacy budget of \(\tilde{O}(\eps / \sqrt{k})\) according to the exponential mechanism \citep{McSherryT07}. \citet{DurfeeR19} demonstrate that \cpeel has an equivalent \(O(d)\)-time implementation.

The code for all competing algorithms was obtained from publicly accessible GitHub repository by Google Research\footnote{
    \url{https://github.com/google-research/google-research/tree/master/dp_topk}
},
written in Python.

\vspace{-2mm}
\paragraph{Experiment Setups.}
The experiments are conducted on macOS system with M2 CPU and 24GB memory.
We compare the algorithms in terms of running time and error for different values of $k$, $\eps$ and $\beta$.
Note that the parameter $\beta$ (see Theorem~\ref{theorem: main result}) only affects our algorithm. 
The $(\eps, \delta)$-DP mechanism \cpeel is configured with a $\delta$ parameter of $10^{-6}$, consistent with prior research~\citep{GillenwaterJMD22}.

{\it Error Metrics.} 
We evaluate the quality of a solution~$\topkSoln$ using both $\ell_\infty$ and $\ell_1$ errors. The $\ell_\infty$ error is defined as $\max_{i \in [k]} | \hist_{(i)} - \topkSoln[i] |$, while the $\ell_1$ error is given by $\sum_{i \in [k]} | \hist_{(i)} - \topkSoln[i] |$.

{\it Parameter Ranges.}
The parameter ranges tested are as follows:
\begin{equation*}
    \begin{array}{ccc}
        k = 10, 20, \ldots, \underline{100}, \ldots, 200, & \quad
        \eps = 1 / 4, 1 / 2, \underline{1}, 2, 4, & \quad
        \beta = 2^{-6}, 2^{-8}, \underline{2^{-10}}, 2^{-12}, 2^{-14} .
    \end{array}
\end{equation*}
The values indicated by underlining represent the default settings. 
During experiments where one parameter is varied, the other two parameters are kept at their default values.

\vspace{-2mm}
\subsection{Results}

All experiments are repeated 200 times. 
Each figure displays the median running time or $\ell_\infty$ or $\ell_1$ error as the center line, with the shaded region spanning the 25th to the 75th percentiles.

\vspace{-2mm}
\paragraph{Varying $k$.}

Figure~\ref{fig: complete k results} presents the results for different values $k$.
\ourAlgoName consistently outperforms \joint in terms of execution speed, running 10 to 100 times faster across various datasets.
\ourAlgoName is slower than \cpeel; the later has theoretical time complexity $O(d)$ and therefore this is expected. 

We observe ``jumps'' in running time of \joint on the \emph{games} and \emph{food} datasets.
This phenomenon can also be found in the original work by \citet{GillenwaterJMD22} in the only running time plot for the \emph{food} dataset.
Upon investigation, we found that, as noted in their code comments, the current Python implementation of the \emph{\subsetSequencesampling} step of \joint has a worst-case time complexity of \(O(dk^2)\) instead of \(O(dk)\).
Although this step does not constitute a bottleneck in their code and accounts for a constant fraction of the total running time,  it still introduces instability in the running time. 
To delve deeper into this issue, we provide a comparison in the appendix where we plot the running time of \joint (excluding the \emph{\subsetSequencesampling} step) against the running time of \ourAlgoName (including the \emph{\subsetSequencesampling} step), resulting in smoother time plots. 
Even with this adjustment, \joint remains order of magnitude slower.

Interestingly, for small datasets, \ourAlgoName can be slower than \ppeel, which has an $O(dk)$ time complexity. 
This is because \ppeel has a simple algorithmic structure that can be implemented as $k$ rounds of vector operations: each round involves adding a noisy vector to the input histogram and then selecting an item (not previously selected) with the highest score. 
It is well-known that vectorized implementations\footnote{In Appendix~\ref{appendix: implementation}, we discuss in more detail the possibility of implementing \ourAlgoName with vertoziation.
} gain significant speed boosts by utilizing dedicated Python numerical libraries such as NumPy~\citep{harris2020array}. 
However, as the dataset size increases (e.g., the food dataset), \ourAlgoName outperforms \ppeel in terms of speed.

In terms of solution quality, even with the pruning strategy, \ourAlgoName does not experience quality degradation compared to \joint. 
It delivers similar performance to \joint across all datasets and performs particularly well on the \emph{books}, \emph{news}, \emph{tweets}, and \emph{movies} datasets, where there are large gaps between the top-$k$ scores. (Due to space limitations, the complete plots of these score gaps are included in the appendix, with partial plots provided in Figure~\ref{fig: partial eps results}). 
\ourAlgoName consistently outperforms the pure differentially private \ppeel for all values of $k$ and the approximate differentially private \cpeel for at least moderately large $k$. 
These results align with the findings of \citet{GillenwaterJMD22}, who compared \joint with \ppeel and \cpeel.

\begin{figure}[p]
    \centering
    \makebox[\textwidth]{
        \includegraphics[scale=0.32]{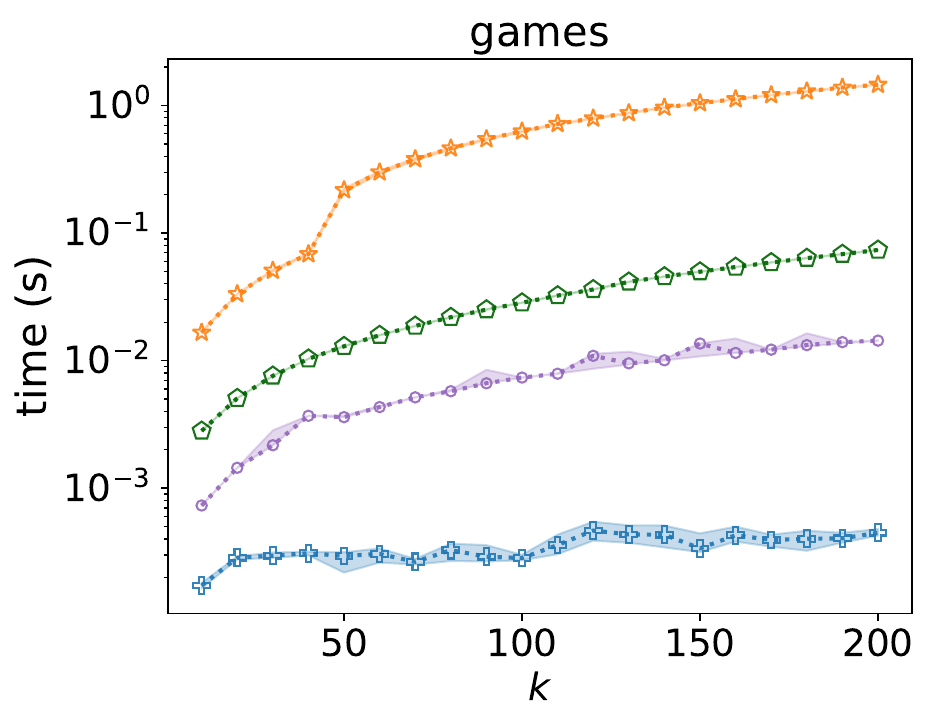}
        \hspace{-2mm}
        \includegraphics[scale=0.32]{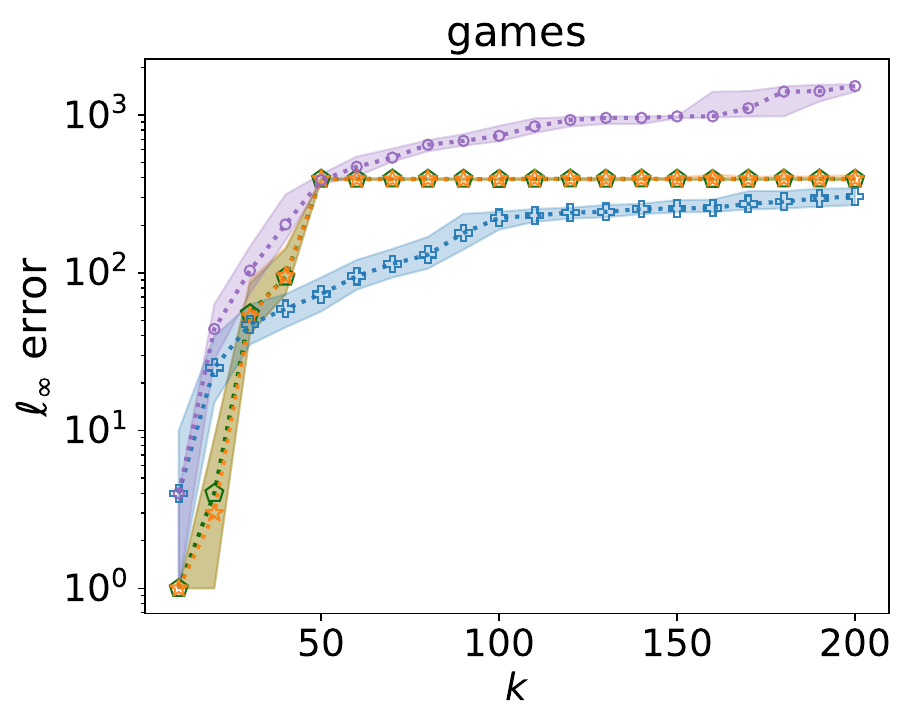}
        \hspace{-2mm}
        \includegraphics[scale=0.32]{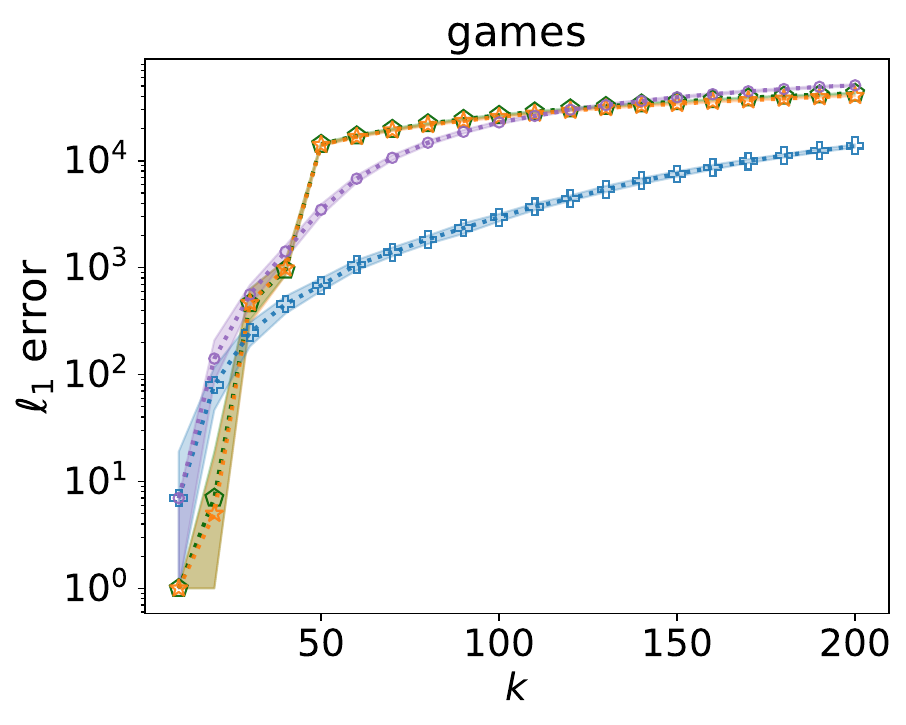} 
    } \\ \vspace{-10pt}
    \makebox[\textwidth]{
        \includegraphics[scale=0.32]{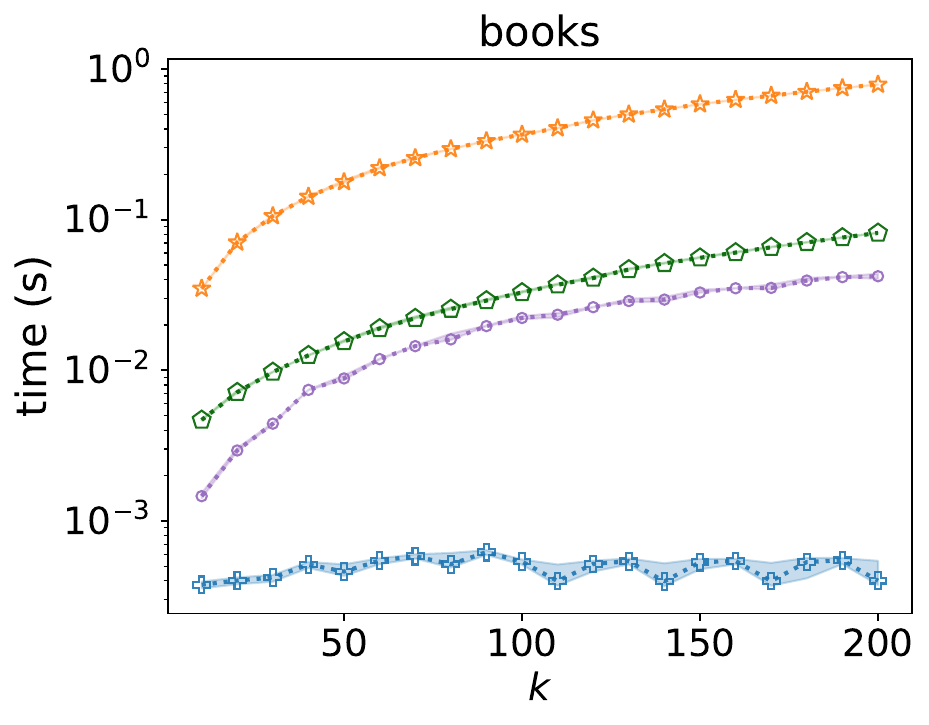}
        \hspace{-2mm}
        \includegraphics[scale=0.32]{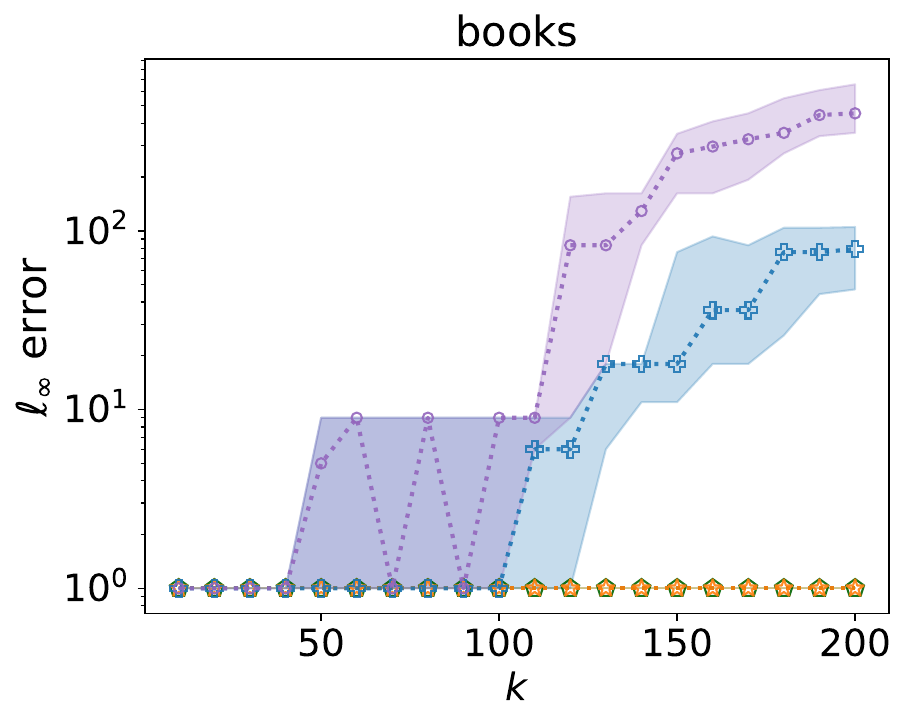}
        \hspace{-2mm}
        \includegraphics[scale=0.32]{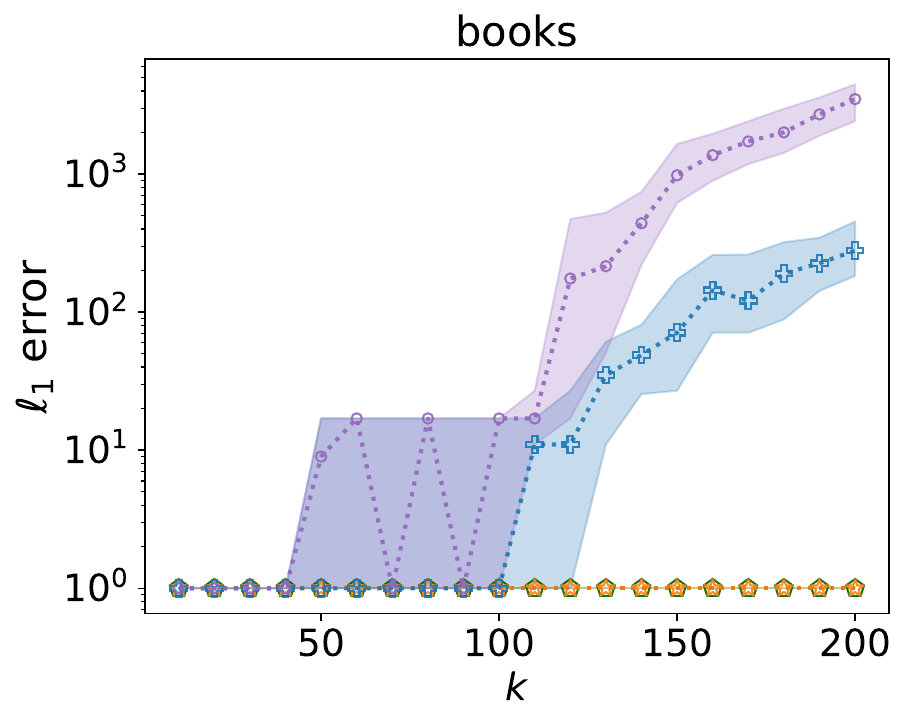} 
    }\\ \vspace{-10pt}
    \makebox[\textwidth]{
        \includegraphics[scale=0.32]{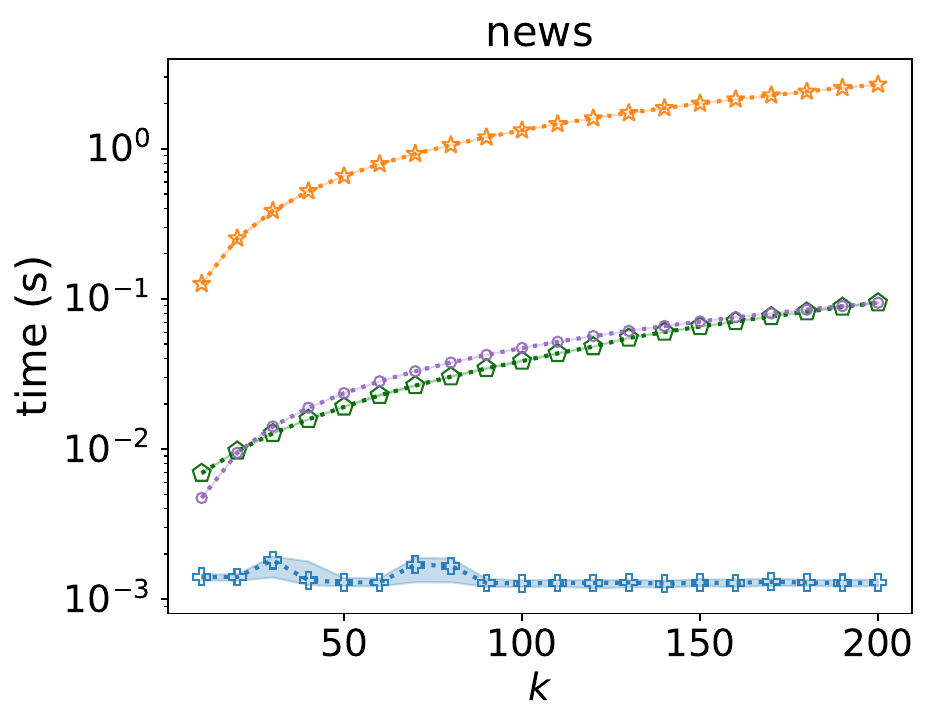}
        \hspace{-2mm}
        \includegraphics[scale=0.32]{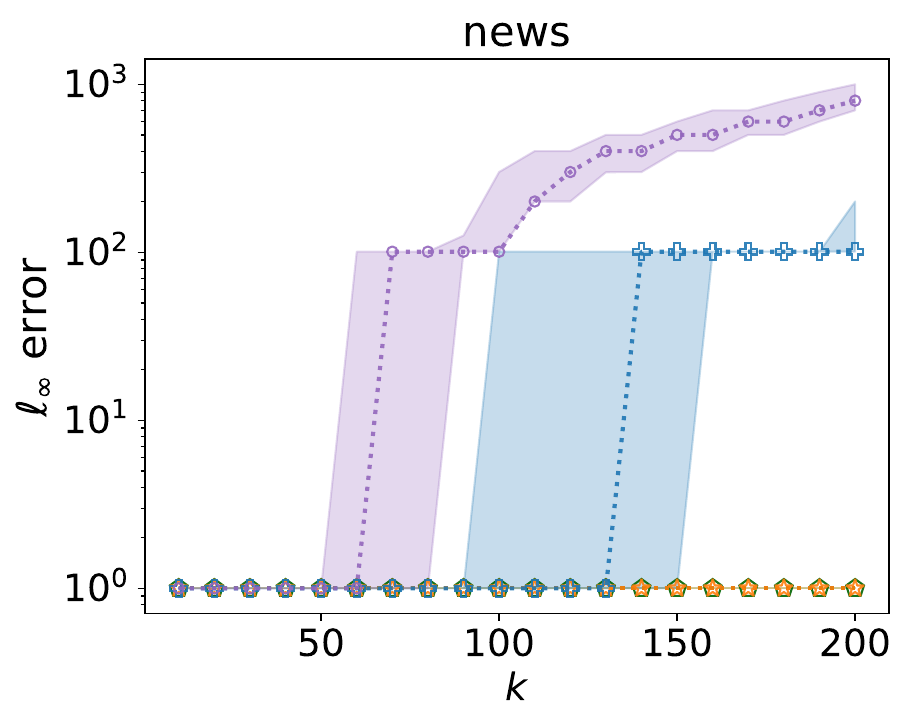}
        \hspace{-2mm}
        \includegraphics[scale=0.32]{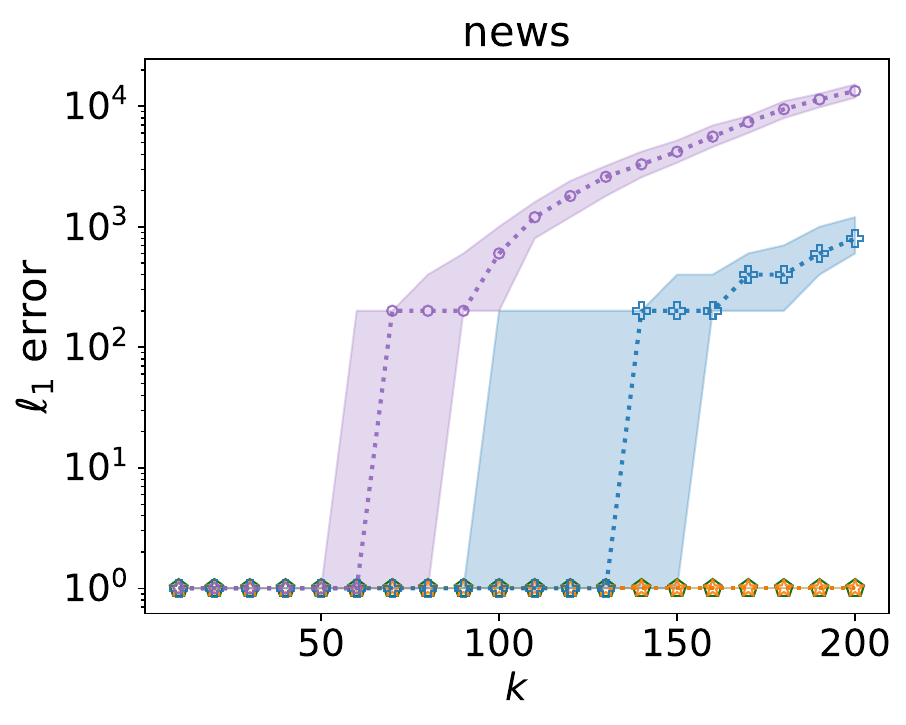} 
    } \\ \vspace{-10pt}
    \makebox[\textwidth]{
        \includegraphics[scale=0.32]{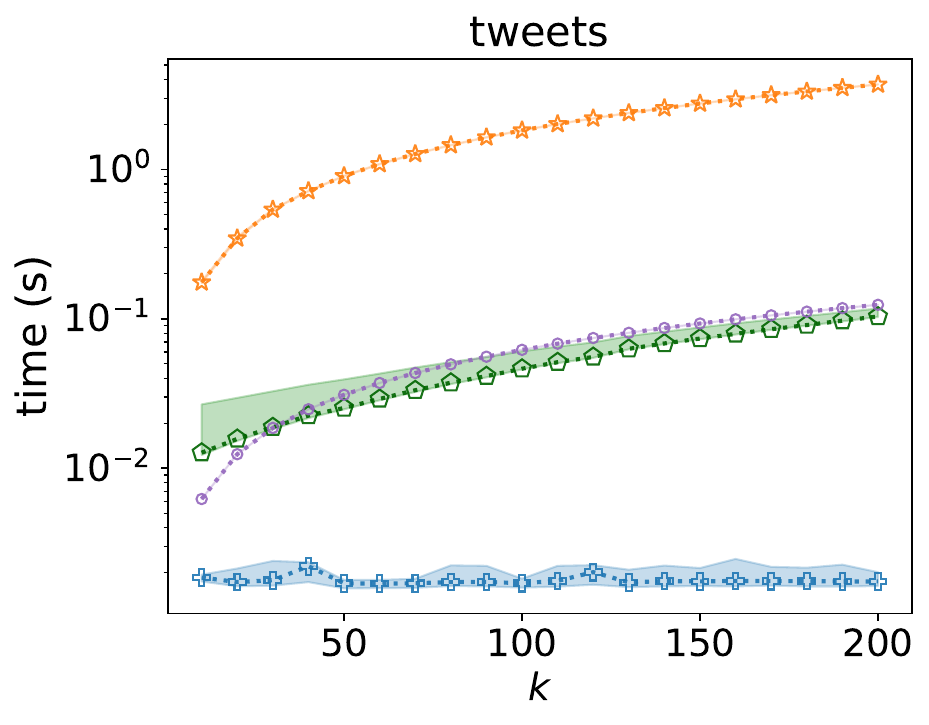}
        \hspace{-2mm}
        \includegraphics[scale=0.32]{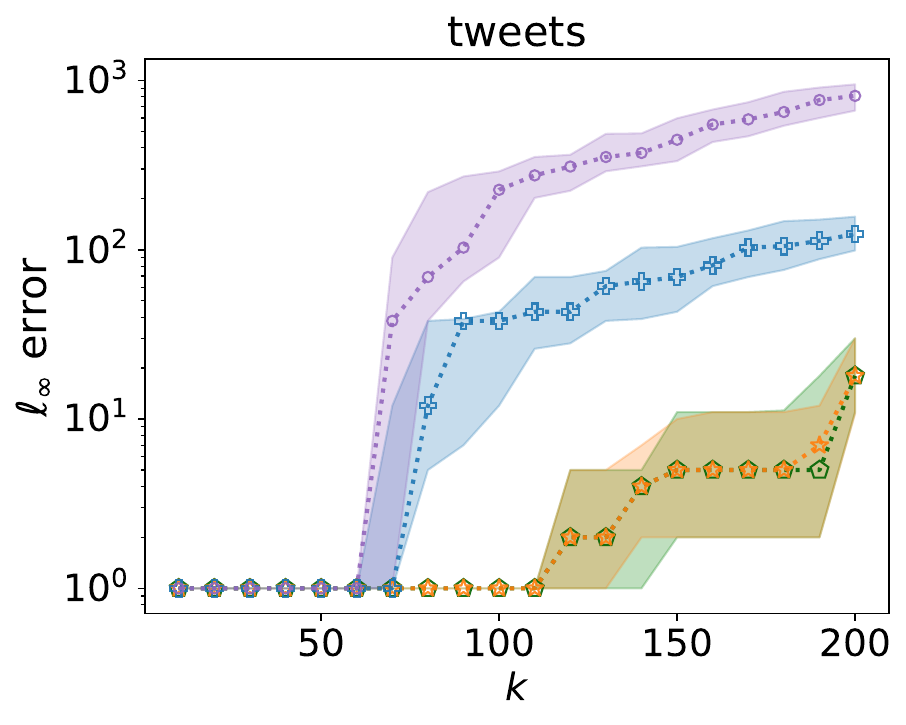}
        \hspace{-2mm}
        \includegraphics[scale=0.32]{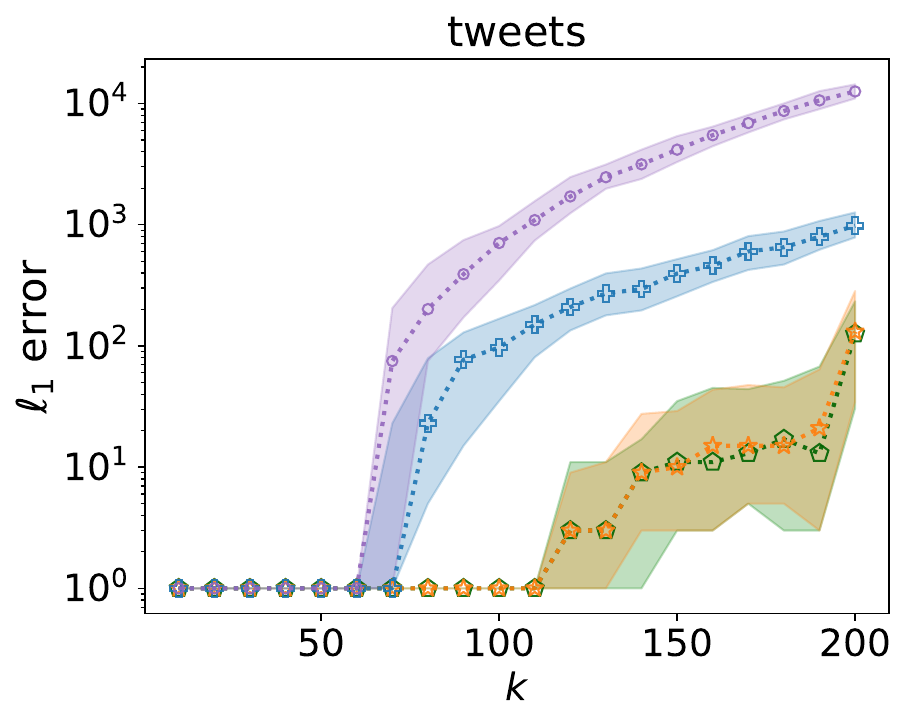} 
    } \\ \vspace{-10pt}
    \makebox[\textwidth]{
        \includegraphics[scale=0.32]{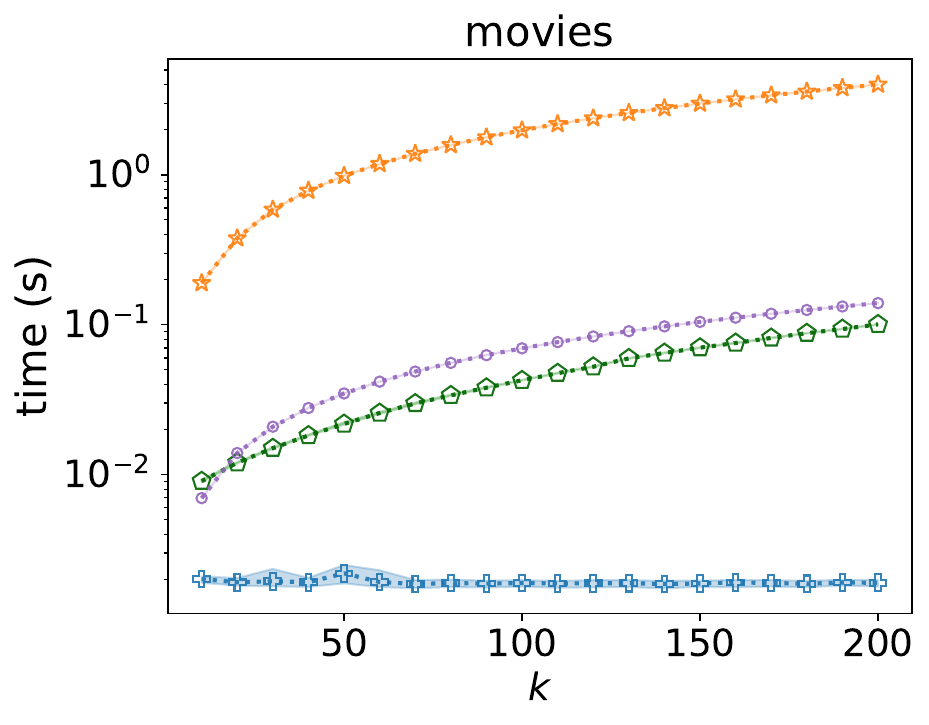}
        \hspace{-2mm}
        \includegraphics[scale=0.32]{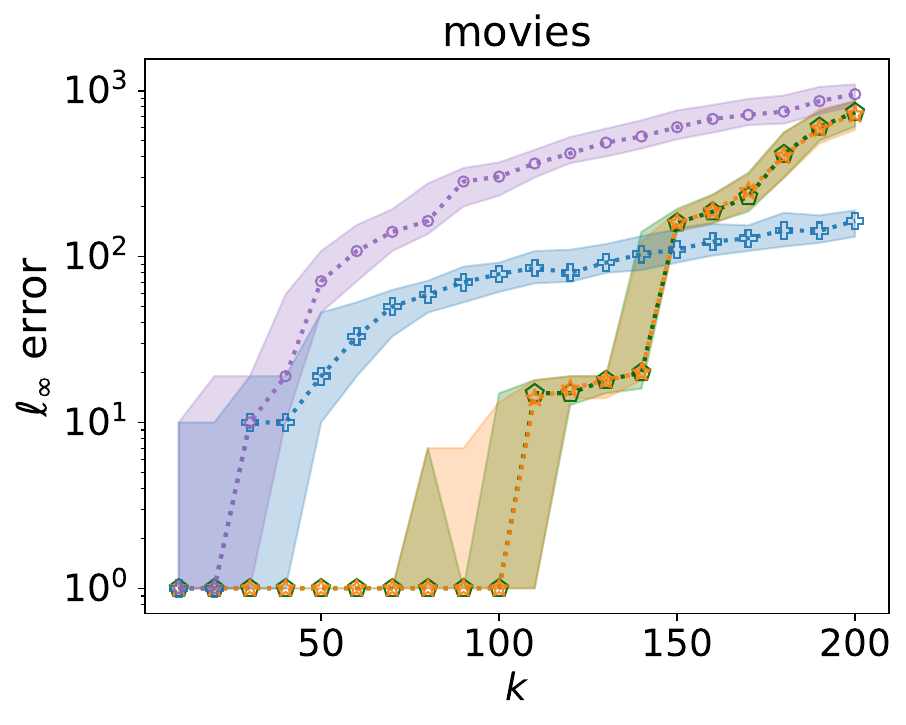}
        \hspace{-2mm}
        \includegraphics[scale=0.32]{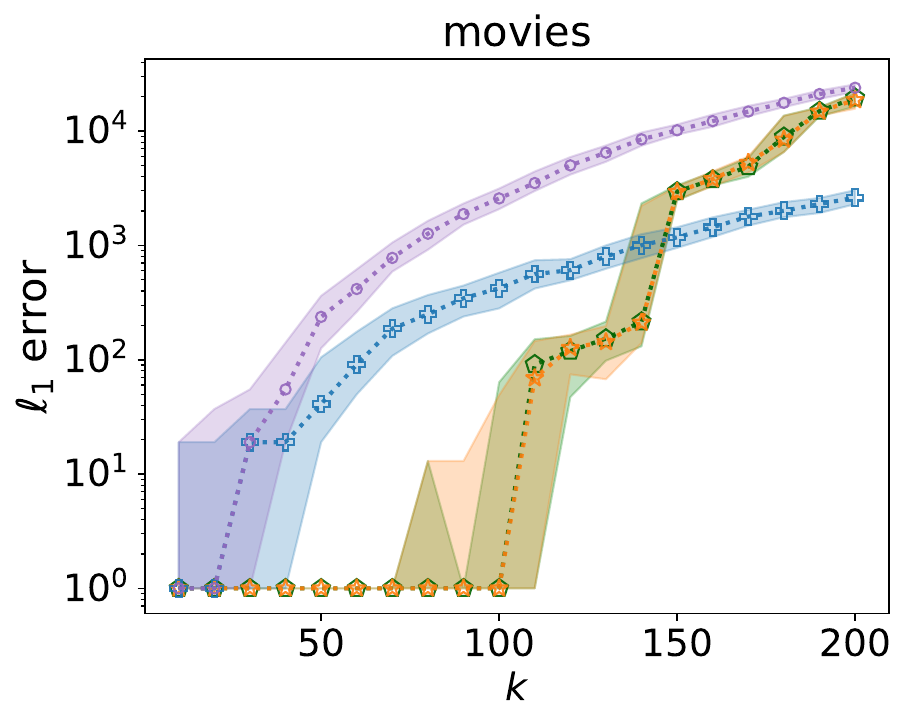} 
    } \\ \vspace{-10pt}
    \makebox[\textwidth]{
        \includegraphics[scale=0.32]{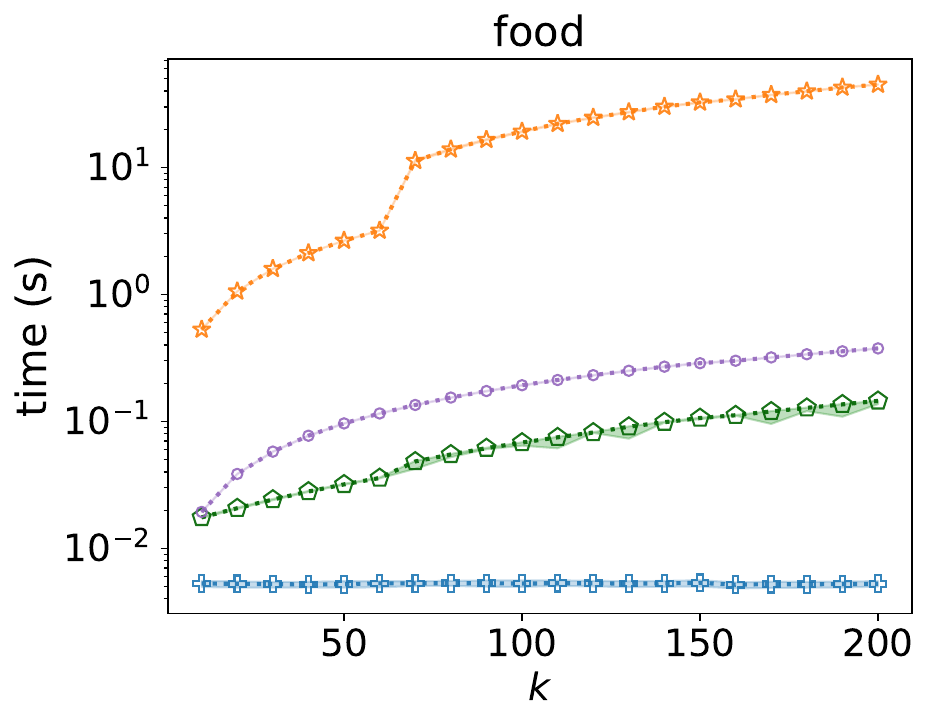}
        \hspace{-2mm}
        \includegraphics[scale=0.32]{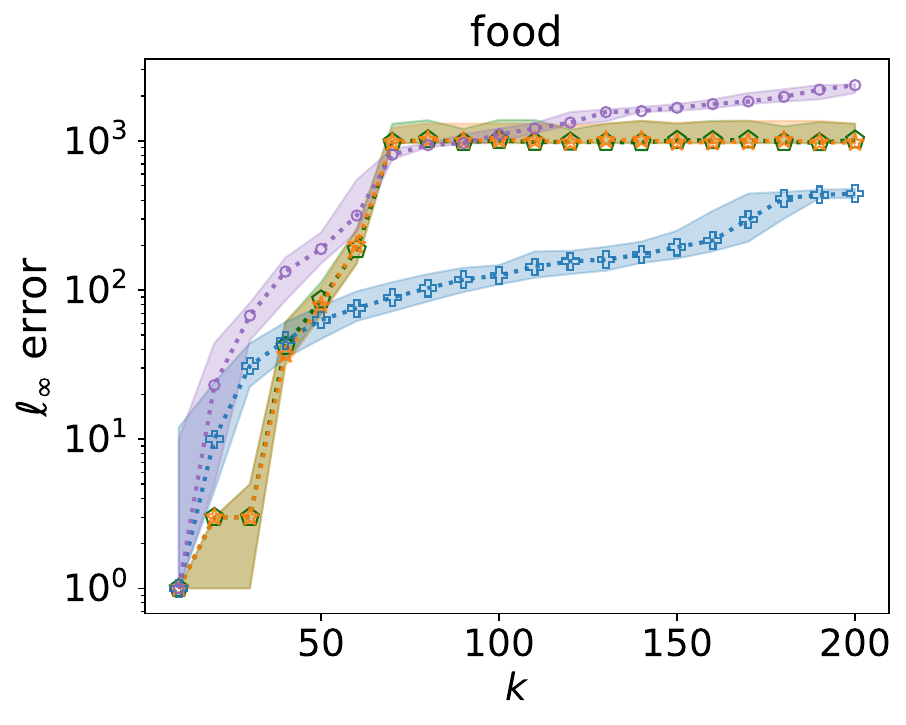}
        \hspace{-2mm}
        \includegraphics[scale=0.32]{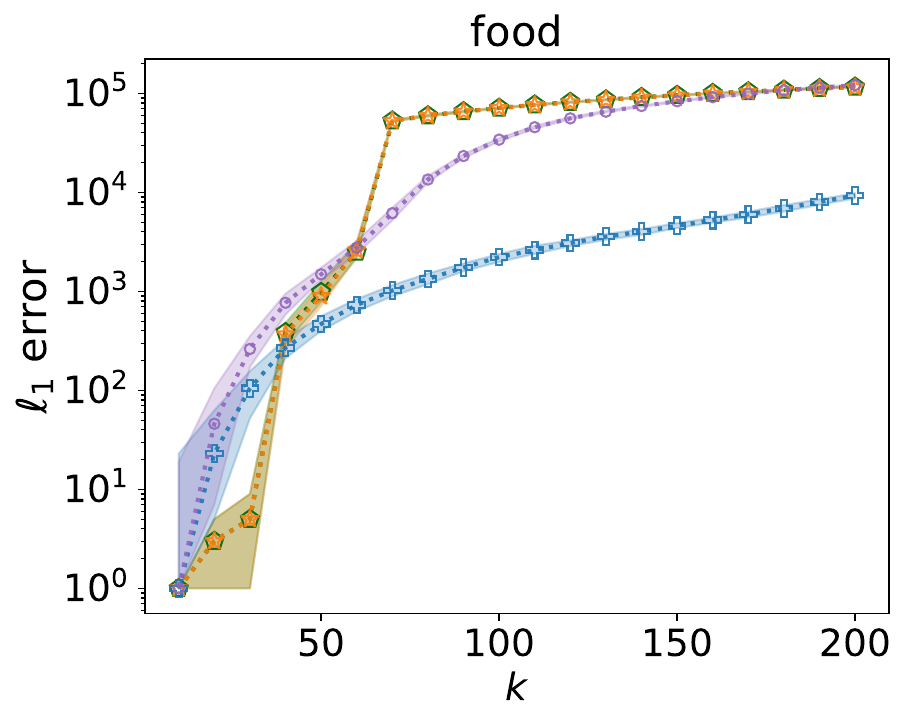} 
    } \\ \vspace{-4pt}
    \includegraphics[scale=0.5, trim={0 9mm 0 9mm}, clip]{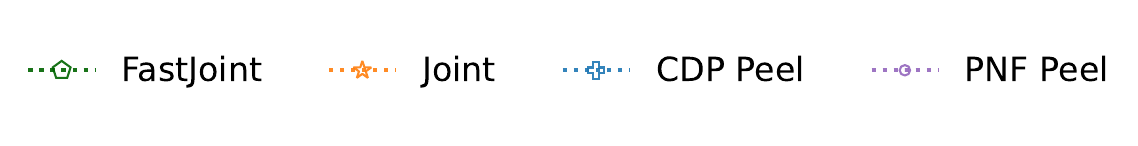}
    \vspace{-5pt}
    \caption{\centering 
        \textbf{Left}: Running time vs $k$. \,
        \textbf{Center}: $\ell_\infty$ error vs $k$. \,
        \textbf{Right}: $\ell_1$ error vs $k$. \hspace{3cm}
        The $\ell_1/\ell_\infty$ plots are padded by $1$ to avoid $\log 0$ on the $y$-axis. 
    }
    \label{fig: complete k results}
\end{figure}

\vspace{-2mm}
\paragraph{Varying $\eps$.}

Due to space constraints, Figure~\ref{fig: partial eps results} presents results for different values of \(\eps\) on two typical datasets: one where \ourAlgoName performs well and one where it does not. 
We replaced the \(\ell_1\) error plot (as it exhibits similar trends to the \(\ell_\infty\) plots) with a plot showing the gap between the top-300 scores\footnote{The \(k\) used for the varying \(\eps\) experiments is 100; here we plot the gap for the top-300 scores}.
The complete plots across all datasets are included in Appendix~\ref{sec: supplementary plots}.
The running time comparison resembles that of the varying-\(k\) plots in Figure~\ref{fig: complete k results}, with one notable difference: the running time of \ourAlgoName exhibits a clear decrease as \(\eps\) increases.
This observation aligns with our theoretical statement about the running time of \ourAlgoName, as detailed in Theorem~\ref{theorem: main result}. 

In terms of solution quality, \ourAlgoName and \joint perform particularly well on the \emph{news} dataset and are only slightly better than \ppeel and inferior to \cpeel on the \emph{games} dataset, where the gaps between the large scores in the former dataset are significantly larger than in the latter (note the values on the log-scale y-axis).
We provide an informal but informative explanation for this phenomenon: based on Lemma~\ref{lemma: utility guarantee of our algorithm}, \ourAlgoName is unlikely to sample sequences with loss greater than \(\tau\). 
Furthermore, when the distribution of top-\(k\) score gaps is highly skewed, there are very few sequences with errors between \((0, \tau]\), and the likelihood of sampling these sequences scales with $e^{-O(\eps)}$ as \(\eps\) varies. 
In contrast, the peeling-based mechanism needs to divide its privacy budget by \(k\) or \(\tilde{O}(\sqrt{k})\) for each round, causing the sampling probability of an error item to scale only with \(e^{- O(\eps / k) }\) or \(e^{-\tilde{O}(\eps / \sqrt{k})}\), which is higher than $e^{-O(\eps)}$.

\begin{figure}[!t]
    \centering
    \makebox[\textwidth]{
        \includegraphics[scale=0.32]{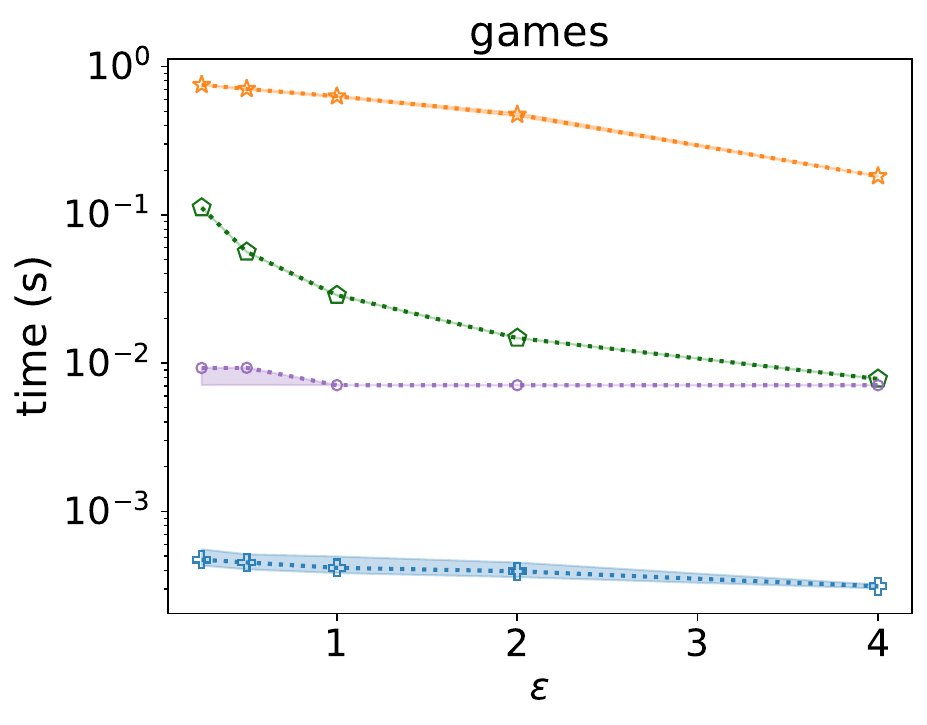}
        \hspace{-2mm}
        \includegraphics[scale=0.32]{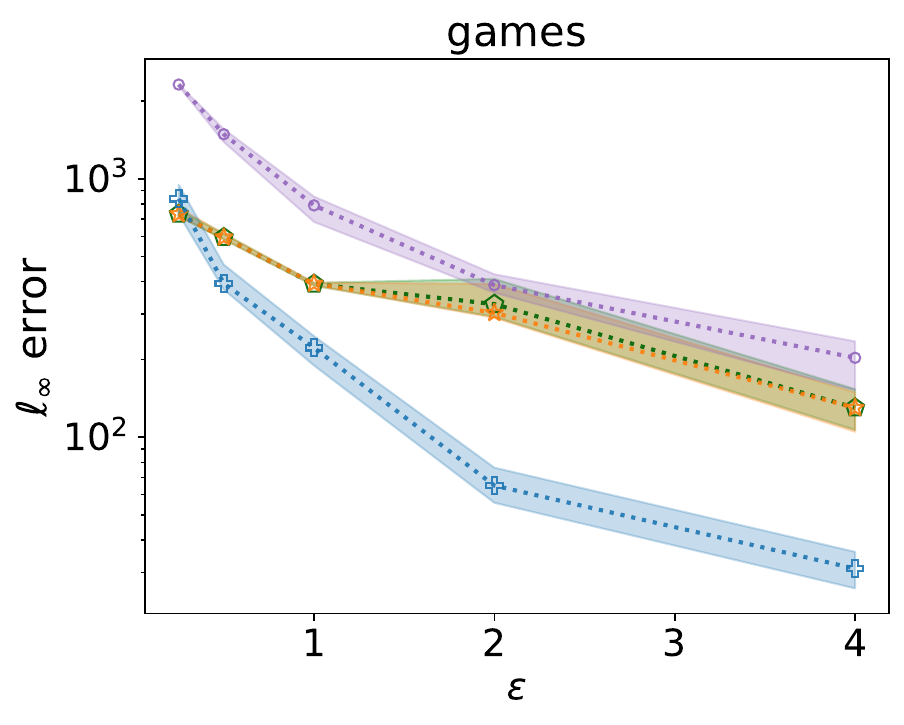}
        \hspace{-2mm}
        \includegraphics[scale=0.32]{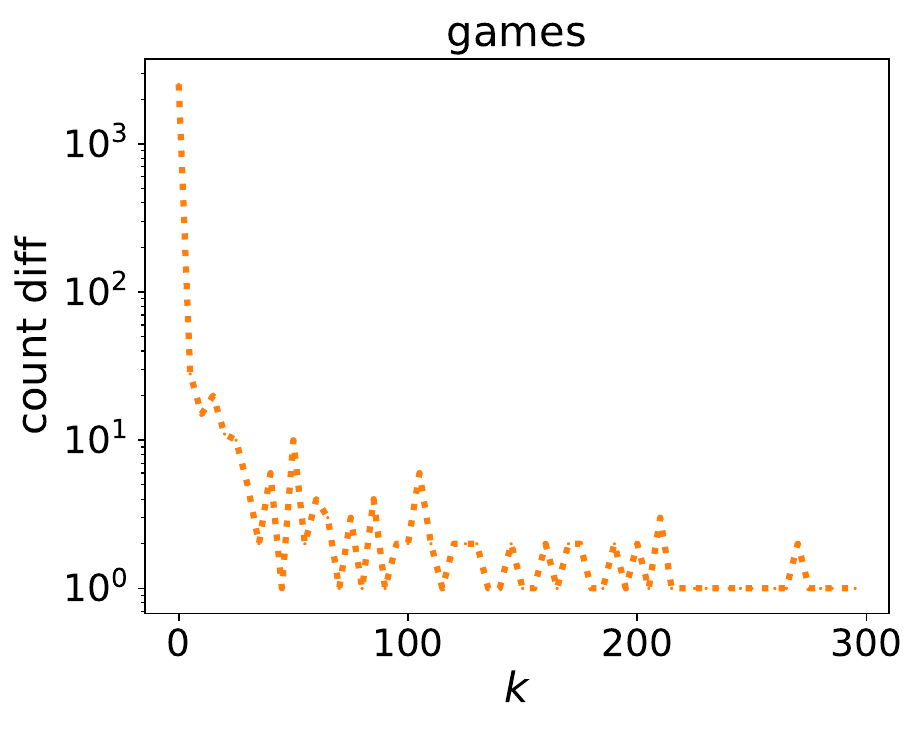} 
    } \\ \vspace{-10pt}
    \makebox[\textwidth]{
        \includegraphics[scale=0.32]{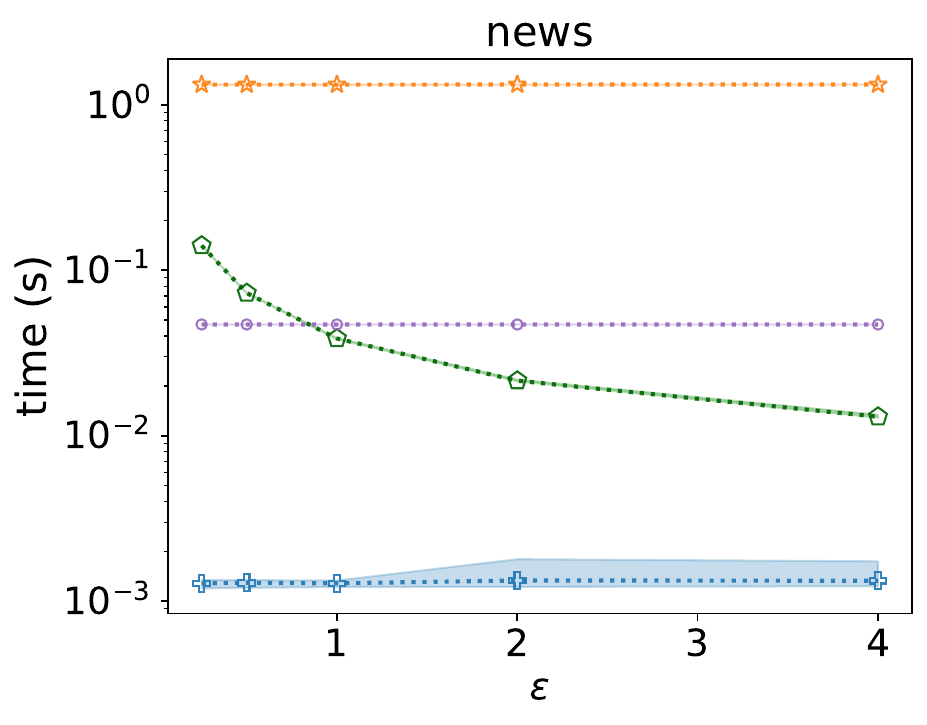}
        \hspace{-2mm}
        \includegraphics[scale=0.32]{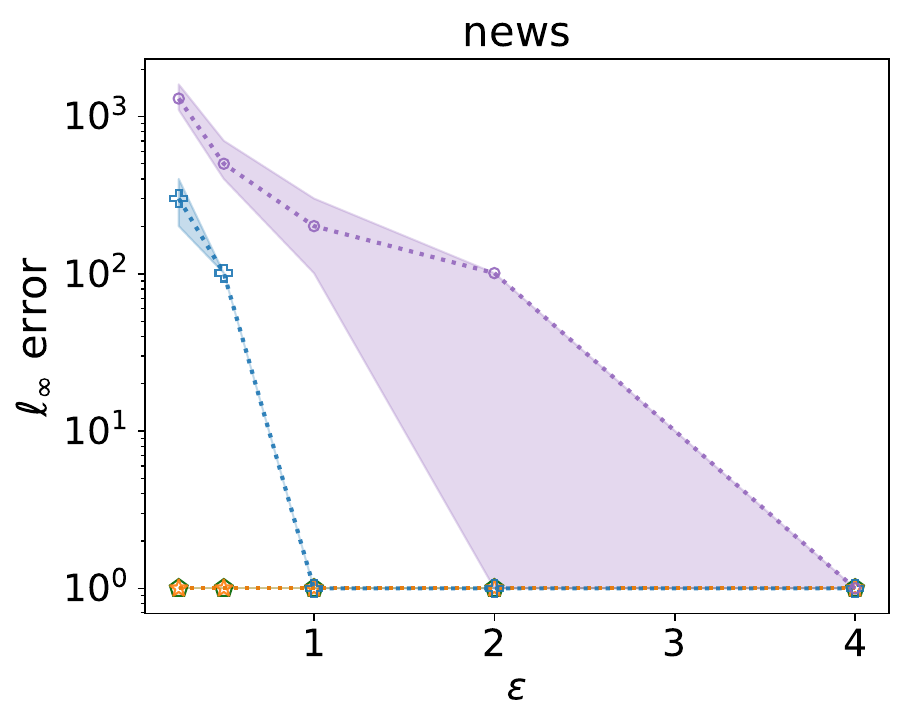}
        \hspace{-2mm}
        \includegraphics[scale=0.32]{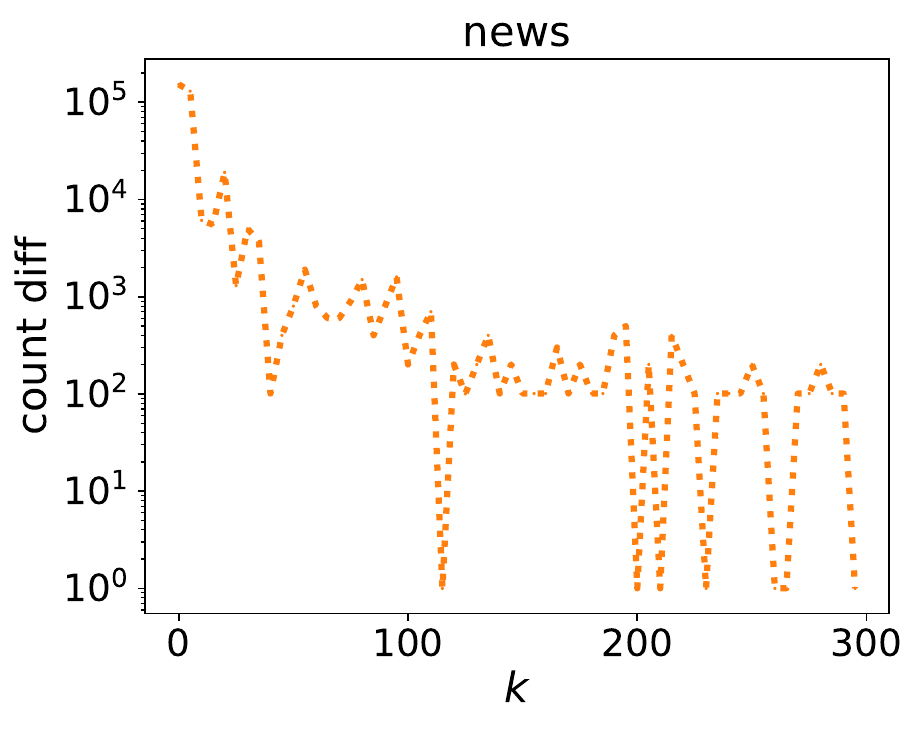} 
    } \\ \vspace{-3pt}
    \includegraphics[scale=0.5, trim={0 9mm 0 9mm}, clip]{figures/legend.pdf}
    \vspace{-5pt}
    \caption{
        \centering 
        \textbf{Left}: Running time Vs $\eps$. \,
        \textbf{Center}: $\ell_\infty$ error vs $\eps$. \,
        \textbf{Right}: Top-$300$ scores gaps. \hspace{3cm}
        The $\ell_1/\ell_\infty$ plots are padded by $1$ to avoid $\log 0$ on the $y$-axis. 
    }
    \label{fig: partial eps results}
    \vspace{-4mm}
\end{figure}

\vspace{-1mm}
\paragraph{Varying $\beta$.}
Due to space constraints, Figure~\ref{fig: partial delta results} presents results only for different values of $\beta$ on a medium-sized dataset. Similar plots for other datasets can be found in Appendix~\ref{sec: supplementary plots}.
It is anticipated that \joint, \ppeel and \cpeel do not exhibit significant performance variation concerning $\beta$.
However, it is somewhat surprising that \ourAlgoName does not neither.
This stability can be attributed to the threshold used for pruning, given by 
$
    \tau = 
    \lceil \frac{1}{\eps} \cdot \ln \paren{\binom{d}{k} \cdot k! / \beta} \rceil.
$
The numerator inside logarithm term, $\binom{d}{k} \cdot k!$, grows as $d^{\Theta(k)}$, 
significantly overshadowing $1 / \beta$. 
Consequently, $\tau$ changes only slightly as $\beta$ varies. 
This experiment demonstrates the robustness of \ourAlgoName's pruning strategy concerning the choice of $\beta$.

\begin{figure}[!ht]
    \vspace{-2mm}
    \centering
    \makebox[\textwidth]{
        \includegraphics[scale=0.32]{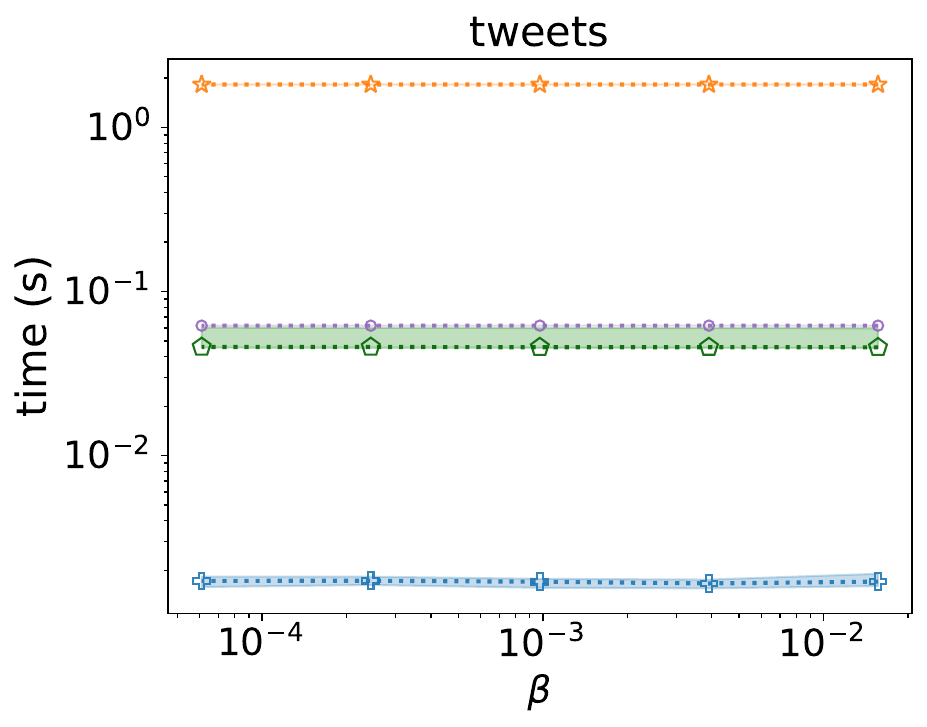}
        \hspace{-2mm}
        \includegraphics[scale=0.32]{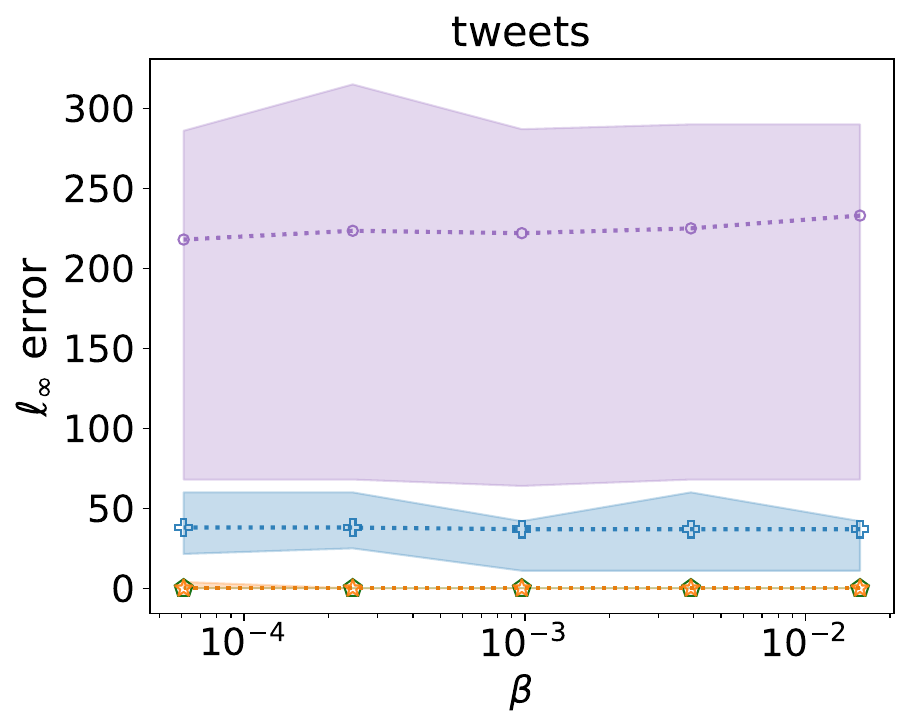}
        \hspace{-2mm}
        \includegraphics[scale=0.32]{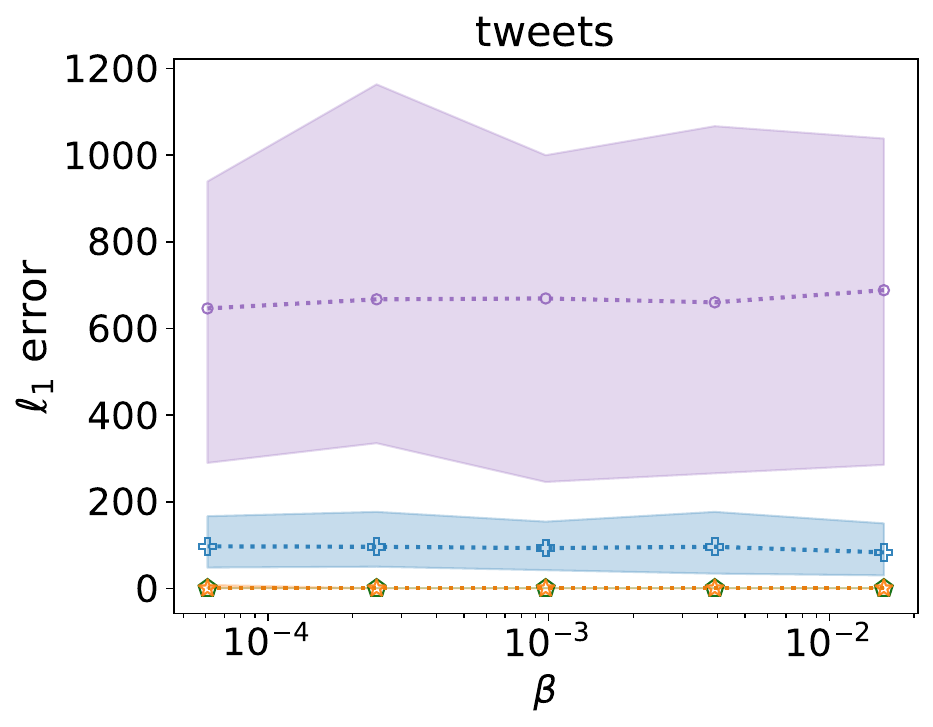} 
    } \\ \vspace{-10pt}
    \caption{
        \centering
        \textbf{Left}: Running time vs $\beta$. \,
        \textbf{Center}: $\ell_\infty$ error vs $\beta$. \,
        \textbf{Right}: $\ell_1$ error vs $\beta$. \hspace{3cm}
        The $\ell_1/\ell_\infty$ plots are padded by $1$ to avoid $\log 0$ on the $y$-axis. 
    }
    \label{fig: partial delta results}
    \vspace{-2mm}
\end{figure}

\section{Related Work}
\label{sec: related-work}

\paragraph{Comparison with \joint.}
We can also explain \joint~\citep{GillenwaterJMD22} within the novel framework proposed in Section~\ref{subsec: algorithm overview}, which consists of the \emph{\subsetsampling} and \emph{\subsetSequencesampling} steps. 
Their approach assumes that $\hist[1] \ge \dots \ge \hist[d]$, which can be achieved by sorting $\hist$. 
For each $i \in [k]$ and $j \in [d]$, define $\loss_{i, j} \doteq \hist[i] - \hist[j]$.
The partition they consider is equivalent to the one defined as follows:
\begin{equation*}
    \begin{array}{c}
        U_{i, j} \doteq \set{ 
            \topkSoln = \paren{\topkSoln[1], \ldots, \topkSoln[k]} \in \partialPermutation{\DataDomain}{k} : 
                \arraycolsep=0.8pt
                \begin{array}{rll}
                    \hist\BigBracket{\topkSoln[\ell]} &> \hist[\ell] - \loss_{i, j}, & \quad \forall \ell < i \\
                    \topkSoln[\ell] &= j,  & \quad \, \ell = i \\
                    \hist\BigBracket{\topkSoln[\ell]} &\ge \hist[\ell] - \loss_{i, j}, & \quad \forall \ell > i
                \end{array}
        }
    \end{array}, 
    \forall i \in [k], j \in [d].
\end{equation*}
Intuitively, $U_{i, j}$ consists of the length-$k$ sequences, $\topkSoln$, that satisfy: 1) the $i^{\text{th}}$ element in $\topkSoln$ is exactly element $j$; 2) the first $i - 1$ elements in $\topkSoln$ have losses less than $\loss_{i, j}$; and 3) the last $k - i$ elements in $\topkSoln
$ have losses at most $\loss_{i, j}$.

Therefore, the sequences $\topkSoln$ in $U_{i, j}$ share the same loss, $\loss{\joint}{\hist}{\topkSoln} = \loss_{i, j}$ (as defined in Equation~\eqref{eq: definition of error for joint}), with the first position reaching this loss being the $i^{\text{th}}$ position, where element $j$ appears. 
Furthermore, sequences in different subsets, $U_{i, j}$ and $U_{i', j'}$, can share the same loss, as it is possible that $\loss_{i, j} = \loss_{i', j'}$. 
There are $dk$ subsets in this partition, resulting in a running time of $\tilde{O}(dk)$ for their implementation~\citep{GillenwaterJMD22}.

Applying the pruning technique to this partition can not reduce the number of subsets to $o(dk)$. Let $\tau$ be as defined in Theorem~\ref{theorem: main result}. 
For each $i \in [k]$, we aim to find the first $j$ such that $\hist[j] \le \hist_{(i)} - \tau$. Denote this value by $\sigma(i) \doteq \min \{j \in [d] : \hist[j] \le \hist_{(i)} - \tau \}$. 
Following the spirit of our pruning technique, we would like to merge the trailing subsets $U_{i, \sigma(i)}, \dots, U_{i, d}$ into a single subset $U_{i, \sigma(i)} \doteq \bigcup_{j \ge \sigma(i)} U_{i, j}$ to reduce the number of subsets. 
However, it is easy to find counterexamples where $\Omega(d)$ items are equal to $\hist_{(i)}$ for all $i \in [k]$. 
For example, consider the case where $\hist[1] = \hist[2] = \cdots = \hist[d] = c$, for some constant $c$. 
In such scenarios, we still have $\sigma(i) \in \Omega(d)$, and therefore, in the worst case, the number of subsets remains $\sum_{i \in [k]} \sigma(i) \in \Omega(dk)$.

\paragraph{Truncated Loss.} 
The technique of applying the exponential mechanism with truncated scores was considered by \citet{BLST10}. 
Their top-$k$ selection algorithm employs a peeling-based approach: it samples $k$ items without replacement, selecting one item at a time using the exponential mechanism with truncated scores.
This method is employed because, in their setting, obtaining the scores of the input histogram $\hist$ is expensive, leading them to treat lower-scoring items uniformly by assigning them a small, identical score.
In contrast, when all scores of \(\hist\) are known, iteratively applying the exponential mechanism to select the top-$k$ items has an equivalent linear time implementation~\citep{DurfeeR19} (the \cpeel algorithm in Section~\ref{sec: experiments}). 
Therefore, truncating the scores of $\hist$ is unnecessary in this case.

\paragraph{Adaptive Private $k$ Selection.}
As the experiments show, the performance of \joint and \ourAlgoName depends on gap size—they perform well when there are large gaps between the top-$k$ items. 
An orthogonal line of research~\citep{WZ22} leverages large gaps to privately identify the index $i$ that approximately maximizes the gap between the $i^{\text{th}}$ and $(i + 1)^{\text{th}}$ largest elements. This is followed by testing whether the gap (using techniques like propose-test-release) between the $i^{\text{th}}$ and $(i + 1)^{\text{th}}$ largest elements is sufficiently large, allowing the top-$i$ items to be returned without additional noise. 
This approach benefits by adding no noise in the final step. 
However, there are two key differences: 1) it does not guarantee returning at least $k$ items, as $i$ can be less than $k$; 2) more crucially, the top-$i$ items must be returned as an \emph{unordered set}. The first issue can be addressed by iteratively applying the above mechanism. 
For the second issue, algorithms introduced in this paper, such as \cpeel and \ourAlgoName, can serve as subroutines. 
Notably, \ourAlgoName may provide better empirical performance when large gaps exist among the top $i$ items.

\paragraph{Utility Lower Bound.}
\citet*{BafnaU17} and~\citet*{SteinkeU17} demonstrate that, for approximate private algorithms, existing methods~\citep{DurfeeR19, QiaoSZ21}, including \cpeel~\citep{DurfeeR19} as compared in Section~\ref{sec: experiments}, achieve theoretically asymptotically optimal privacy-utility trade-offs.

\begin{ack}
    We thank the anonymous reviewers for their feedback which helped improve the paper. 
    Hao WU was a Postdoctoral Fellow at the University of Copenhagen, supported by Providentia, a Data Science Distinguished Investigator grant from Novo Nordisk Fonden, and affiliated with Basic Algorithms Research Copenhagen (BARC), supported by the VILLUM Foundation grant 16582.
    Hanwen Zhang is affiliated with Basic Algorithms Research Copenhagen (BARC), supported by the VILLUM Foundation grant 16582. 
    Hanwen Zhang is partially supported by Starting Grant 1054-00032B from the Independent Research Fund Denmark under the Sapere Aude research career programme. 
\end{ack}

\bibliographystyle{IEEEtranSN}
\bibliography{reference}

\newpage
\appendix
\section{Missing proofs}
\label{appendix: missing proofs}

\begin{proof}[Proof for Fact~\ref{fact: joint mechanism}]
    The proof has been implicitly suggested in our comparison with \joint in Section~\ref{sec: related-work}. 
    We reiterate it here.

    We explain \joint~\citep{GillenwaterJMD22} within the novel framework proposed in Section~\ref{subsec: algorithm overview}, which consists of the \emph{\subsetsampling} and \emph{\subsetSequencesampling} steps. 
    Their approach assumes that $\hist[1] \ge \dots \ge \hist[d]$, which can be achieved by sorting $\hist$. 
    For each $i \in [k]$ and $j \in [d]$, define $\loss_{i, j} \doteq \hist[i] - \hist[j]$.
    The partition they consider is equivalent to the one defined as follows:
    \begin{equation*}
        \begin{array}{c}
            U_{i, j} \doteq \set{ 
                \topkSoln = \paren{\topkSoln[1], \ldots, \topkSoln[k]} \in \partialPermutation{\DataDomain}{k} : 
                    \arraycolsep=0.8pt
                    \begin{array}{rll}
                        \hist\BigBracket{\topkSoln[\ell]} &> \hist[\ell] - \loss_{i, j}, & \quad \forall \ell < i \\
                        \topkSoln[\ell] &= j,  & \quad \, \ell = i \\
                        \hist\BigBracket{\topkSoln[\ell]} &\ge \hist[\ell] - \loss_{i, j}, & \quad \forall \ell > i
                    \end{array}
            }
        \end{array}, 
        \forall i \in [k], j \in [d].
    \end{equation*}
    Intuitively, $U_{i, j}$ consists of the length-$k$ sequences, $\topkSoln$, that satisfy: 1) the $i^{\text{th}}$ element in $\topkSoln$ is exactly element $j$; 2) the first $i - 1$ elements in $\topkSoln$ have losses less than $\loss_{i, j}$; and 3) the last $k - i$ elements in $\topkSoln
    $ have losses at most $\loss_{i, j}$.
    
    Therefore, the sequences $\topkSoln$ in $U_{i, j}$ share the same loss, $\loss{\joint}{\hist}{\topkSoln} = \loss_{i, j}$ (as defined in Equation~\eqref{eq: definition of error for joint}), with the first position reaching this loss being the $i^{\text{th}}$ position, where element $j$ appears. 
    Furthermore, sequences in different subsets, $U_{i, j}$ and $U_{i', j'}$, can share the same loss, as it is possible that $\loss_{i, j} = \loss_{i', j'}$. 

    Next, we briefly explain how to fulfill the \emph{\subsetsampling} and \emph{\subsetSequencesampling} steps for the chosen partition.

    The \emph{\subsetSequencesampling} step is straightforward: sampling a sequence uniformly at random from $U_{i, j}$ can be achieved similarly (though not identically) to the approach used in Algorithm~\ref{algo: sampling step II} for sampling a sequence from $\groupSeq{r}{i}$. This process can be completed in $O(d)$ time.

    For the \emph{\subsetsampling} step, similar to the algorithm in Section~\ref{subsec: algorithm overview} and using Fact~\ref{fact: sampling based on grumbel noisy max}, sampling a subset $U_{i, j}$ with probability proportional to 
    $
        |U_{i, j}| \cdot \exp \PAREN{ - \eps \cdot \loss_{i,j} / 2},  
    $
    can be achieved by computing the maximum of
    $$
        \set{X_{i, j} + \ln \bigparen{|U_{i,j}| \cdot \exp \PAREN{ - \eps \cdot \loss_{i,j} / 2}} : (i, j) \in [k] \times [d] },
    $$ 
    where \( X_{i, j} \sim \GumbelNoise{1} \). 
    The main task is to efficiently compute \( \ln |U_{i,j}| \). 
    For each \( i \in [k], j \in [d], \ell \in [k] \), we define
    $$
        t_{i, j, >}[\ell] \doteq \card{ \set{ j' \in [d] : \hist[j'] > \hist[\ell] - \loss_{i, j} }},
        \text{ and }
        t_{i, j, \ge}[\ell] \doteq \card{\set{ j' \in [d] : \hist[j'] \ge \hist[\ell] - \loss_{i, j} }}.
    $$

    It can be shown that
    $$
        \ln |U_{i, j}| 
            = \sum_{\ell \in [i - 1]} \ln \bigparen{ t_{i, j, >}[\ell] - (\ell - 1) } + \sum_{\ell \in \IntSet{i + 1}{k}} \ln \bigparen{ t_{i, j, \ge}[\ell] - (\ell - 1) }.
    $$
    Furthermore, if we define
    $$
        \tilde{t}_{i, j}[\ell] \doteq \card{
            \set{ j' \in [d] : \hist[j'] \ge \hist[\ell] - \loss_{i, j} - \PAREN{
                    \frac{i - \ell}{2k} - \frac{j - j'}{2 d k}
                }
            }
        },
    $$
    we observe that $\card{\frac{i - \ell}{2k} - \frac{j - j'}{2 d k}} < 1.$ 
    When $\ell < i$, $\frac{i - \ell}{2k} - \frac{j - j'}{2 d k} \ge \frac{1}{2k} - \frac{d - 1}{2 d k} > 0$, so $\tilde{t}_{i, j}[\ell] = t_{i, j, >}[\ell]$.
    On the other hand, when $\ell > i$, $\frac{i - \ell}{2k} - \frac{j - j'}{2 d k} \le - \frac{1}{2k} - \frac{d - 1}{2 d k} < 0$, so $\tilde{t}_{i, j}[\ell] = t_{i, j, \ge}[\ell]$.
    Therefore, it follows that
    $$
        \ln |U_{i, j}| 
            = \sum_{\ell \neq i} \ln \bigparen{ \tilde{t}_{i, j}[\ell] - (\ell - 1) } .
    $$

    It remains to demonstrate an efficient method for computing $\tilde{t}_{i, j}$. 
    First, we sort the $(i, j) \in [k] \times [d]$ pairs in increasing order based on $\loss_{i, j} + \frac{i}{2k} - \frac{j}{2dk}$. 
    For a fixed $i$, the values $\loss_{i, j}$ for $j \in [d]$ are already in increasing order, so a sorted sequence of $\loss_{i, j} + \frac{i}{2k} - \frac{j}{2dk}$ for $j \in [d]$ can be obtained in $O(d)$ time. 
    We then merge $k$ sorted sequences in $O(dk \log k)$ time using $k$-way merging.
    
    If $(\hat{i}, \hat{j})$ appears immediately after $(i, j)$ in the sorted order, then $\ln |U_{\hat{i}, \hat{j}}|$ can be derived from $\ln |U_{i, j}|$ in $O(1)$ time. 
    Specifically, we claim that
    $$
        \tilde{t}_{\hat{i}, \hat{j}}[\ell] = \begin{cases}
            \tilde{t}_{i, j}[\ell],    &\text{ if } \ell \neq \hat{i},  \\
            \tilde{t}_{i, j}[\ell] + 1,    &\text{ if } \ell = \hat{i}.
        \end{cases}
    $$

    For the first case, assume for contradiction that there exists some $\ell \neq \hat{i}$ such that $\tilde{t}_{\hat{i}, \hat{j}}[\ell] > \tilde{t}_{i, j}[\ell]$. 
    Consequently, there exists $j'$ such that
    \begin{equation}
        \label{ineq:joint1}
        \hist[\ell] - \loss_{i, j} - \PAREN{
                    \frac{i - \ell}{2k} - \frac{j - j'}{2 d k}
        }
        > \hist[j'] \ge 
        \hist[\ell] - \loss_{\hat{i}, \hat{j}} - \PAREN{
                    \frac{\hat{i} - \ell}{2k} - \frac{\hat{j} - j'}{2 d k}
        }
    \end{equation}
    which implies that
    \begin{equation}
        \label{ineq:joint2}
        \loss_{\hat{i}, \hat{j}} + \frac{\hat{i}}{2k} - \frac{\hat{j}}{2dk}
        > 
        \hist[\ell] - \hist[j'] + \PAREN{
            \frac{\ell}{2k} - \frac{j'}{2 d k}
        }
        = \loss_{\ell, j'} + \frac{\ell}{2k} - \frac{j'}{2dk}
        \ge 
        \loss_{i, j} + \frac{i}{2k} - \frac{j}{2dk}.
    \end{equation}
    Since the values of $\loss_{i, j} + \frac{i}{2k} - \frac{j}{2dk}$ are distinct, it follows that 
    \begin{equation}
        \label{ineq:joint3}
        \loss_{\hat{i}, \hat{j}} + \frac{\hat{i}}{2k} - \frac{\hat{j}}{2dk}
        > 
        \loss_{\ell, j'} + \frac{\ell}{2k} - \frac{j'}{2dk} 
        >
        \loss_{i, j} + \frac{i}{2k} - \frac{j}{2dk},
    \end{equation}
    which contradicts the assumption that $(\hat{i}, \hat{j})$ appears immediately after $(i, j)$ in the sorted order.
    
    For the second case ($\ell = \hat{i}$), it is clear that $\tilde{t}_{\hat{i}, \hat{j}}[\ell] \ge \tilde{t}_{i, j}[\ell]$. 
    Additionally, observe that
    $$
        \hist[\hat{i}] - \loss_{i, j} - \PAREN{
                    \frac{i - \hat{i}}{2k} - \frac{j - \hat{j}}{2 d k}
        }
        > \hist[\hat{j}] \ge 
        \hist[\hat{i}] - \loss_{\hat{i}, \hat{j}} - \PAREN{
                    \frac{\hat{i} - \hat{i}}{2k} - \frac{\hat{j} - \hat{j}}{2 d k}
        }
    $$
    implying that $\tilde{t}_{\hat{i}, \hat{j}}[\ell] \ge \tilde{t}_{i, j}[\ell] + 1$. 
    Finally, we show that it is impossible for $\tilde{t}_{\hat{i}, \hat{j}}[\ell]$ to exceed $\tilde{t}_{i, j}[\ell] + 1$. If it did, then, using reasoning similar to Inequalities~\eqref{ineq:joint1}, \eqref{ineq:joint2}, and \eqref{ineq:joint3}, we could conclude that there exists some $(\hat{i}, j')$ that appears between $(i, j)$ and $(\hat{i}, \hat{j})$, which leads to a contradiction.

\end{proof}

\begin{proof}[Proof for Lemma~\ref{lemma: formula for the merged group size}]
    The proof is under the same spirit as the proof of Lemma~\ref{lemma: formula for the group size}. 
    
    It suffices to show that 
    \begin{equation}
        \begin{array}{c}
            |\groupSeq{\ge \tau}{i}| 
                = {
                    \prod_{j = 1}^{i - 1} \PAREN{ \cnt{\tau - 1}{j} - (j - 1) }  \cdot \big( d -  \cnt{\tau - 1}{i} \big)
                    \cdot \prod_{j = i + 1}^{k} \PAREN{ d  - (j - 1) }.
                } 
        \end{array}
    \end{equation}
    Recall Equation~\eqref{eq: def of tail groups} that
    \begin{equation*}
        \groupSeq{\ge \tau}{i} \doteq \cup_{r \in \IntSet{\tau}{n}} \groupSeq{r}{i}
        = \set{ 
            \topkSoln = \paren{\topkSoln[1], \ldots, \topkSoln[k]} \in \partialPermutation{\DataDomain}{k} : 
                \begin{matrix}
                    \hist\BigBracket{\topkSoln[j]} > \hist_{(j)} - \tau, & \forall j < i \\
                    \hist\BigBracket{\topkSoln[j]} \le \hist_{(j)} - \tau, & j = i
                \end{matrix}
        }.
    \end{equation*}
    
    Assume we want to select a sequence $\topkSoln \in \groupSeq{\ge \tau}{i}$. 
    Since $\topkSoln[1] \in \{ j' \in [d] : \hist\bracket{j'} > \hist_{(1)} - \tau \}$, 
    the number of possible choices for $\topkSoln[1]$ is 
    $
        | \{ j' \in [d] : \hist\bracket{j'} > \hist_{(1)} - \tau \} |
        = | \{ j' \in [d] : \hist\bracket{j'} \ge \hist_{(1)} - (\tau - 1) \} |
        = \cnt{\tau - 1}{1}.
    $
    The first equality holds because the $\hist[j']$ values are integers.
    
    Since $\hist_{(\ell)}$ is non-decreasing, for each $\ell < j < i$, $\hist[\topkSoln[\ell]] > \hist_{(\ell)} - \tau \geq \hist_{(j)} - \tau$, therefore $\hist_{(\ell)} \in \{ j' \in [d] : \hist\bracket{j'} > \hist_{(j)} - \tau \}$. 
    After determining $\topkSoln\IntSet{1}{j-1}$, $\topkSoln[j]$ must be chosen from $\{ j' \in [d] : \hist\bracket{j'} > \hist_{(j)} - \tau \} \setminus \{\topkSoln[\ell]: \ell < j\}$, so it has $|\{ j' \in [d] : \hist\bracket{j'} > \hist_{(j)} - \tau \}| - (j - 1) = \cnt{\tau - 1}{j} - (j - 1)$ choices. 
    
    Now we consider the number of choices for $\topkSoln[i]$.
    Since $\topkSoln[1], \ldots, \topkSoln[i - 1] \in \{ j' \in [d] : \hist\bracket{j'} > \hist_{(i)} - \tau \}$, they do not appear in $\{ j' \in [d] : \hist\bracket{j'} \le \hist_{(i)} - \tau \}$. 
    The number of choices for $\topkSoln[i]$ is exactly 
    $
        |\{ j' \in [d] : \hist\bracket{j'} \le \hist_{(i)} - \tau \}|
            = d -  \cnt{\tau - 1}{i}.
    $
    
    For $j \in \IntSet{i + 1}{k}$, the number of choices for $\topkSoln[j]$, after determining $\topkSoln\IntSet{1}{j-1}$, is ${ d - (j - 1) }$. 

    Multiplying the number of choices for each element in $\topkSoln \in \groupSeq{\ge \tau}{i}$, we get
    \begin{equation*}
        \begin{array}{c}
            |\groupSeq{r}{i}| 
                = {
                    \prod_{j = 1}^{i - 1} \PAREN{ \cnt{\tau - 1}{j} - (j - 1) }  \cdot \big( d -  \cnt{\tau - 1}{i} \big)
                    \cdot \prod_{j = i + 1}^{k} \PAREN{ d  - (j - 1) }.
                } 
        \end{array}
    \end{equation*}
\end{proof}

\section{Implementation Details}
\label{appendix: implementation}

In this section, we discuss how to implement the \textbf{\textit{\subsetsampling}} and \textbf{\textit{\subsetSequencesampling}} steps, according to the partition $\set{ \groupSeq{r}{i} : r \in \IntSet{0}{\tau - 1}, i \in [k]} \cup \set{ \groupSeq{\ge \tau}{i} :  i \in [k]}$ induced by the loss $\loss{\ourAlgo}$.

\subsection{\subsetsampling}

The algorithm is in Algorithm~\ref{algo: sampling step I}.

\paragraph{Computing the $\cnt{r}{j}$.}
Let $f_{\hist}: \N \rightarrow 2^\N$ be the function given by $f_{\hist}[t] \doteq \{ i \in \cD : \hist[i] = t \}, \forall t \in \N$.
By using standard hash map, $f_{\hist}$ can be computed with $O(d)$ time and space.
Based on the definition of the $\cnt{r}{j}$'s and that $\hist$ consists of only integer scores,  the following recursion holds, 
\begin{equation}
    \label{eq: recursion for the cnt}
    \begin{array}{rll}
        \cnt{0}{1} 
                &= \card{f_{\hist} \bigparen{ \hist_{(1)} }}, \\
        \cnt{0}{j} - \cnt{0}{j - 1} 
                &= \card{f_{\hist} \bigparen{ \hist_{(j)} }} \cdot \indicator{\hist_{(j - 1)} \neq \hist_{(j)}}, 
                & \forall \, 1 < j \le k, \\
        \cnt{r}{j} - \cnt{r - 1}{j}
                &= |f_{\hist} \bigparen{ \hist_{(j)} - r }|
                & \forall \, 1 < r < \tau.
    \end{array}
\end{equation}
Therefore, $\cnt{r}{j}$'s can be computed in $O(d + \tau k)$ time.

\paragraph{Computing $\ln |\groupSeq{r}{i}|$ and $\ln |\groupSeq{\ge\tau}{i}|$.}
To simplify the notation, we apply the following definitions.
\begin{definition}
    \begin{equation}
        \label{eq: formula for unified group sequence}
        \unifiedGroupSeq{r,i} \doteq 
        \begin{cases}
            \groupSeq{r}{i}, &\text{if } r < \tau \\
            \groupSeq{\geq\tau}{i}, &\text{if } r = \tau
        \end{cases}
    \end{equation}
    \begin{equation}
        \label{eq: formula for unified cnt}
        \cntUnified{r,i} \doteq 
        \begin{cases}
            \cnt{r}{i}, &\text{if } r < \tau \\
            d, &\text{if } r = \tau
        \end{cases}
    \end{equation}
\end{definition}

For each $r \in \IntSet{0}{\tau}$ and each $i \in \IntSet{1}{k}$, define the prefix and the suffix sums by 
\begin{equation}
    \label{eq: prefix and suffix sums}
    \begin{array}{c}
        \forwardSum\bracket{r, i} = \sum_{j = 1}^i \ln \PAREN{ \cntUnified{r - 1}{j} - (j - 1) },
        \quad
        \backwardSum\bracket{r, i} = \sum_{j = i}^{k} \ln \PAREN{ \cntUnified{r}{j}  - (j - 1) },
        \quad \forall i \in [k].
    \end{array}
\end{equation}
For convenience, we assume $\forwardSum\bracket{\cdot, 0} = \backwardSum\bracket{\cdot, k + 1} = 0$. 
Combining Equation~\eqref{eq: formula for the group size} and~\eqref{eq: formula for the merged group size}, we have
\begin{equation}
    \label{eq: efficient computation for ln counters}
    \begin{array}{c}
        \ln \card{\unifiedGroupSeq{r}{i}} 
        = \forwardSum\bracket{r, i - 1}
        + \ln \big( \cntUnified{r}{i} -  \cntUnified{r - 1}{i} \big)
        + \backwardSum\bracket{r, i + 1}
    \end{array}
\end{equation}
A corner case is $r = 0$. By definition, $|\unifiedGroupSeq{0}{i}| = 0$ unless $i = 0$. We can set $\cntUnified{-1}{0} = 0$ and the equation still holds. 

\begin{algorithm}[!ht]
    \caption{\subsetsampling}
    \label{algo: sampling step I}
    \begin{algorithmic}[1]
        \Require Histogram $\hist$; Privacy Parameter $\eps$
        \State Compute $f_{\hist}: \N \rightarrow 2^\N$ s.t. $f_{\hist}[t] \doteq \{ i \in \cD : \hist[i] = t \}, \forall t \in \N$.
        \vspace{1mm}
        \State Compute the $\cntUnified{r}{j}$'s according to Equation~\eqref{eq: recursion for the cnt} and Equation~\eqref{eq: formula for unified cnt}
        \vspace{1mm}
        \State Compute the $\forwardSum$'s and $\backwardSum$'s according to Equation~\eqref{eq: prefix and suffix sums}
        \vspace{1mm}
        \State Compute the $\ln |\unifiedGroupSeq{r}{i}|$'s according to Equation~\eqref{eq: efficient computation for ln counters}
        \vspace{2mm}
        \State Sample $(r, i) \leftarrow \argmax \set{X_{r, i} + \ln \bigparen{|\groupSeq{r}{i}| \cdot \exp \PAREN{ - \eps \cdot r / 2}} }$, where $X_{r, i} \sim \GumbelNoise{1}$
        \vspace{2mm}
        \State \Return $(r, i)$
    \end{algorithmic}
\end{algorithm}

\subsection{\subsetSequencesampling}

\begin{algorithm}[!t]
    \caption{\subsetSequencesampling}
    \label{algo: sampling step II}
    \begin{algorithmic}[1]
        \Require{$(r, i)$}
        \Ensure{$\topkSoln \uniformSample \unifiedGroupSeq{r, i}$}
        \State Let $\topkSoln \leftarrow \varnothing$ be an empty length-$k$ array %
        \For{$j \leftarrow 1$ to $i - 1$} 
            \label{line: sampling the head iter}
            \vspace{1mm}
            \State Sample $\topkSoln[j] \uniformSample \{\ell \in \DataDomain : \hist[\ell] > \hist_{(j)} - r \} \setminus \set{\topkSoln[1], \ldots, \topkSoln[j - 1]}$
            \label{line: sampling the head}
        \EndFor
        \vspace{2mm}
        \State Sample an 
        $\topkSoln[j] \uniformSample 
            \begin{cases}
                \{\ell \in \DataDomain : \hist[\ell] = \hist_{(i)} - r \}, &\text{if } r < \tau \\
                \{\ell \in \DataDomain : \hist[\ell] \le \hist_{(i)} - \tau \}, &\text{if } r = \tau
            \end{cases}
        $
        \label{line: sampling error equal}
        \vspace{2mm}
        \For{$j \leftarrow i + 1$ to $k$} 
            \label{line: sampling the tail iter}
            \State Sample an 
            $\topkSoln[j] \uniformSample 
                \begin{cases}
                    \{\ell \in \DataDomain : \hist[\ell] \ge \hist_{(j)} - r \} \setminus \set{\topkSoln[1], \ldots, \topkSoln[j - 1]},  &\text{if } r < \tau \\
                    \DataDomain \setminus \set{\topkSoln[1], \ldots, \topkSoln[j - 1]}, &\text{if } r = \tau
                \end{cases}
            $
            \label{line: sampling the tail}
        \EndFor
        \State \Return $\topkSoln$
    \end{algorithmic}
\end{algorithm}

The algorithm for sequence sampling is Algorithm~\ref{algo: sampling step II}. It follows from the counting argument in Lemma~\ref{eq: formula for the group size} when $r < \tau$ and Lemma~\ref{eq: formula for the merged group size} when $r = \tau$.

\begin{lemma}
    \label{lemma: running time of algo sampling step II}
    Algorithm~\ref{algo: sampling step II} can be implemented in $O(d)$ time.
\end{lemma}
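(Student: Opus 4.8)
The plan is to show that Algorithm~\ref{algo: sampling step II} performs $O(d)$ work across all its loops by preprocessing the histogram once and then drawing each sample via a single uniform integer draw into a suitably maintained candidate pool. First I would assume, as in the \subsetsampling step, that we have already computed $f_{\hist}$ and the $\cnt{r}{j}$'s, so that for each $j$ we know the cardinality of the candidate set $\{\ell \in \DataDomain : \hist[\ell] > \hist_{(j)} - r\}$, namely $\cnt{r-1}{j}$, and of $\{\ell \in \DataDomain : \hist[\ell] \ge \hist_{(j)} - r\}$, namely $\cnt{r}{j}$. The key structural observation driving the linear bound is monotonicity: since $\hist_{(1)} \ge \hist_{(2)} \ge \cdots \ge \hist_{(k)}$, the candidate pools for successive positions are \emph{nested and growing} as $j$ increases, so the algorithm can maintain a single explicit list of eligible items and only \emph{append} new items as $j$ advances, rather than rebuilding the pool each iteration.

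The main steps I would carry out are as follows. First, build an explicit array (or list) $L$ of item indices, grouped by score, so that the items with the largest scores come first; this is an $O(d)$ counting-sort-style pass using $f_{\hist}$. Second, for the head loop (lines~\ref{line: sampling the head iter}--\ref{line: sampling the head}), observe that the eligible set at position $j$ is the prefix of $L$ consisting of all items with $\hist[\ell] > \hist_{(j)} - r$, minus the $j-1$ already-chosen items. As $j$ increases, the threshold $\hist_{(j)} - r$ weakly decreases, so the prefix only grows; I would maintain a pointer into $L$ advancing the prefix boundary, and sample uniformly from the current prefix excluding previously chosen items using the standard trick of swapping a chosen item to a ``dead'' region at the front (partial Fisher--Yates). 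Each sample is then $O(1)$ after an $O(1)$-amortized pointer advance, so the whole head loop is $O(d+k)$. Third, the single middle draw at line~\ref{line: sampling error equal} samples from a score-class bucket (or, when $r=\tau$, the tail $\{\ell : \hist[\ell] \le \hist_{(i)} - \tau\}$), both of which are contiguous regions in $L$, hence $O(1)$ per draw given the pointers, or $O(d)$ to locate the tail region once. Fourth, the tail loop (lines~\ref{line: sampling the tail iter}--\ref{line: sampling the tail}) is handled identically to the head loop, again advancing the prefix pointer and using Fisher--Yates swaps, giving $O(d+k)$. Summing, the total is $O(d+k) = O(d)$ since $k \le d$.

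The hard part will be the bookkeeping for the ``$\setminus \{\topkSoln[1], \ldots, \topkSoln[j-1]\}$'' exclusion while sampling uniformly and in $O(1)$ amortized time per draw. Naively scanning to skip already-chosen items would cost too much, so the crux is the partial Fisher--Yates argument: maintain the eligible region of $L$ as a contiguous block $[\mathrm{lo}, \mathrm{hi})$, draw a uniform index $p$ in that block, emit $L[p]$, then swap $L[p]$ with $L[\mathrm{lo}]$ and increment $\mathrm{lo}$ so chosen items are sequestered and never re-drawn; to extend the pool one advances $\mathrm{hi}$. I would need to check carefully that this preserves exact uniformity over the set difference at each step (it does, since at each draw the remaining eligible items occupy $[\mathrm{lo},\mathrm{hi})$ and are equiprobable) and that the pointer only moves forward, giving the amortized $O(1)$ bound. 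A secondary subtlety is the $r=\tau$ case, where the tail draws range over all of $\DataDomain$ minus the chosen prefix; here the same pointer scheme applies with $\mathrm{hi} = d$ fixed, so no extra work is needed beyond confirming the candidate sets in Equation~\eqref{eq: def of tail groups} match the regions used.
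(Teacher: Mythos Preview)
Your proposal is correct and takes essentially the same approach as the paper: exploit the monotonicity of $\hist_{(j)}$ so that the candidate pools are nested and growing, maintain a single dynamic array with an advancing pointer, and use a swap-based sampling trick (you swap to the front and increment $\mathrm{lo}$; the paper swaps to the back and pops) to handle the $\setminus\{\topkSoln[1],\ldots,\topkSoln[j-1]\}$ exclusion in $O(1)$ per draw. The only refinement worth noting is that the paper sorts just the at most $k+\tau$ distinct score values exceeding $\hist_{(k)}-\tau$ (incurring $O((k+\tau)\log(k+\tau))$, absorbed into the $O(d+k\tau)$ bound of Theorem~\ref{theorem: main result}), whereas your ``$O(d)$ counting-sort-style pass'' to fully order $L$ would strictly cost $O(d+n)$ over the score range or $O(d\log d)$ via comparison sort---an easily patched detail, not a gap in the argument.
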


\begin{proof}[Proof of Lemma~\ref{lemma: running time of algo sampling step II}]
    We discuss different sections of pseudo-codes of Algorithm~\ref{algo: sampling step II}, start by the easy ones.

    {\it Case I : Algorithm~\ref{algo: sampling step II}, line~\ref{line: sampling error equal}.} 
    Clearly, this can be implemented in $O(d)$ time. 

    {\it Case II: Algorithm~\ref{algo: sampling step II}, line~\ref{line: sampling the tail iter}-\ref{line: sampling the tail}, when $r = \tau$.} 
    After the first $i$ entries of $\topkSoln$ are determined, we can create an dynamic array, denoted $\vec{a}$, consisting of elements $\DataDomain \setminus \set{\topkSoln[1], \ldots, \topkSoln[i]}$.
    This takes $O(d)$ time.
    Sampling and removing an item from $\vec{a}$ can be done in $O(1)$ time via standard technique, as described in Algorithm~\ref{algo: sampling step II explained} (the $\cA\cS(\cdot)$ procedure).

    {\it Case III: Algorithm~\ref{algo: sampling step II}, line~\ref{line: sampling the head iter}-\ref{line: sampling the head}.} 
    This section can be implemented in $O(d + (k + \tau) \log (k + \tau))$ time, 
    as described in Algorithm~\ref{algo: sampling step II explained} (the efficient sequence sampler procedure).
    It first computes a function $f_{\hist}: \N \rightarrow 2^\N$, s.t., $f_{\hist}(t) \doteq \{ i \in \cD : \hist[i] = t \}$, $\forall t \in \N$.
    By using standard hash map, $f_{\hist}$ can be computed with $O(d)$ time and space.
    Then it finds the set $\cI \doteq \{ t \in \N : f_{\hist}(t) \neq \varnothing \wedge t > \hist_{(k)} - \tau \}$, and store elements in $\cI$ as an array.
    It can be computed in $O(d)$ time. 
    Since an item in $\cI$ must equal one of the values of $\hist_{(1)}, \ldots, \hist_{(k)}, \hist_{(k)} - 1, \ldots, \hist_{(k)} - \tau + 1$, 
    it is easy to see that $\card{\cI} \le k + \tau$.
    So sorting the items in $\cI$ in decreasing order takes $O((k + \tau) \log (k + \tau))$ time. 
    Then the algorithm create an empty dynamic array $\vec{a}$, and a variable $\textsc{pos} = 0.$
    For each $j \in [i - 1]$, before the sampling step (Algorithm~\ref{algo: sampling step II explained}, line~\ref{line: iter end in algo Efficient Sequence Sampler}), we claim the following holds: 
    \begin{itemize}
        \item $\textsc{pos} = \argmax_{z} \cI[z] > \hist_{(j)} - r$
        \item $\vec{a} = \{\ell \in \DataDomain : \hist[\ell] > \hist_{(j)} - r \} \setminus \set{\topkSoln[1], \ldots, \topkSoln[j - 1]}$
    \end{itemize}
    This is true for $j = 1$. 
    Now, assume this is true for the $j$ and consider the case for $j + 1$.
    After the sampling step (Algorithm~\ref{algo: sampling step II explained}, line~\ref{line: iter end in algo Efficient Sequence Sampler}) in the $j$th iteration, we have  $\textsc{pos} = \argmax_{z} \cI[z] > \hist_{(j)} - r$ and $\vec{a} = \{\ell \in \DataDomain : \hist[\ell] > \hist_{(j)} - r \} \setminus \set{\topkSoln[1], \ldots, \topkSoln[j]}$.
    At the $(j + 1)$-th iteration, the inner loop (Algorithm~\ref{algo: sampling step II explained}, lines~\ref{line inner iter start in algo Efficient Sequence Sampler}-\ref{line: before sampling in algo Efficient Sequence Sampler}) 
    increases $\textsc{pos}$ from $z_1 \doteq \argmax_{z} \, \cI[z] > \hist_{(j)} - r)$ 
    to $z_2 \doteq \argmax_{z} \, \cI[z] > \hist_{(j + 1)} - r$, and expand $\vec{a}$ correspondingly. 
    Since $\set{ \cI[z_1], \ldots, \cI[z_2] }$ contains all $t \in \bracket{\hist_{(j + 1)} - r + 1, \hist_{(j)} - r}$ s.t., $f_{\hist} \bigparen{ t } \neq \varnothing$,  after this, $\vec{a}$ becomes
    \begin{align*}
        \vec{a} &= 
            \set{\ell \in \DataDomain : \hist[\ell] > \hist_{(j)} - r } \setminus \set{\topkSoln[1], \ldots, \topkSoln[j]} 
            \bigcup 
            \PAREN{
                \bigcup_{t = \hist_{(j + 1)} - r + 1}^{\hist_{(j)} - r} f_{\hist} \bigparen{ t }
            } \\
            &= 
            \set{\ell \in \DataDomain : \hist[\ell] > \hist_{(j)} - r } \setminus \set{\topkSoln[1], \ldots, \topkSoln[j]} 
            \bigcup \set{ 
                \ell \in [d] : 
                    \hist_{(j)} - r 
                    \ge \hist\bracket{\ell} 
                    > \hist_{(j + 1)} - r 
                } \\
            &= \set{\ell \in \DataDomain : \hist[\ell] > \hist_{(j + 1)} - r } \setminus \set{\topkSoln[1], \ldots, \topkSoln[j]} 
    \end{align*}
    Therefore the invaraints are maintained.

    {\it Case IV: Algorithm~\ref{algo: sampling step II}, line~\ref{line: sampling the tail iter}-\ref{line: sampling the tail}, when $r < \tau$.} 
    Observe that $\{\ell \in \DataDomain : \hist[\ell] \ge \hist_{(j)} - r \} =  \{\ell \in \DataDomain : \hist[\ell] > \hist_{(j)} - r - 1\}$.
    Hence we can use similar sampling technique to Case III.

\end{proof}

\begin{algorithm}[]
    \caption{}
    \label{algo: sampling step II explained}
    \begin{algorithmic}[1]
        \Procedure{Array Sampler~$\cA\cS$}{$\vec{a}$}
        \Require Dynamic array $\vec{a}$
            \State $L \leftarrow$ length of $\vec{a}$
            \State Sample $I \uniformSample [L]$
            \vspace{0.5mm}
            \State Swap $\vec{a}[I]$ and $\vec{a}[L]$
            \vspace{0.5mm}
            \State $ans \leftarrow \vec{a}[L]$
            \vspace{0.5mm}
            \State Remove $\vec{a}[L]$ from $\vec{a}$
            \vspace{0.5mm}
            \State {\bf return} $ans$.
        \EndProcedure
        
        \Statex \vspace{-1mm}
        \Procedure{Efficient Sequence Sampler}{}
        \Require Histogram $\hist$; Parameter $i \in [k]$.
            \State Compute the function $f_{\hist}: \N \rightarrow 2^\N$, s.t.,        $f_{\hist} (t) \doteq \{ \ell \in \cD : \hist[\ell] = t \}, \, \forall t \in \N$
            \State Compute $\cI \leftarrow \{ t \in \N : f_{\hist}(t) \neq \varnothing \, \wedge \, t > \hist_{(k)} - \tau \}$ and store it as an array
            \vspace{0.5mm}
            \State Sort $\cI$ in decreasing order 
            \vspace{0.5mm}
            \State $\vec{a} \leftarrow$ an empty dynamic array
            \vspace{0.5mm}
            \State $\textsc{pos} \leftarrow 0$
            \vspace{0.5mm}
            \For{$j \leftarrow 1$ to $i - 1$} 
            \label{line: iter start in algo Efficient Sequence Sampler}
                \vspace{0.5mm}
                \While{$\textsc{pos} <$ length of $\cI$ $\,\wedge\,$ $\cI \bracket{ \textsc{pos} + 1} > \hist_{(j)} - r$}
                \label{line inner iter start in algo Efficient Sequence Sampler}
                    \vspace{0.5mm}
                    \State Add the items $f_{\hist}\bigparen{\cI \bracket{ \textsc{pos} + 1}}$ to the back of $\vec{a}$
                    \vspace{0.5mm}
                    \State $\textsc{pos} \leftarrow \textsc{pos} + 1$
                \EndWhile
                \vspace{0.5mm}
                \label{line: before sampling in algo Efficient Sequence Sampler}    
                \State $\topkSoln[j] \leftarrow$ $\cA\cS(\vec{a})$
            \label{line: iter end in algo Efficient Sequence Sampler}
            \EndFor
            \State {\bf return} $\topkSoln[1], \ldots, \topkSoln[i - 1]$.
        \EndProcedure
    \end{algorithmic}
\end{algorithm}

\subsection{Vectorization}

Though \ourAlgoName is not implemented yet fully vectorized, we discuss its potential here. 
Given that \ourAlgoName has a runtime of $O(d + k^2 / \eps \cdot \ln d)$, the bottleneck lies in the $O(d)$ component for large datasets. 

The first $O(d)$ part involves computing the groups $f_{\hist}[t] \doteq \{ i \in \cD : \hist[i] = t \}$ for each unique value $t$ in $\hist$. This computation could be vectorized using an appropriate library.

The second $O(d)$ component in \subsetSequencesampling can be eliminated with a careful implementation. 
Recall that in Algorithm~\ref{algo: sampling step II}, a crucial step is to sample elements uniformly at random from the set 
\begin{equation}
    \label{eq: sampling sequence set}
    \{\ell \in \DataDomain : \hist[\ell] > \hist_{(j)} - r \} \setminus \set{\topkSoln[1], \ldots, \topkSoln[j - 1]},
\end{equation}
for a possible value of $r \in \set{ 1, 2, \ldots, \tau }$. 
Sampling from the set $\{\ell \in \DataDomain : \hist[\ell] \ge \hist_{(j)} - r \} \setminus \set{\topkSoln[1], \ldots, \topkSoln[j - 1]}$ can be handled similarly. 

To achieve this, we construct an array of at most $k + \tau$ buckets (the cost of constructing this array is covered by the initial $O(d)$ time cost):
$$
    \BigBracket{
        f_{\hist}[\hist_{(1)}], f_{\hist}[\hist_{(2)}], \ldots, f_{\hist}[\hist_{(k)}], f_{\hist}[\hist_{(k)} - 1], \ldots, f_{\hist}[\hist_{(k)} - \tau]
    }.
$$
Assume that each $f_{\hist}[t]$ in this array is itself managed by a dynamic array. 
Sampling from~\eqref{eq: sampling sequence set} is then equivalent to sampling uniformly from a prefix of buckets without replacement. 

The sampling process first selects a bucket with probability proportional to its size, then draws an element uniformly at random from that bucket. After sampling, the chosen element is removed from the bucket, which can be managed efficiently using a dynamic array. This approach removes the dependency on $d$ in the sampling step.

\clearpage

\section{Supplementary Plots}
\label{sec: supplementary plots}

In this section, we provide supplementary plots for our experiments:
\begin{itemize}[leftmargin=4.5mm, topsep=2pt, itemsep=2pt, partopsep=2pt, parsep=2pt]
    \item Figure~\ref{fig: count-diff} illustrates the gaps between large-score items for all tested datasets.
    \item Figure~\ref{fig: complete eps results} displays the algorithm's running time, \(\ell_\infty\) error, and \(\ell_1\) error versus \(\epsilon\).
    \item Figure~\ref{fig: complete delta results} showcases the algorithm's running time, \(\ell_\infty\) error, and \(\ell_1\) error versus \(\beta\).
    \item Figure~\ref{fig: joint without time results} depicts the running time of \joint (excluding time from the \subsetSequencesampling step) versus the running time of our proposed algorithm \ourAlgoName (including time from the \subsetSequencesampling step), over all tested datasets. 
    Given this, our algorithm still runs orders of magnitude faster than \joint. Due to time constraints, we only repeated the experiments 5 times to generate the plots. 
    This is acceptable since, according to the previous experiments, the running time of the algorithms is quite stable.
\end{itemize}

\vspace{3cm}
\begin{figure}[!h]
    \centering
    \makebox[\textwidth]{
        \includegraphics[scale=0.32]{plot-count-diff/games_data_dist.pdf}
        \hspace{-2mm}
        \includegraphics[scale=0.32]{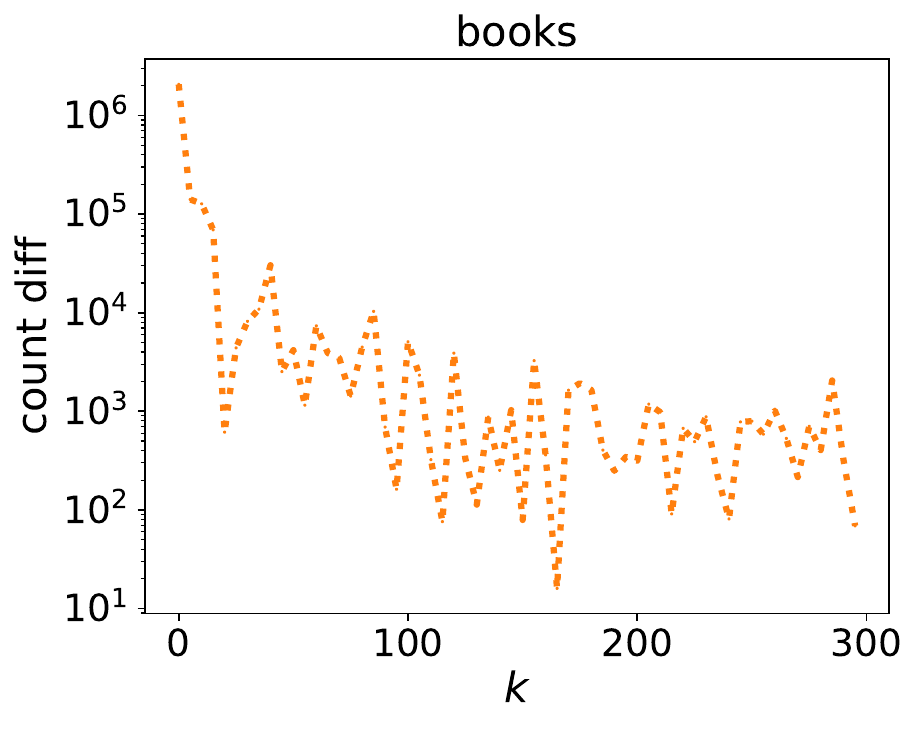}
        \hspace{-2mm}
        \includegraphics[scale=0.32]{plot-count-diff/news_data_dist.pdf} 
    } \\ \vspace{-10pt}
    \makebox[\textwidth]{
        \includegraphics[scale=0.32]{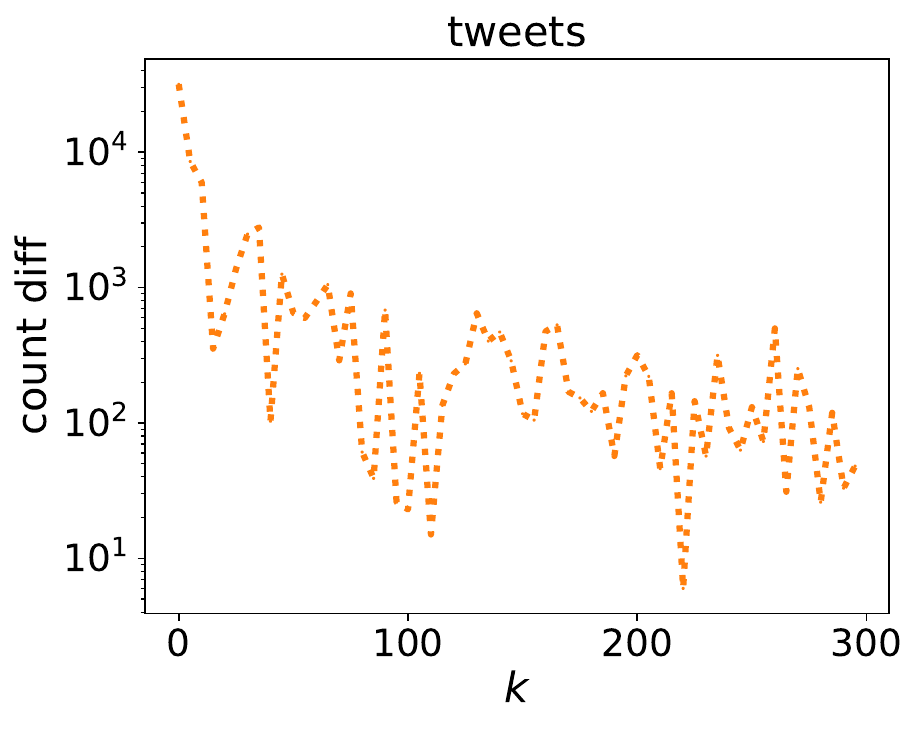}
        \hspace{-2mm}
        \includegraphics[scale=0.32]{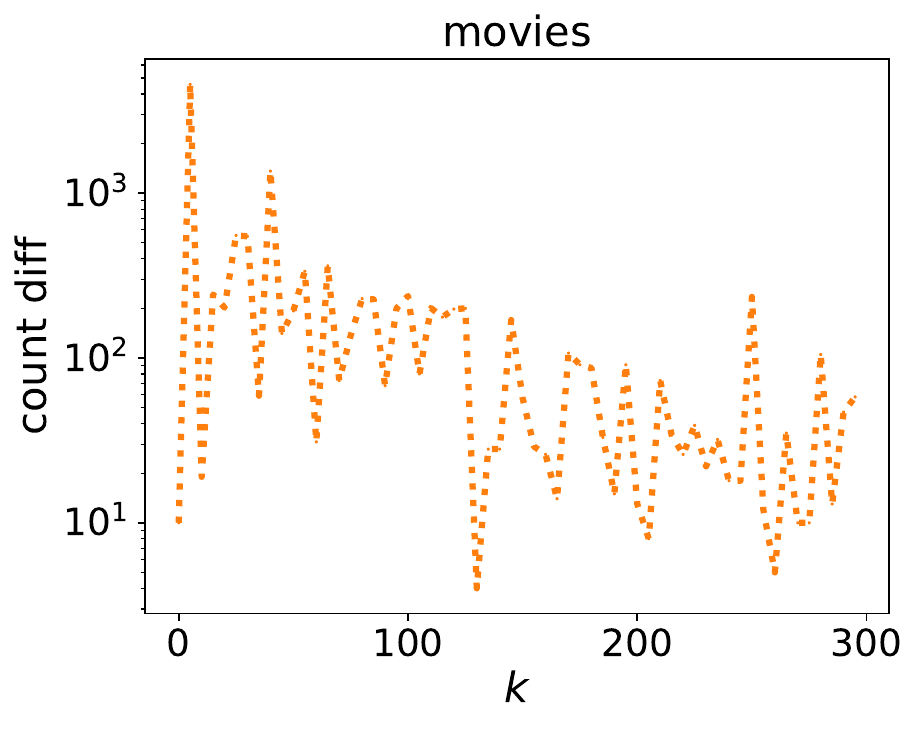}
        \hspace{-2mm}
        \includegraphics[scale=0.32]{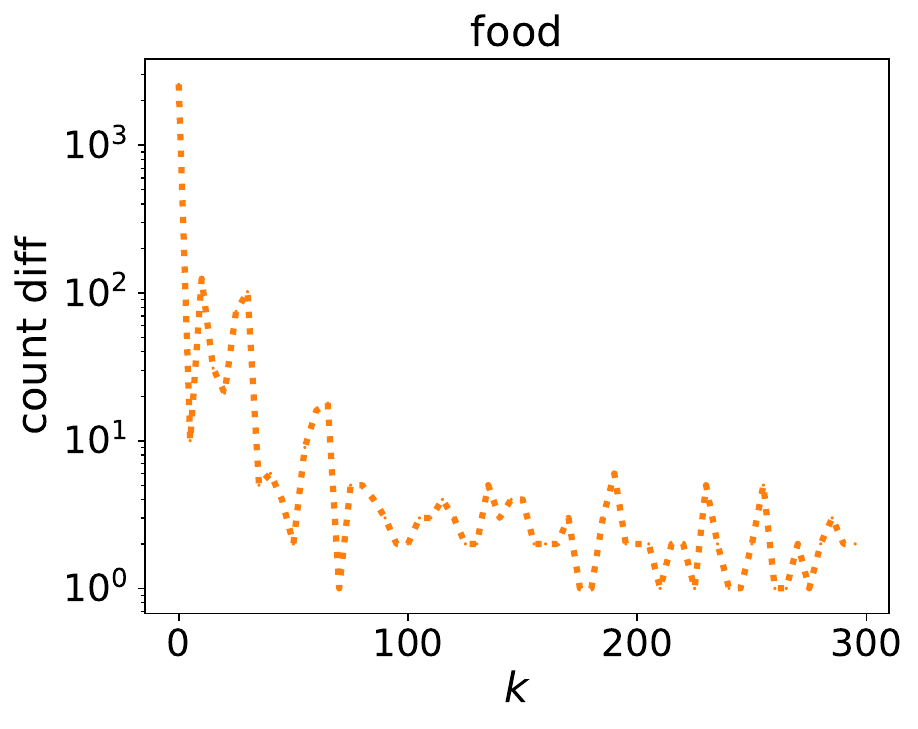} 
    }\\ \vspace{-10pt}
    \caption{
        The gaps between the top-$k$ scores (for $k = 300$) for all tested datasets. 
    }
    \label{fig: count-diff}
\end{figure}

\begin{figure}[]
    \centering
    \makebox[\textwidth]{
        \includegraphics[scale=0.32]{figures/games_var_eps_time.pdf}
        \hspace{-2mm}
        \includegraphics[scale=0.32]{figures/games_var_eps_L_INF.pdf}
        \hspace{-2mm}
        \includegraphics[scale=0.32]{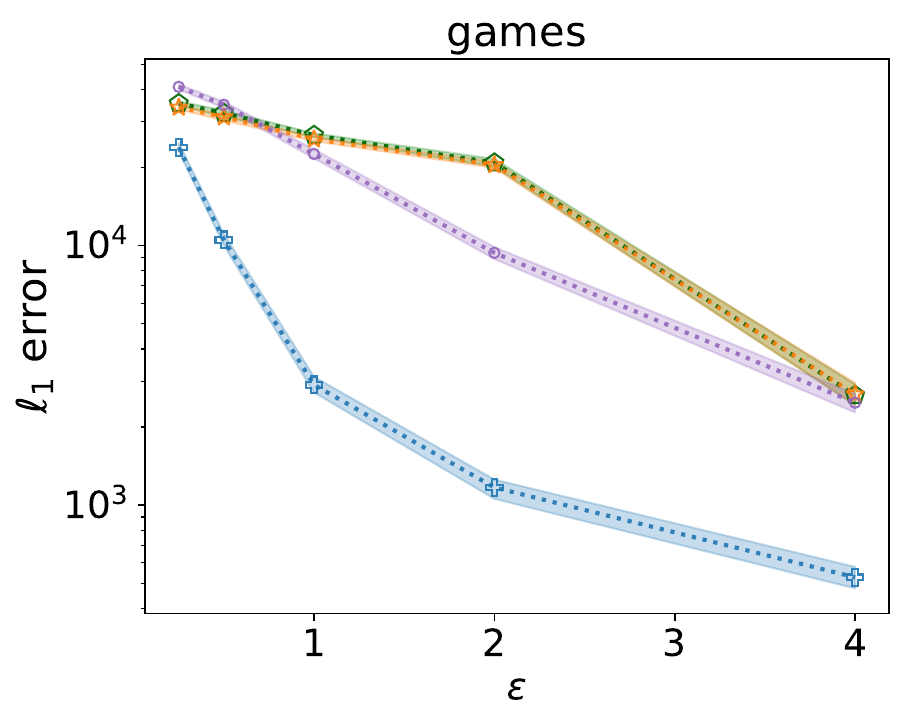} 
    } \\ \vspace{-10pt}
    \makebox[\textwidth]{
        \includegraphics[scale=0.32]{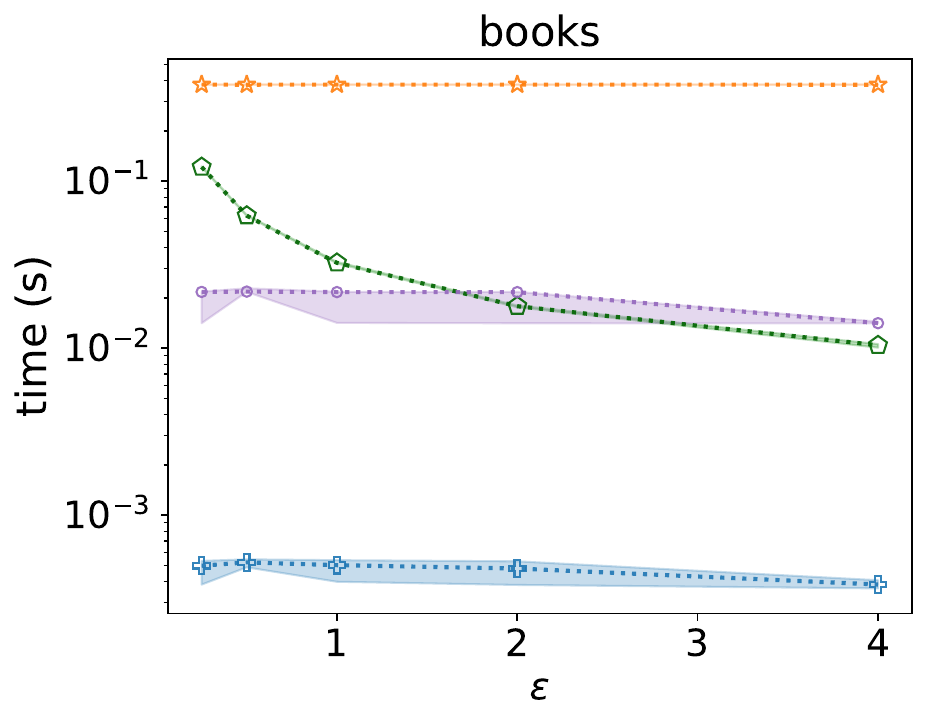}
        \hspace{-2mm}
        \includegraphics[scale=0.32]{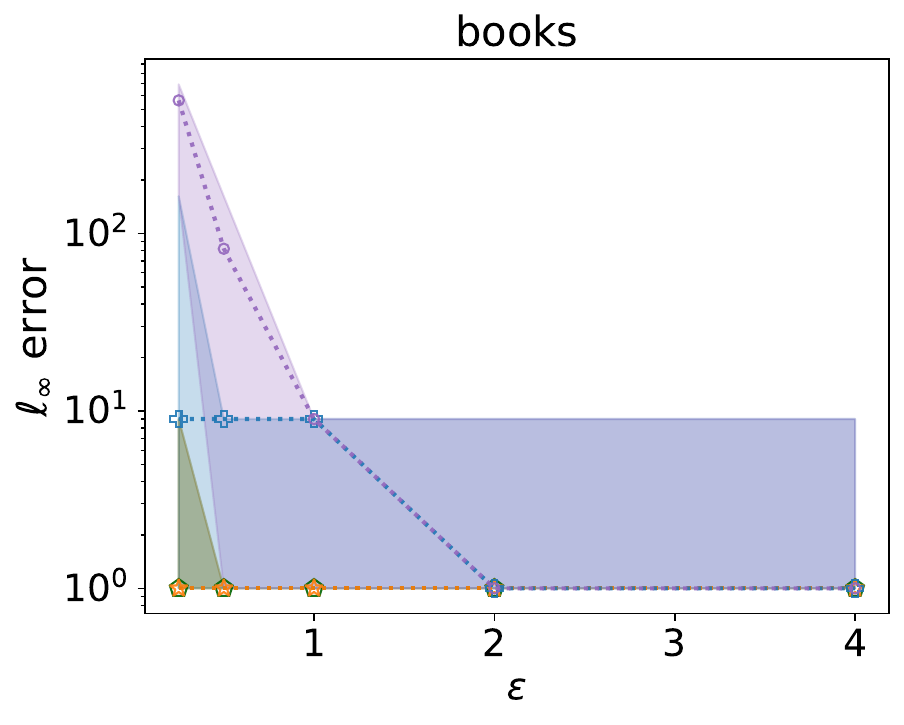}
        \hspace{-2mm}
        \includegraphics[scale=0.32]{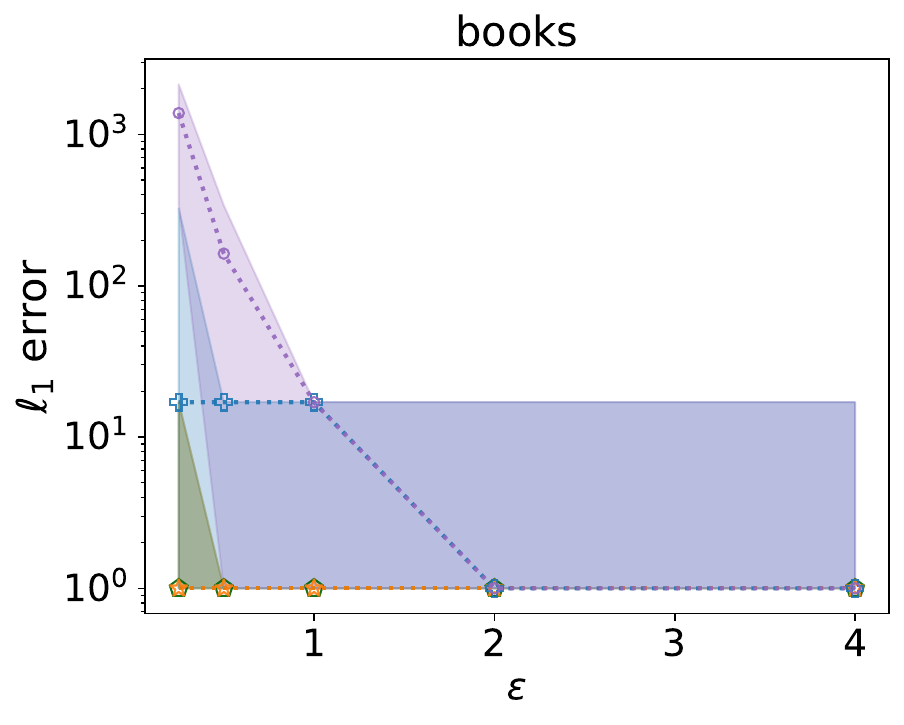} 
    }\\ \vspace{-10pt}
    \makebox[\textwidth]{
        \includegraphics[scale=0.32]{figures/news_var_eps_time.pdf}
        \hspace{-2mm}
        \includegraphics[scale=0.32]{figures/news_var_eps_L_INF.pdf}
        \hspace{-2mm}
        \includegraphics[scale=0.32]{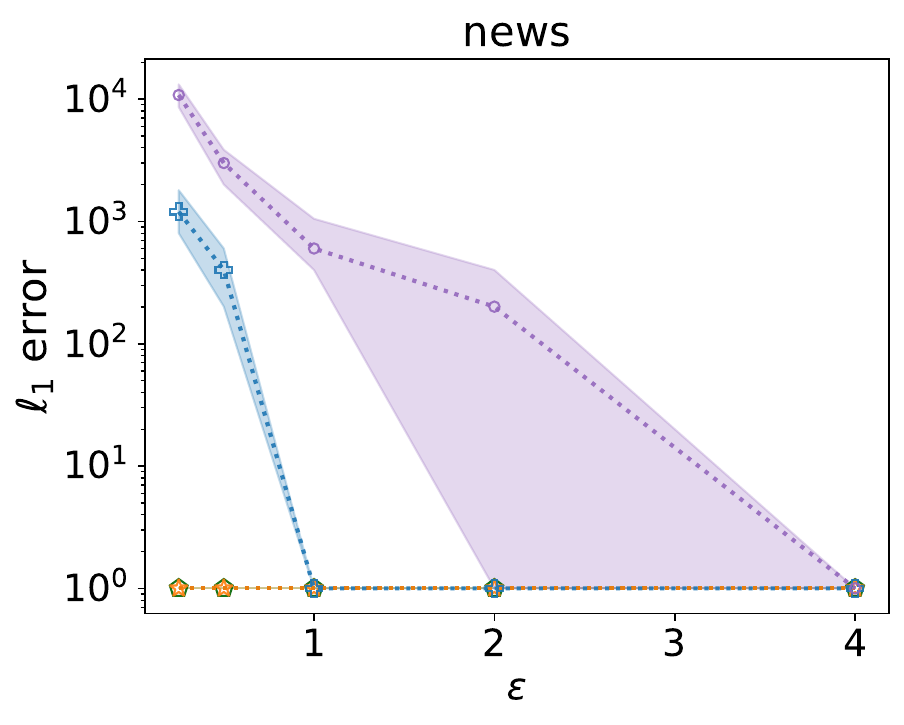} 
    } \\ \vspace{-10pt}
    \makebox[\textwidth]{
        \includegraphics[scale=0.32]{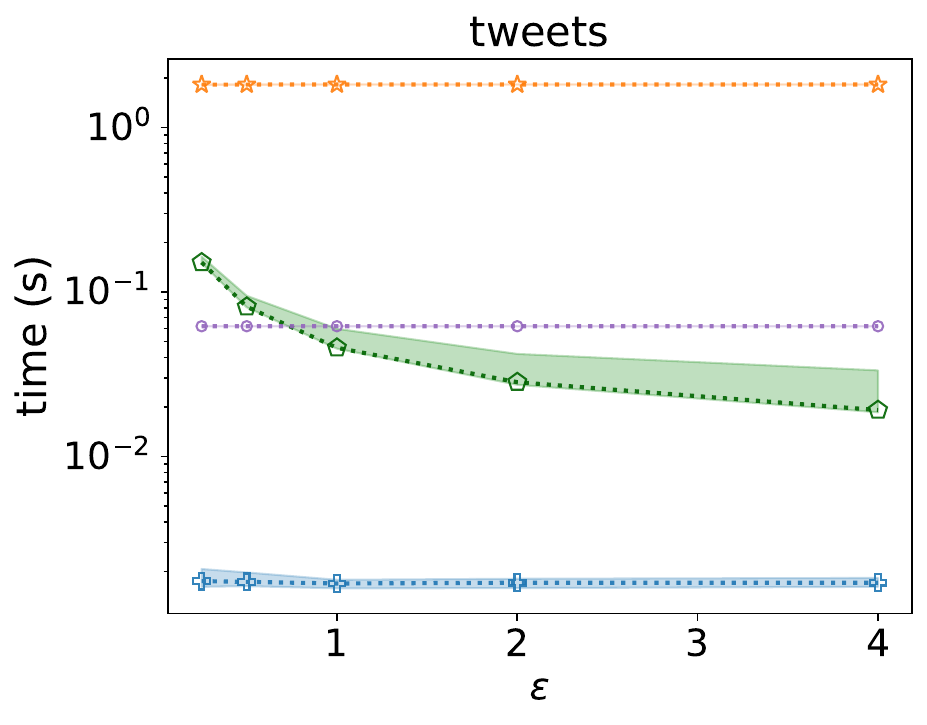}
        \hspace{-2mm}
        \includegraphics[scale=0.32]{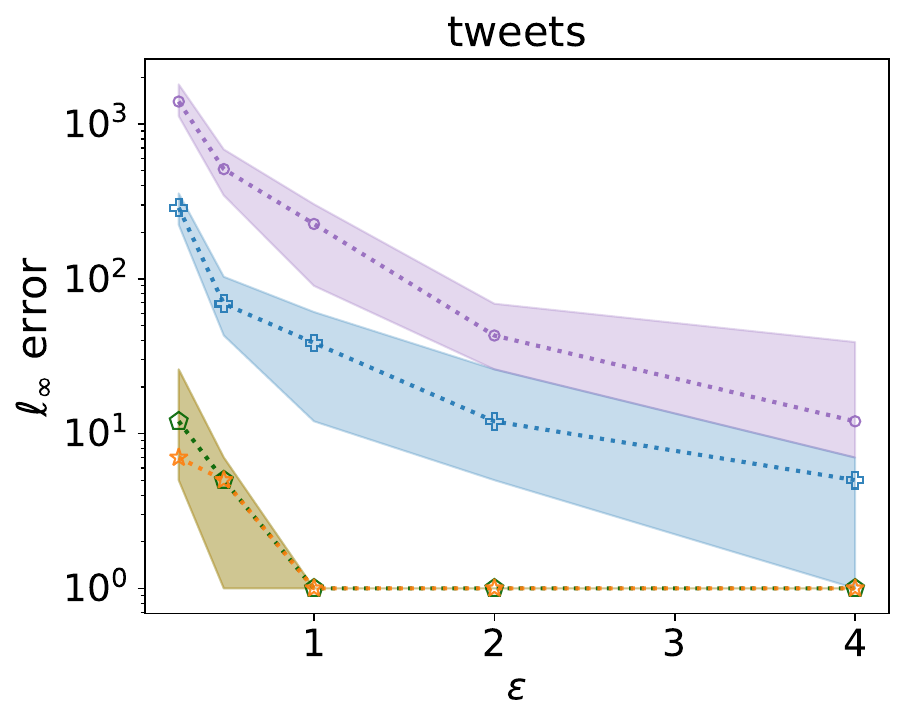}
        \hspace{-2mm}
        \includegraphics[scale=0.32]{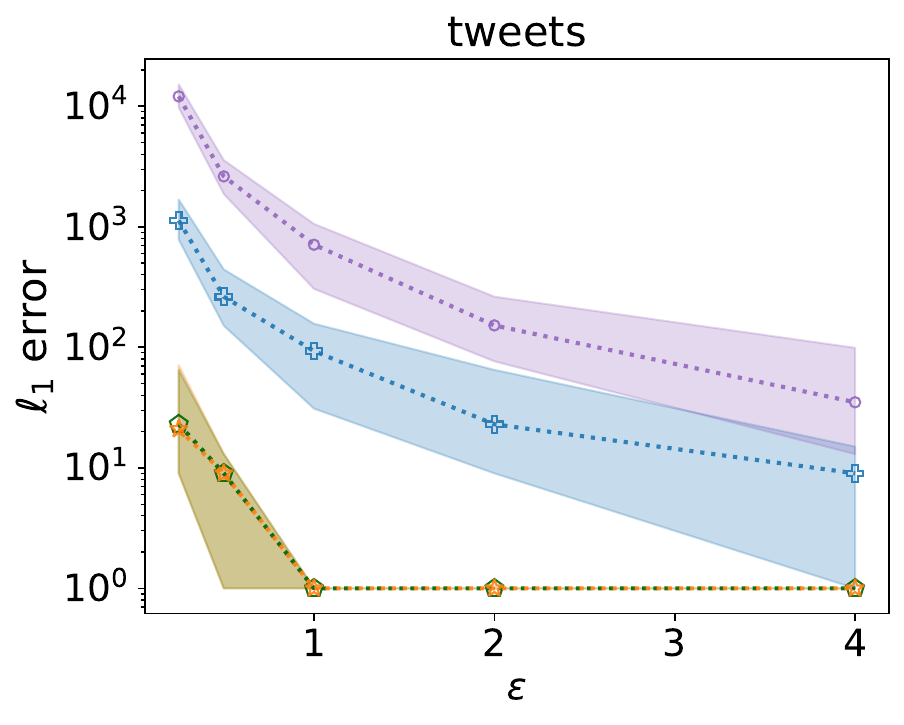} 
    } \\ \vspace{-10pt}
    \makebox[\textwidth]{
        \includegraphics[scale=0.32]{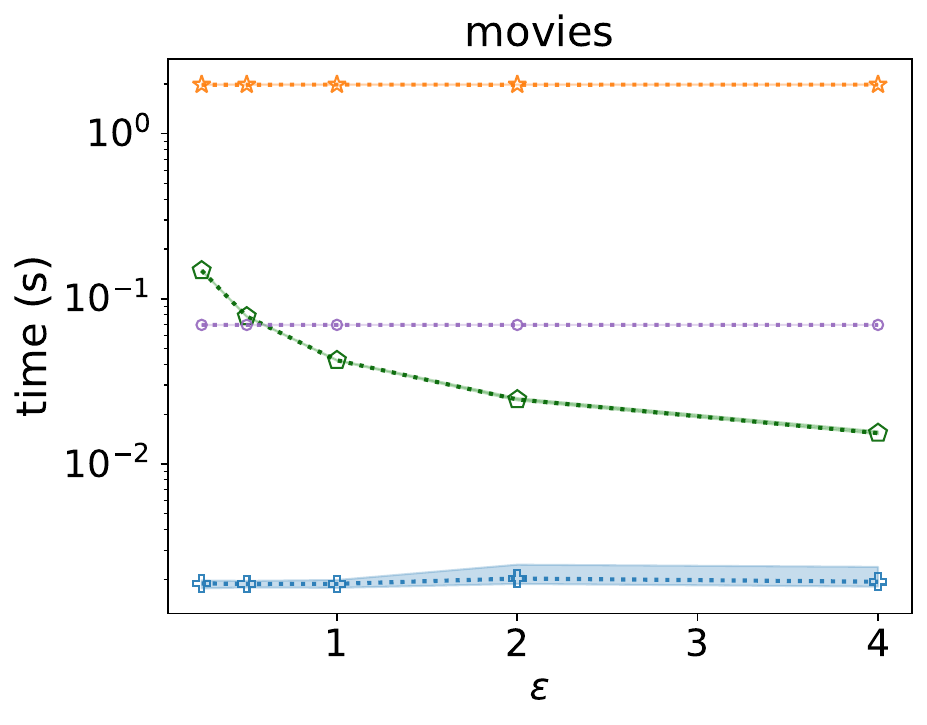}
        \hspace{-2mm}
        \includegraphics[scale=0.32]{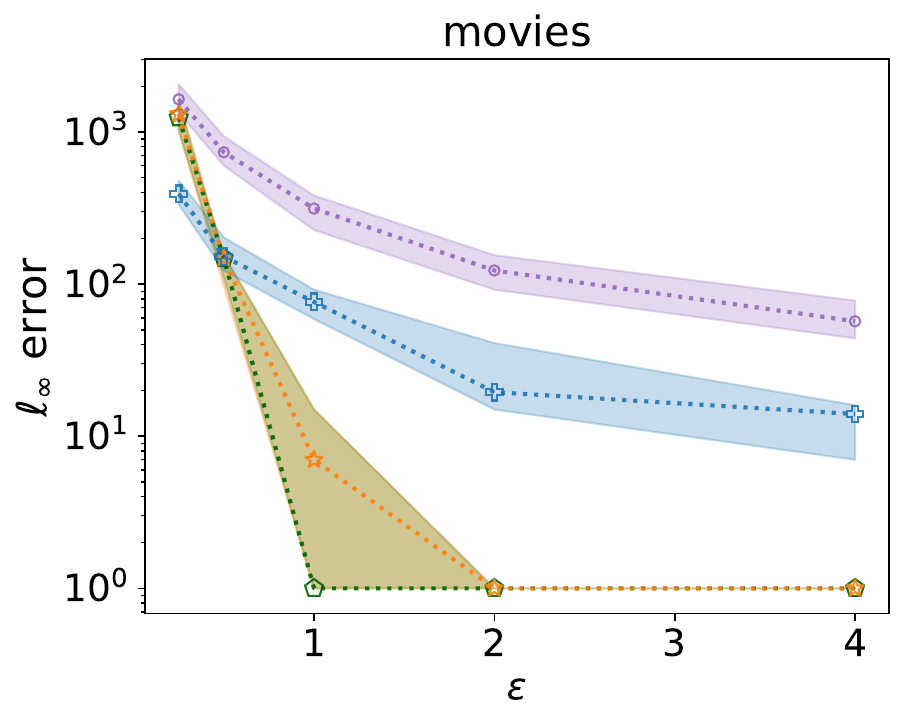}
        \hspace{-2mm}
        \includegraphics[scale=0.32]{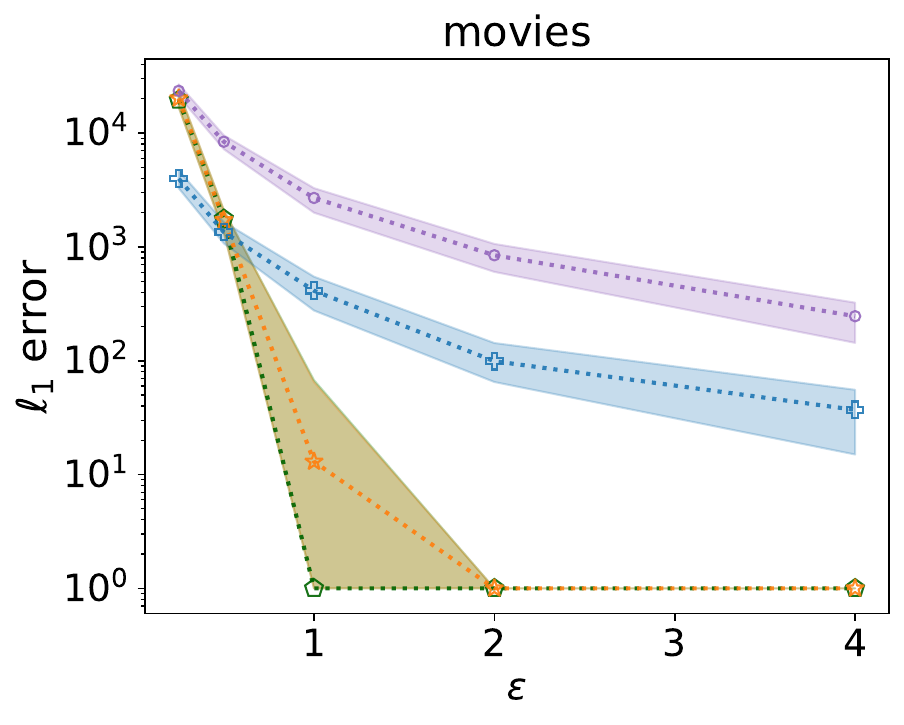} 
    } \\ \vspace{-10pt}
    \makebox[\textwidth]{
        \includegraphics[scale=0.32]{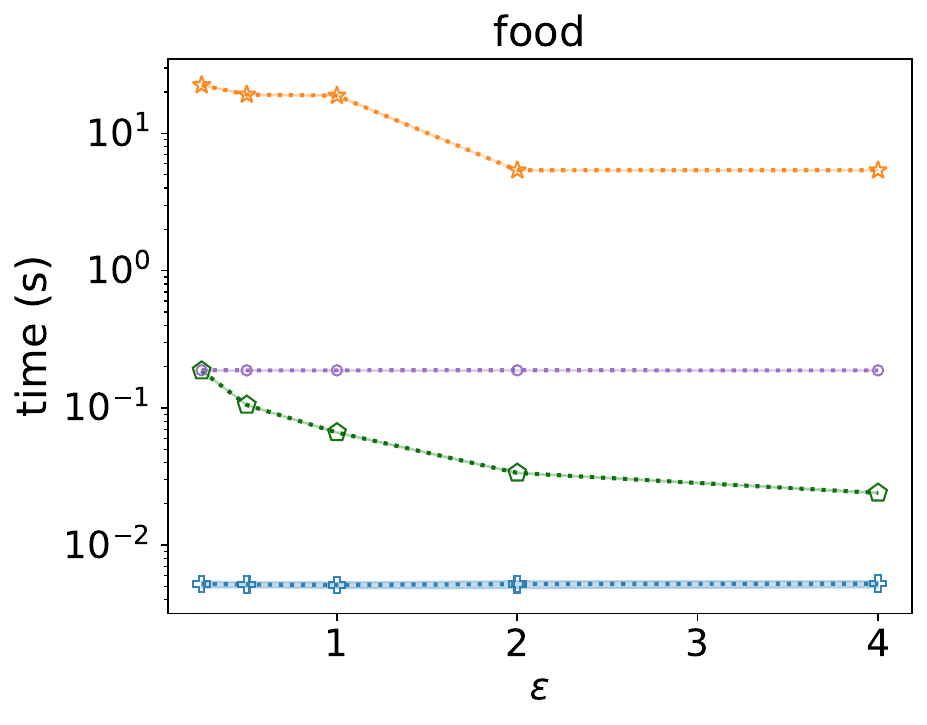}
        \hspace{-2mm}
        \includegraphics[scale=0.32]{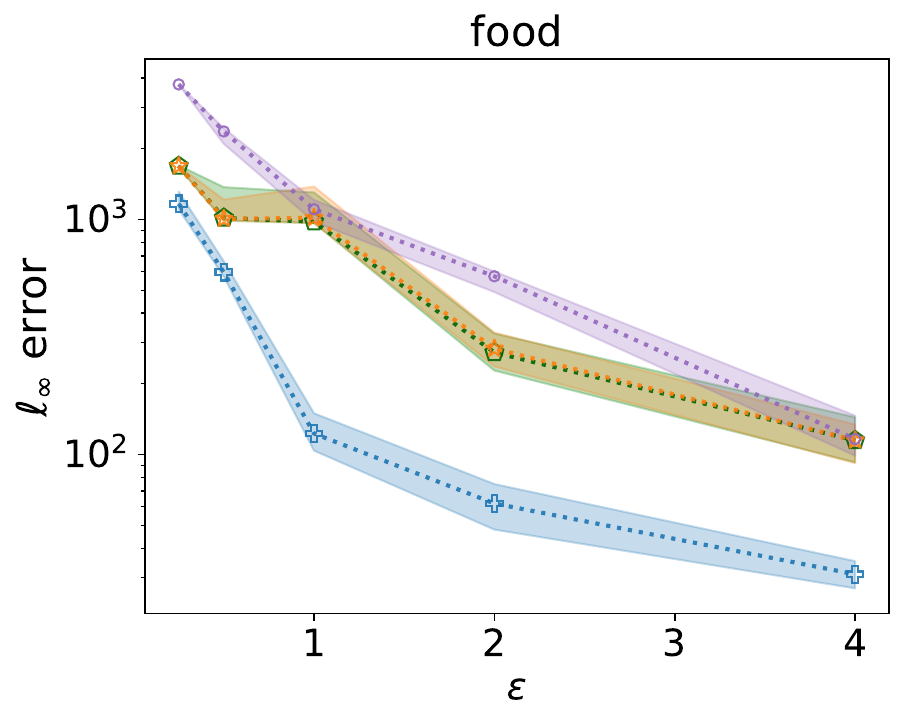}
        \hspace{-2mm}
        \includegraphics[scale=0.32]{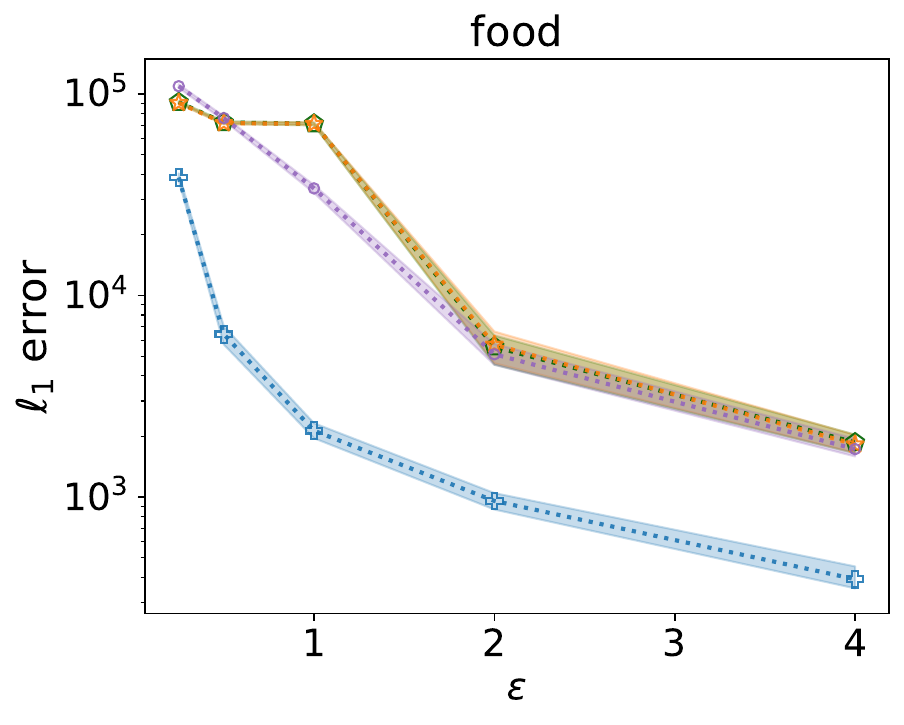} 
    } \\ 
    \includegraphics[scale=0.5, trim={0 9mm 0 9mm}, clip]{figures/legend.pdf}
    \vspace{-4pt}
    \caption{
        \centering
        \textbf{Left}: Running time vs $\eps$. \,
        \textbf{Center}: $\ell_\infty$ error vs $\eps$. \,
        \textbf{Right}: $\ell_1$ error vs $\eps$. \hspace{3cm}
        The $\ell_1/\ell_\infty$ plots are padded by $1$ to avoid $\log 0$ on the $y$-axis. 
    }
    \label{fig: complete eps results}
\end{figure}

\begin{figure}[]
    \centering
    \makebox[\textwidth]{
        \includegraphics[scale=0.32]{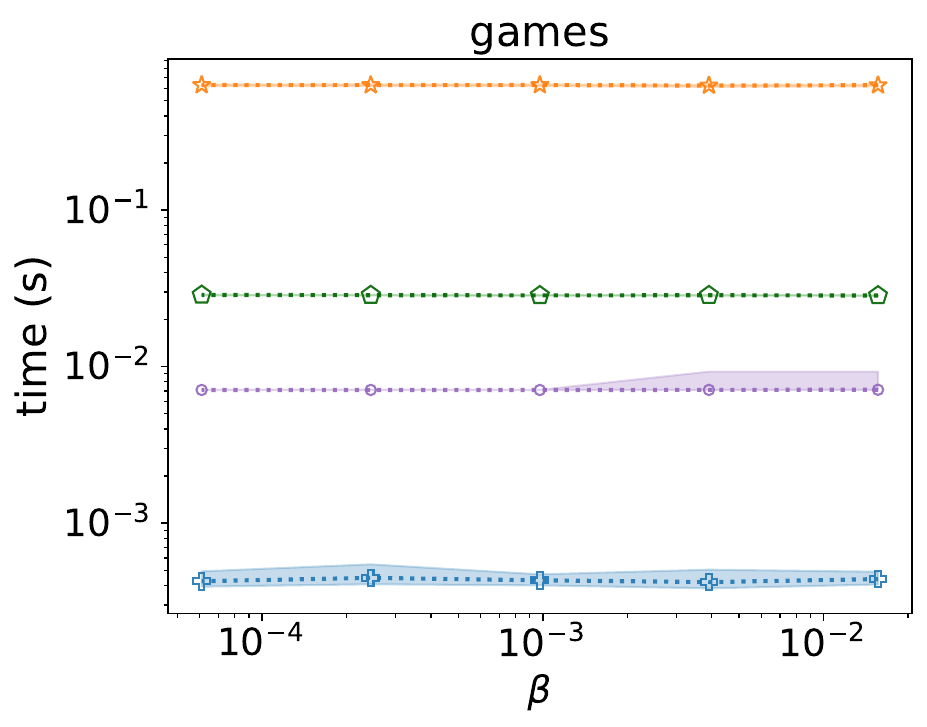}
        \hspace{-2mm}
        \includegraphics[scale=0.32]{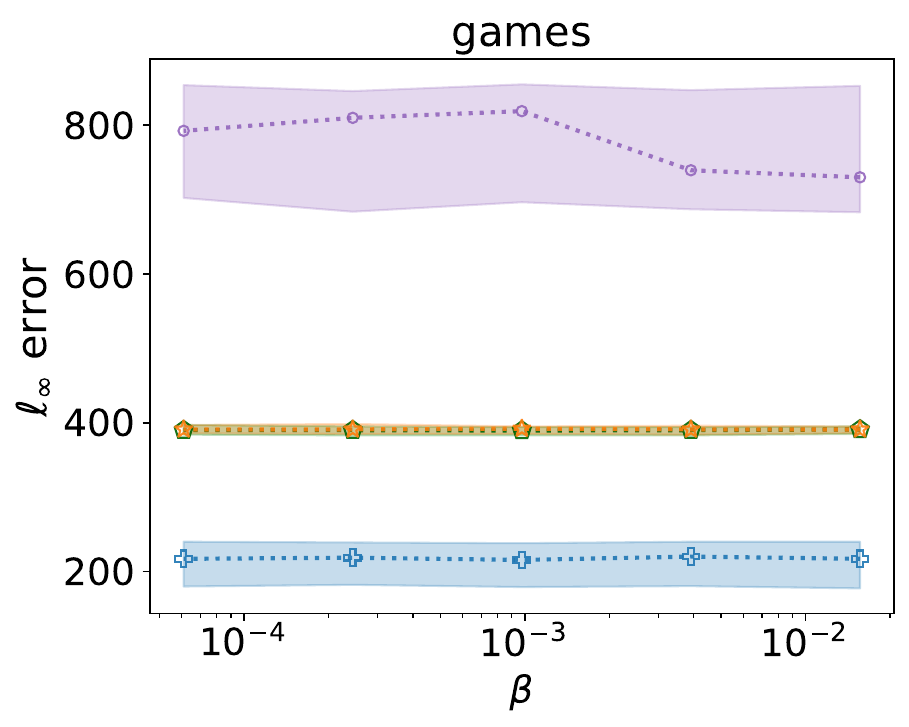}
        \hspace{-2mm}
        \includegraphics[scale=0.32]{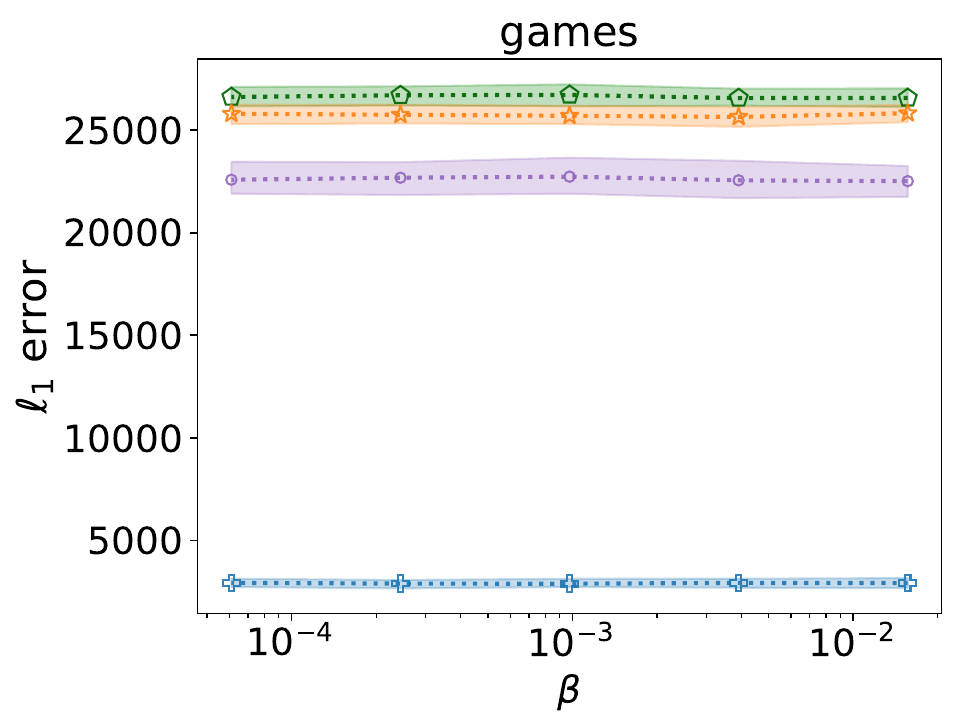} 
    } \\ \vspace{-10pt}
    \makebox[\textwidth]{
        \includegraphics[scale=0.32]{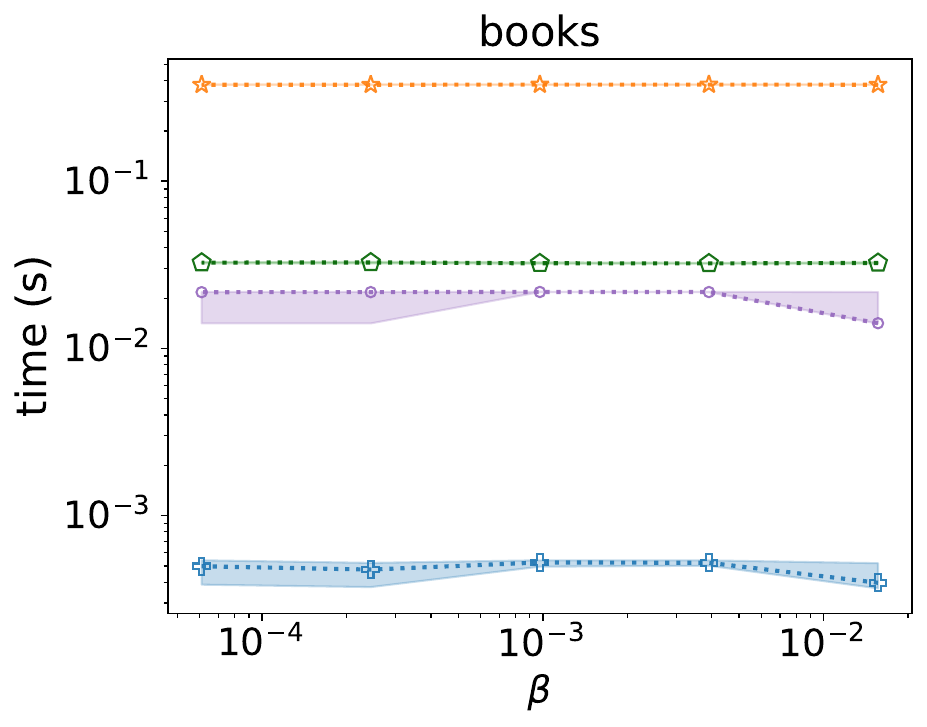}
        \hspace{1.8mm}
        \includegraphics[scale=0.32]{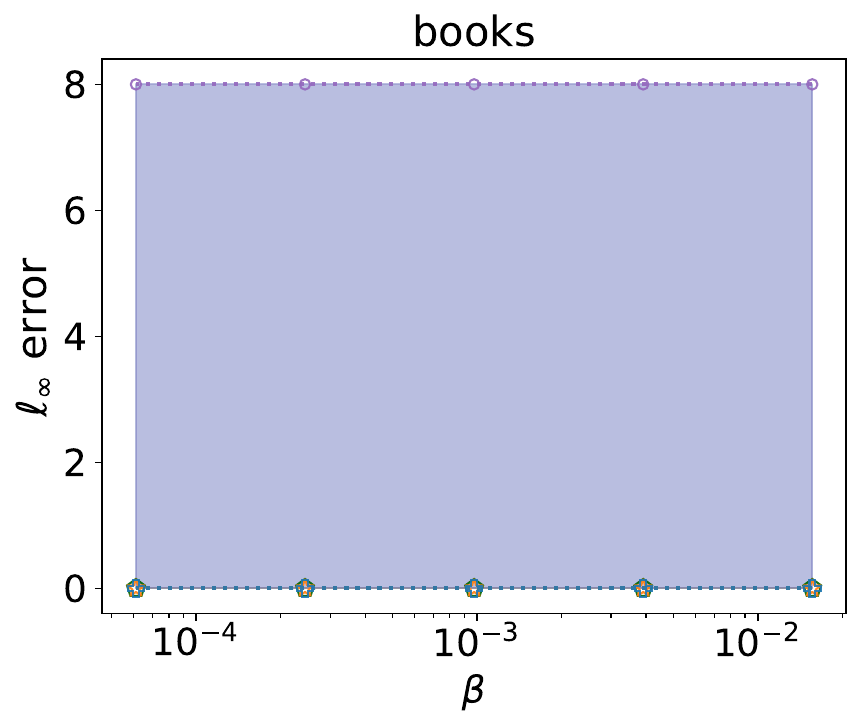}
        \hspace{2mm}
        \includegraphics[scale=0.32]{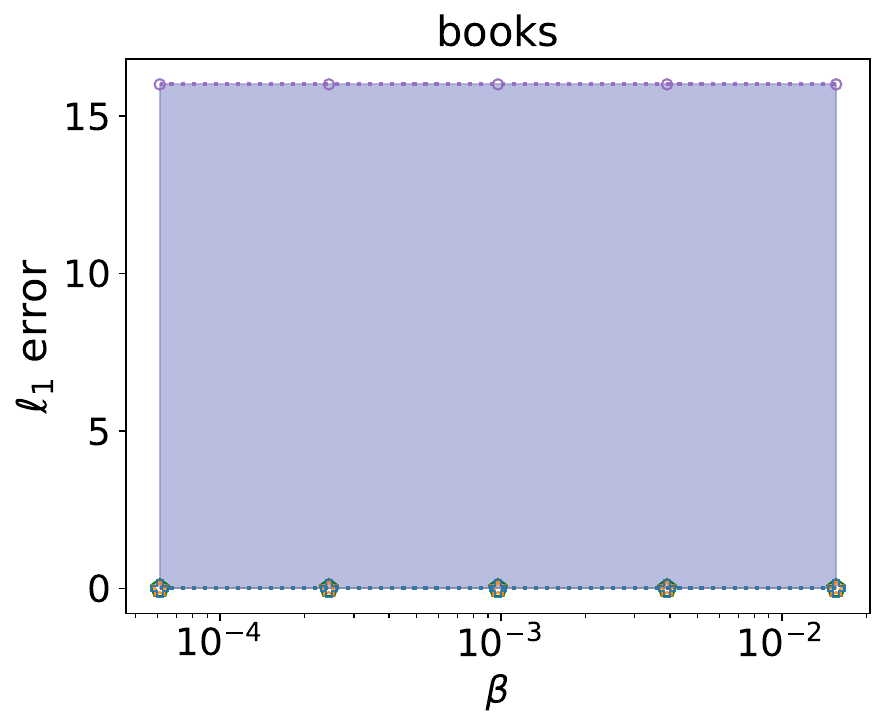} 
    }\\ \vspace{-10pt}
    \makebox[\textwidth]{
        \includegraphics[scale=0.32]{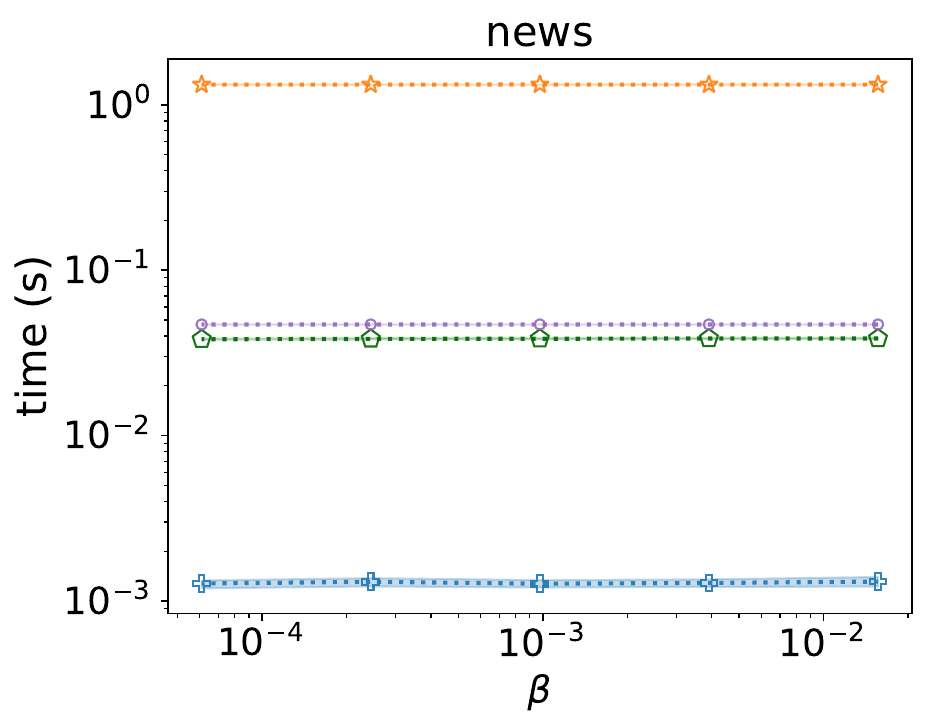}
        \hspace{-2mm}
        \includegraphics[scale=0.32]{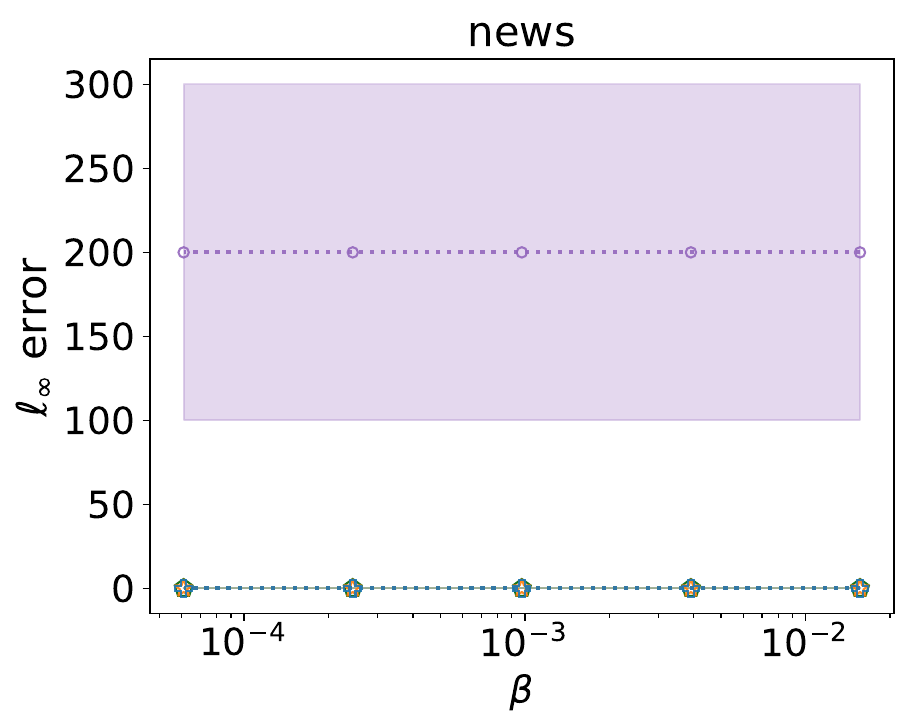}
        \hspace{-2mm}
        \includegraphics[scale=0.32]{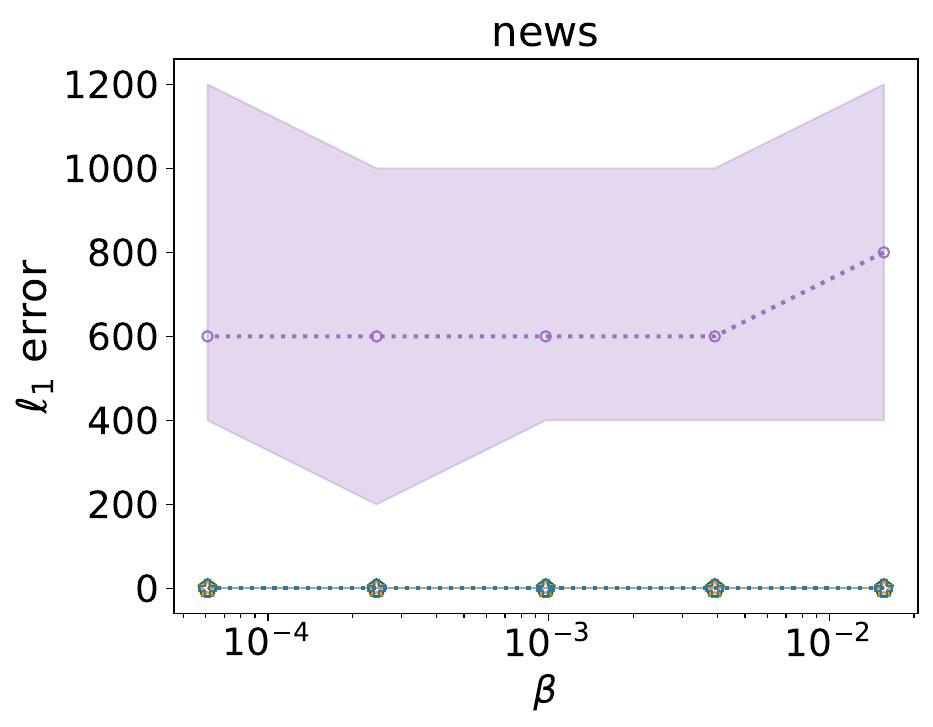} 
    } \\ \vspace{-10pt}
    \makebox[\textwidth]{
        \includegraphics[scale=0.32]{figures/tweets_var_beta_time.pdf}
        \hspace{-2mm}
        \includegraphics[scale=0.32]{figures/tweets_var_beta_L_INF.pdf}
        \hspace{-2mm}
        \includegraphics[scale=0.32]{figures/tweets_var_beta_L_1.pdf} 
    } \\ \vspace{-10pt}
    \makebox[\textwidth]{
        \includegraphics[scale=0.32]{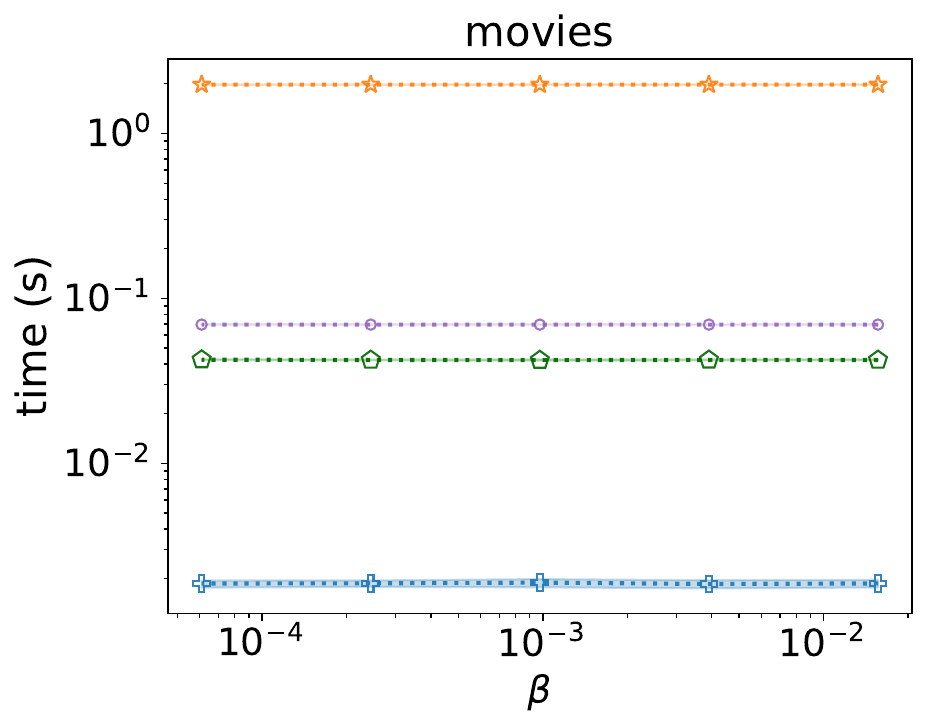}
        \hspace{-2mm}
        \includegraphics[scale=0.32]{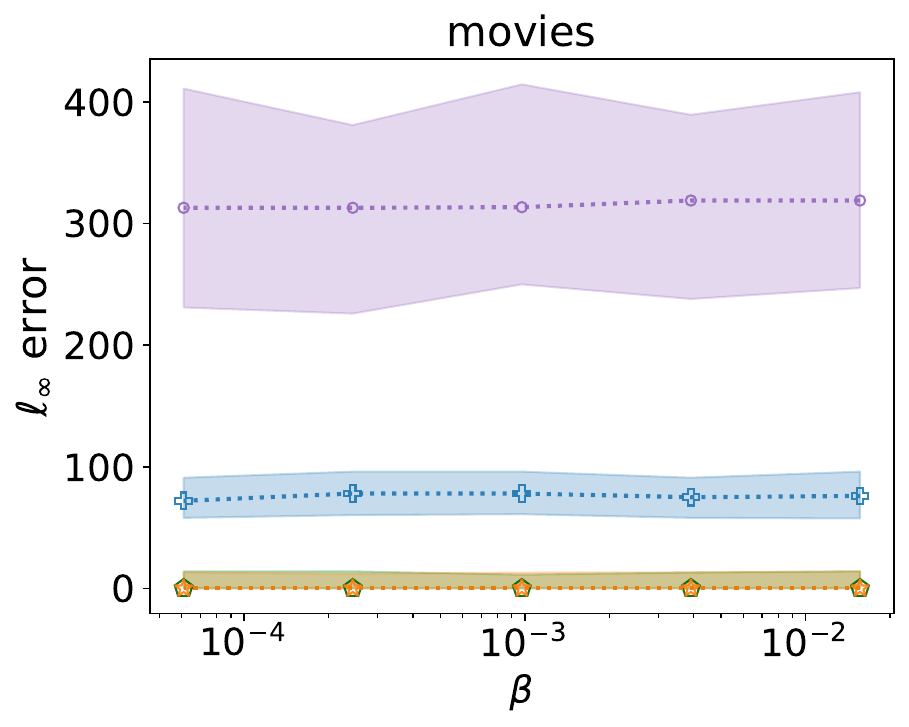}
        \hspace{-2mm}
        \includegraphics[scale=0.32]{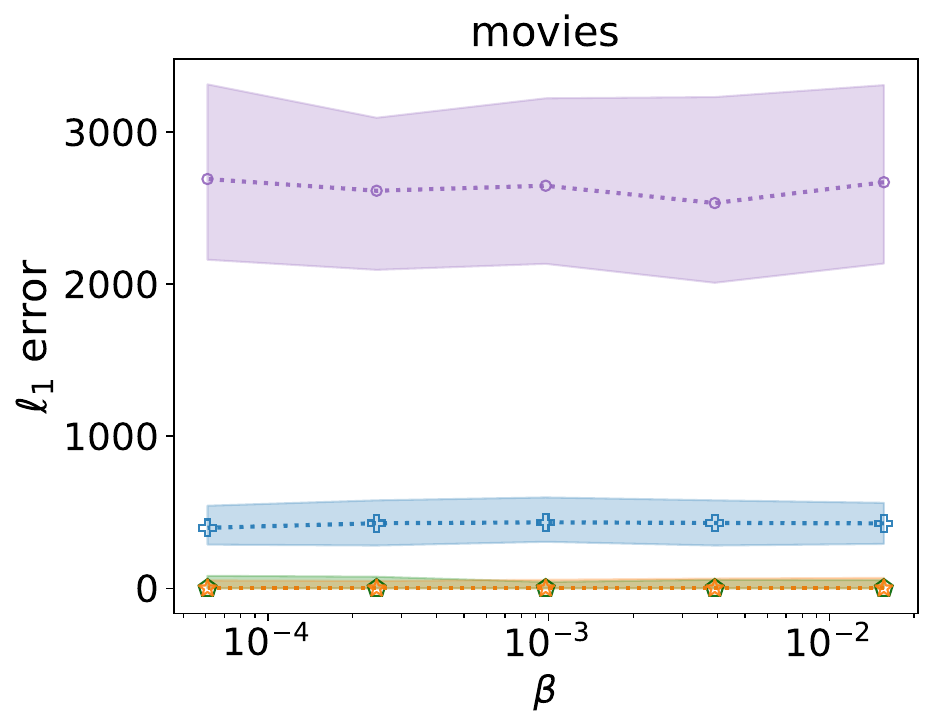} 
    } \\ \vspace{-10pt}
    \makebox[\textwidth]{
        \includegraphics[scale=0.32]{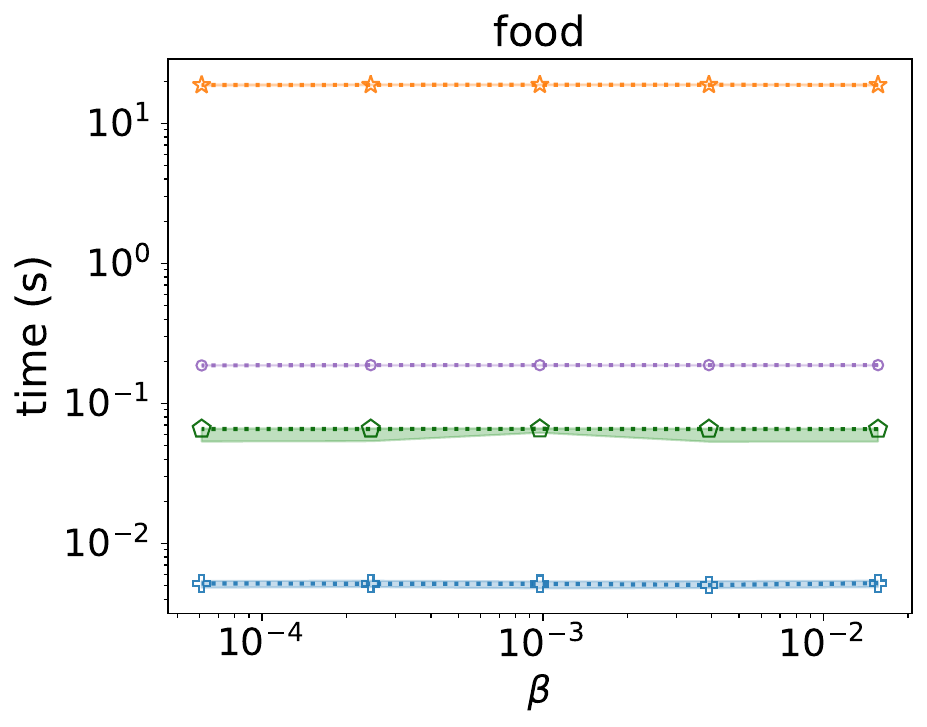}
        \hspace{-2mm}
        \includegraphics[scale=0.32]{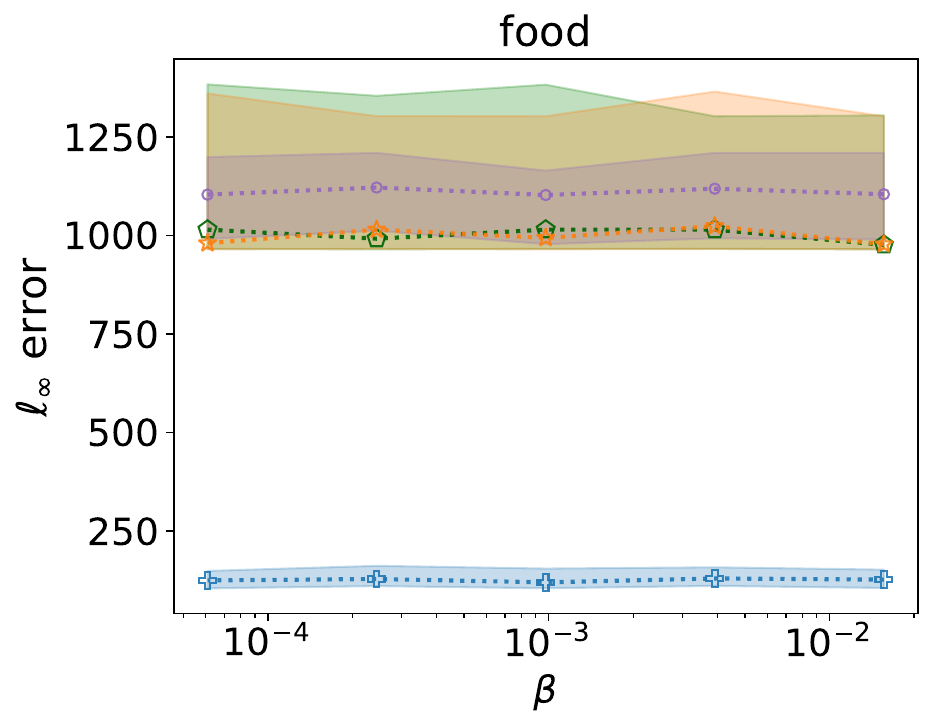}
        \hspace{-2mm}
        \includegraphics[scale=0.32]{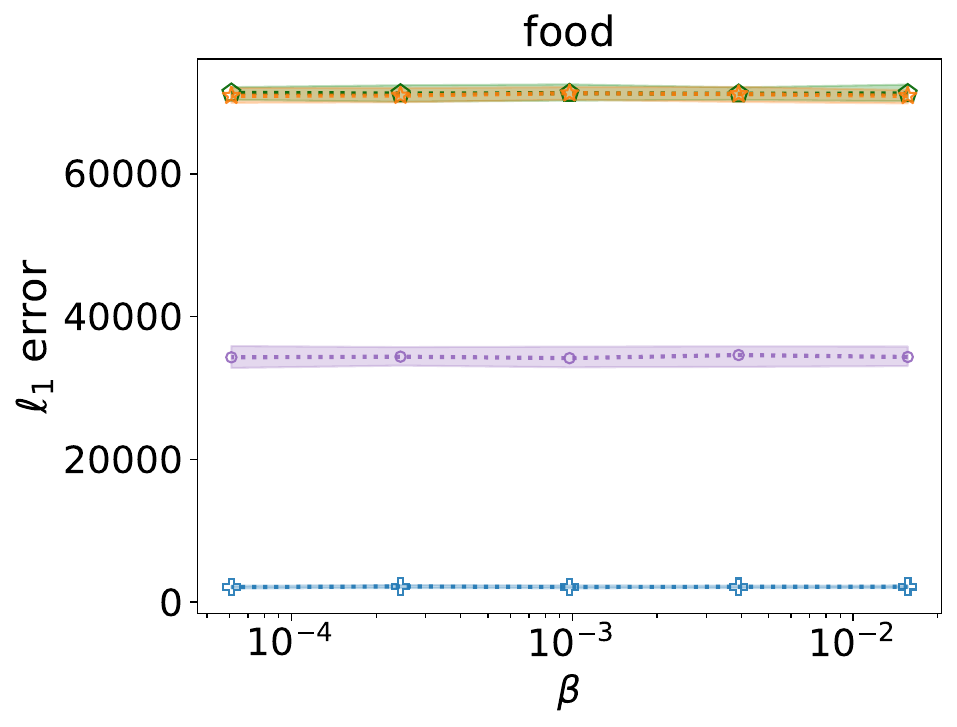} 
    } \\ \vspace{-2pt}
    \includegraphics[scale=0.5, trim={0 9mm 0 9mm}, clip]{figures/legend.pdf}
    \vspace{-6pt}
    \caption{
        \centering
        \textbf{Left}: Running time vs $\beta$. \,
        \textbf{Center}: $\ell_\infty$ error vs $\beta$. \,
        \textbf{Right}: $\ell_1$ error vs $\beta$. \hspace{3cm}
        The $\ell_1/\ell_\infty$ plots are padded by $1$ to avoid $\log 0$ on the $y$-axis. 
    }
    \label{fig: complete delta results}
\end{figure}

\begin{figure}[]
    \centering
    \makebox[\textwidth]{
        \includegraphics[scale=0.32]{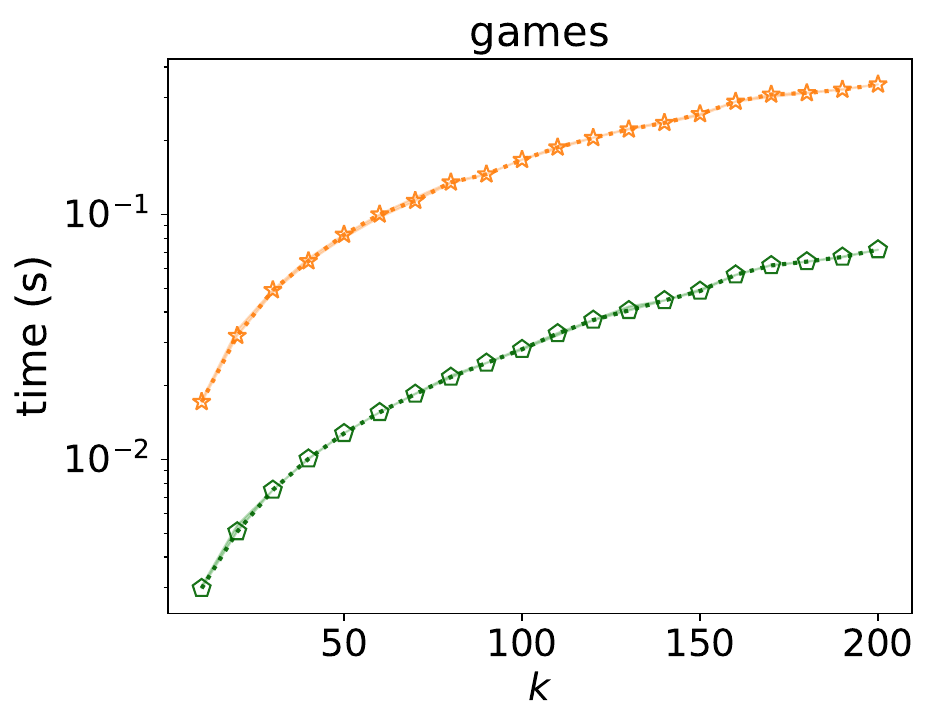}
        \hspace{-2mm}
        \includegraphics[scale=0.32]{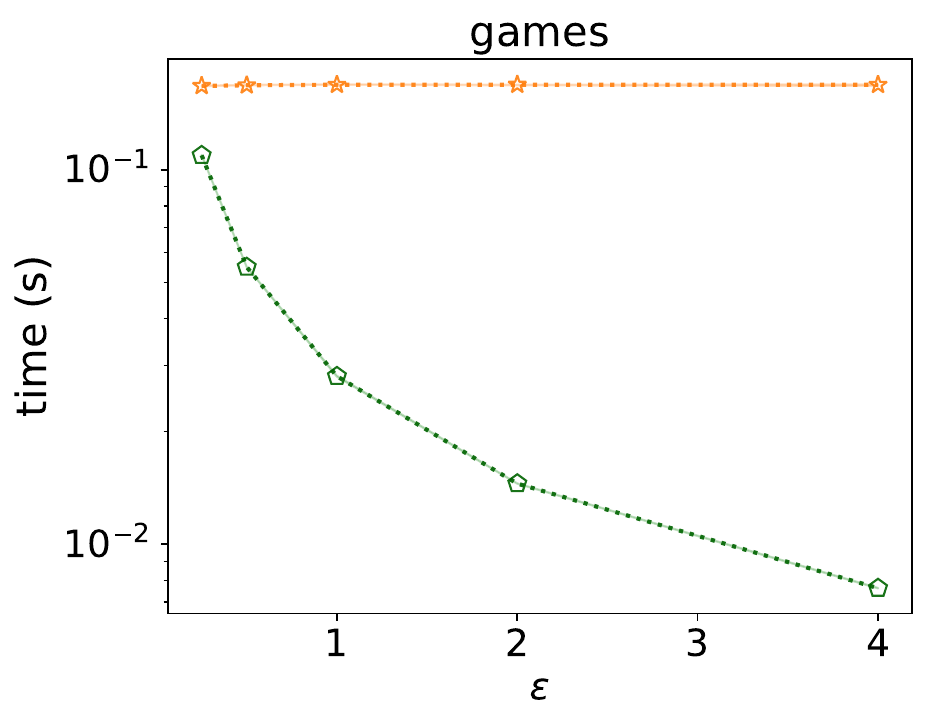}
        \hspace{-2mm}
        \includegraphics[scale=0.32]{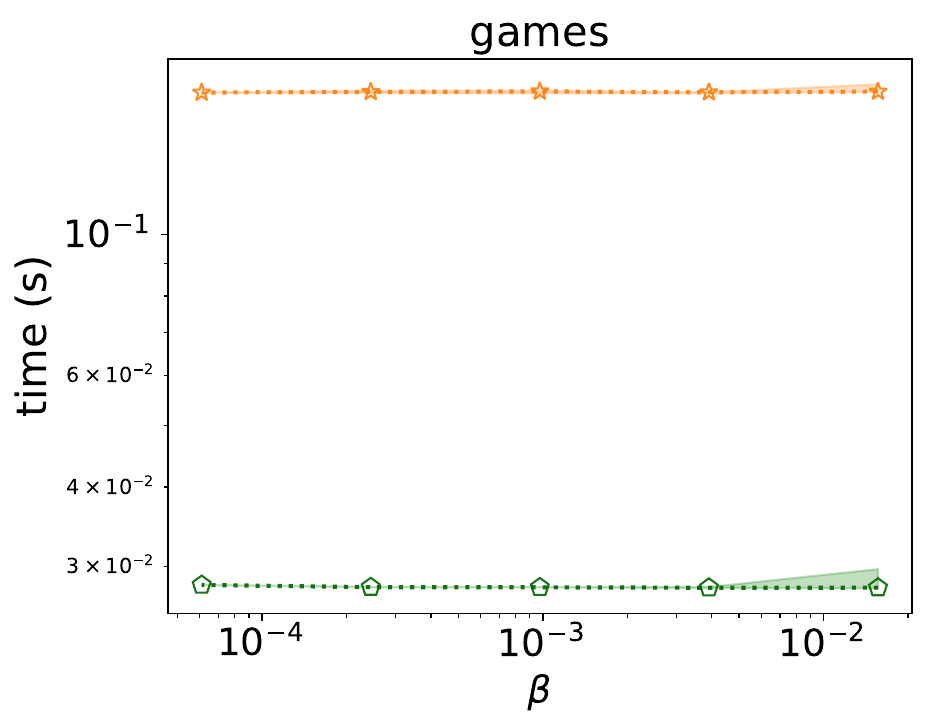} 
    } \\ \vspace{-10pt}
    \makebox[\textwidth]{
        \includegraphics[scale=0.32]{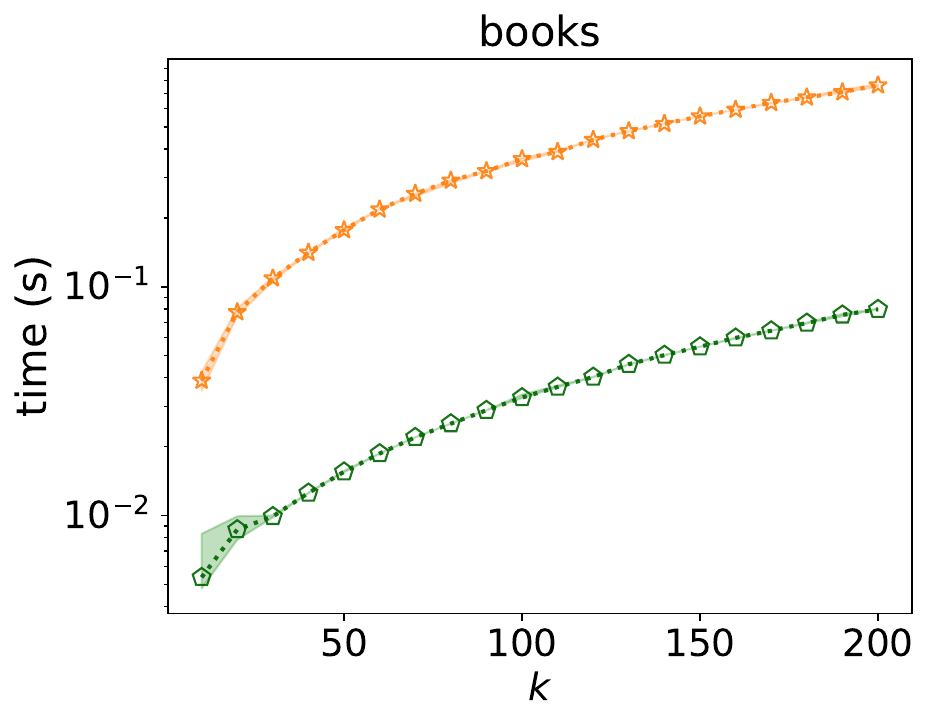}
        \hspace{-2mm}
        \includegraphics[scale=0.32]{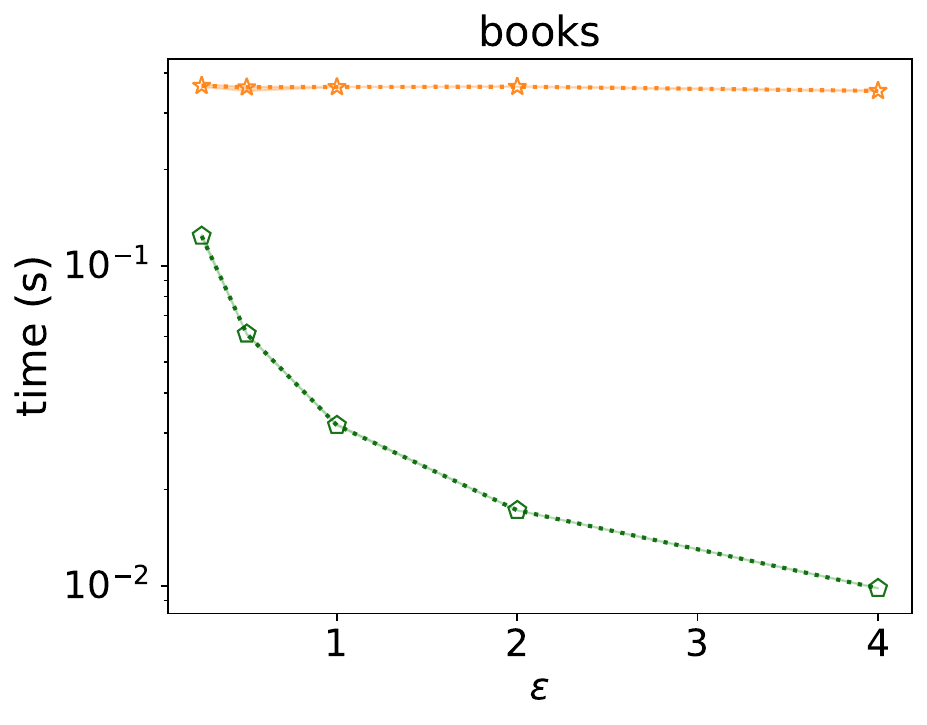}
        \hspace{-2mm}
        \includegraphics[scale=0.32]{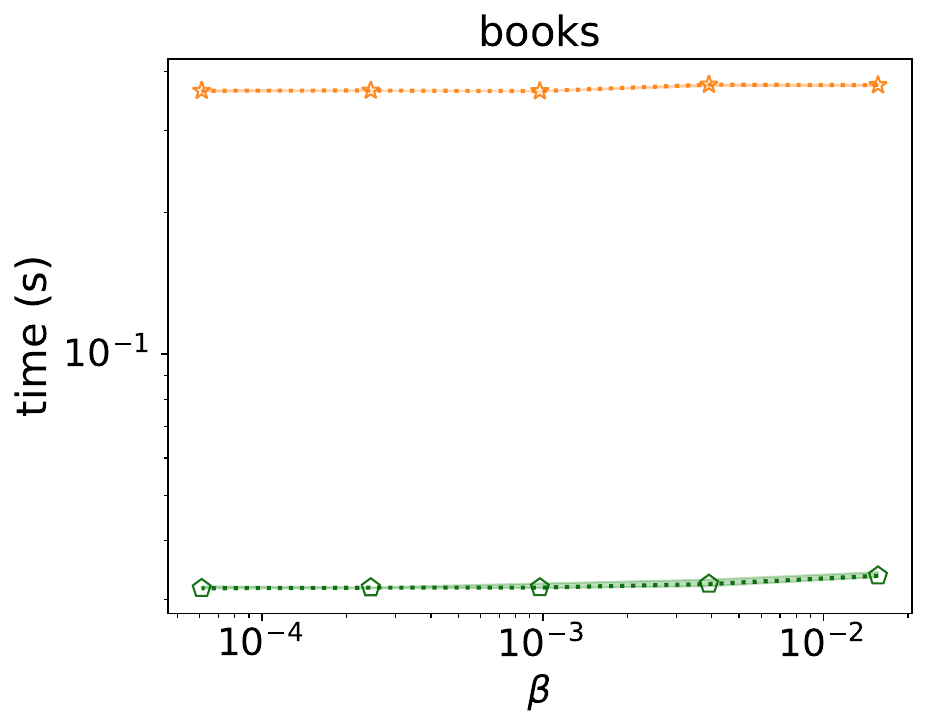} 
    }\\ \vspace{-10pt}
    \makebox[\textwidth]{
        \includegraphics[scale=0.32]{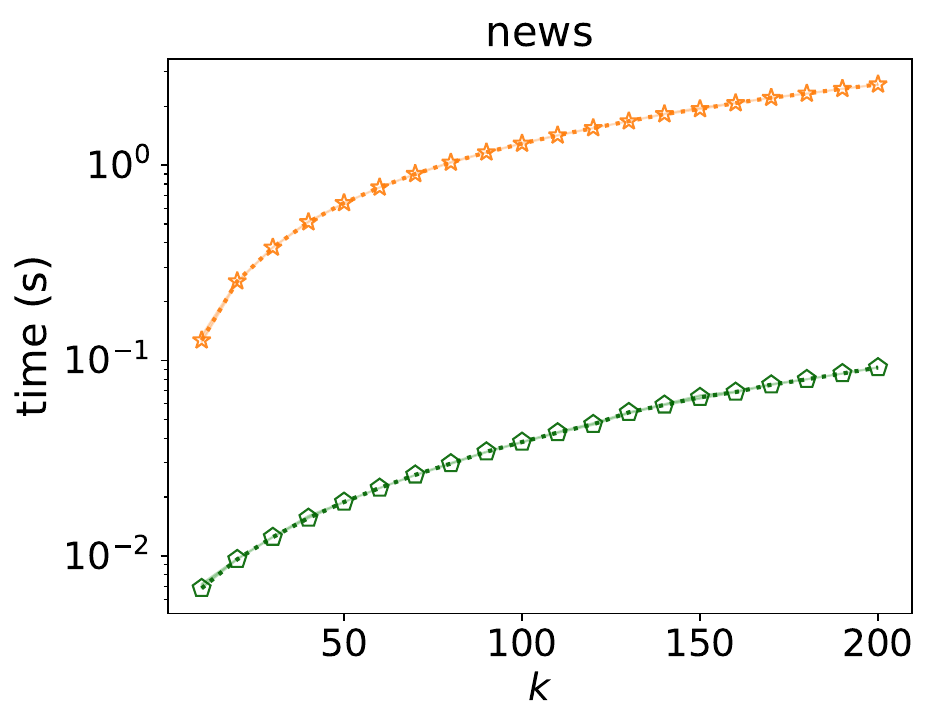}
        \hspace{-2mm}
        \includegraphics[scale=0.32]{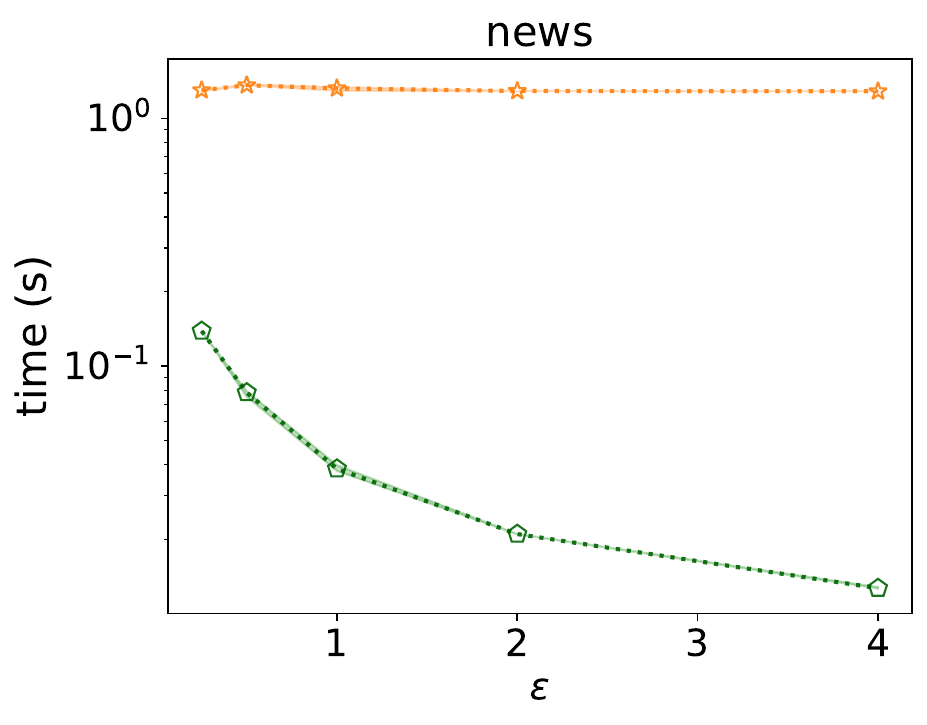}
        \hspace{-2mm}
        \includegraphics[scale=0.32]{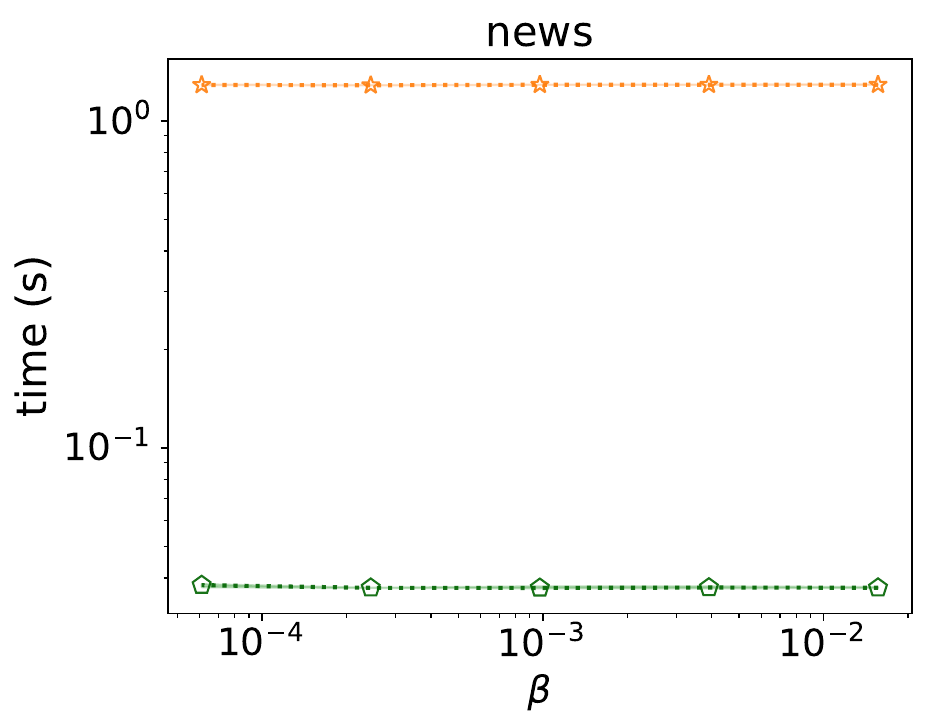} 
    } \\ \vspace{-10pt}
    \makebox[\textwidth]{
        \includegraphics[scale=0.32]{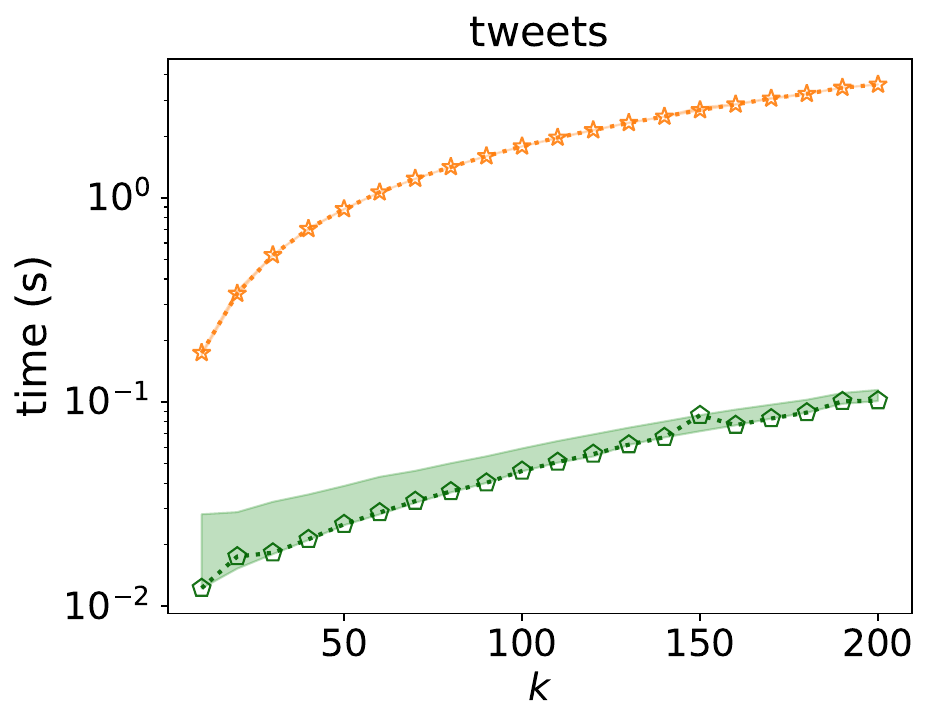}
        \hspace{-2mm}
        \includegraphics[scale=0.32]{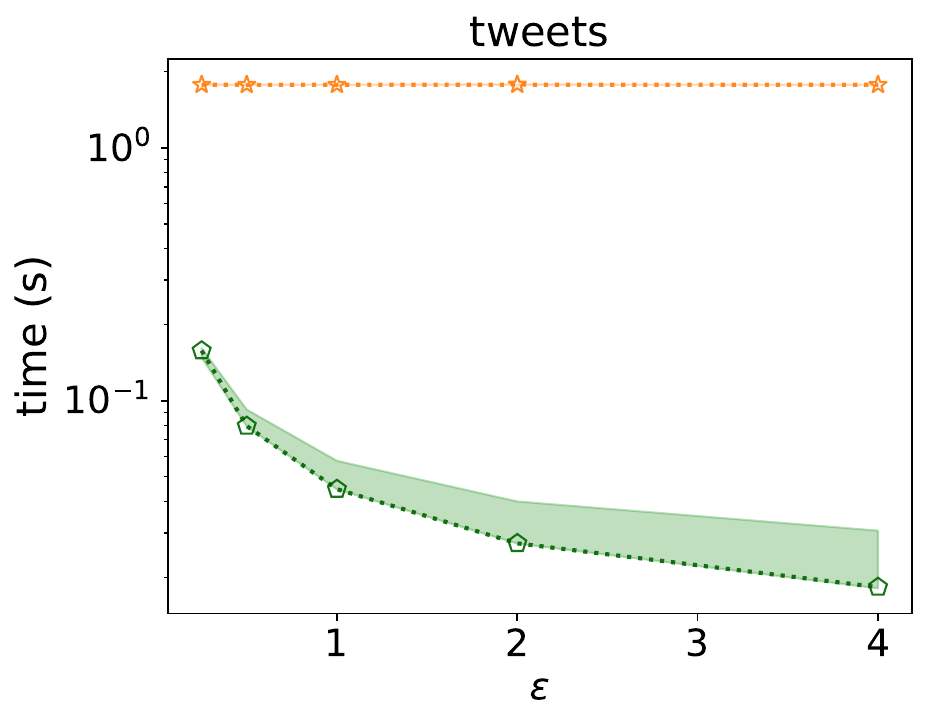}
        \hspace{-2mm}
        \includegraphics[scale=0.32]{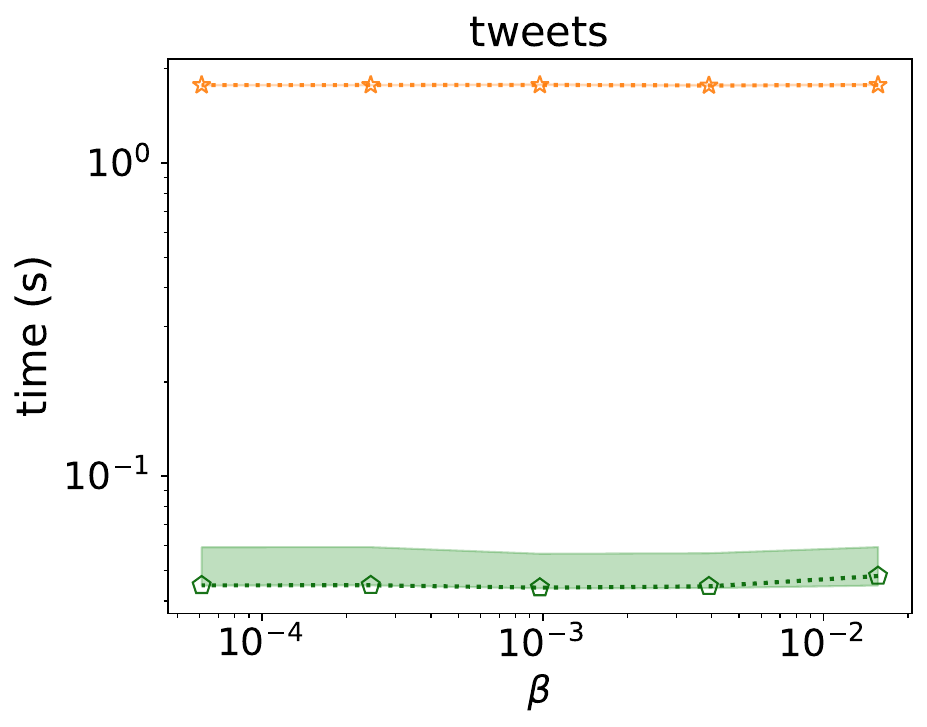} 
    } \\ \vspace{-10pt}
    \makebox[\textwidth]{
        \includegraphics[scale=0.32]{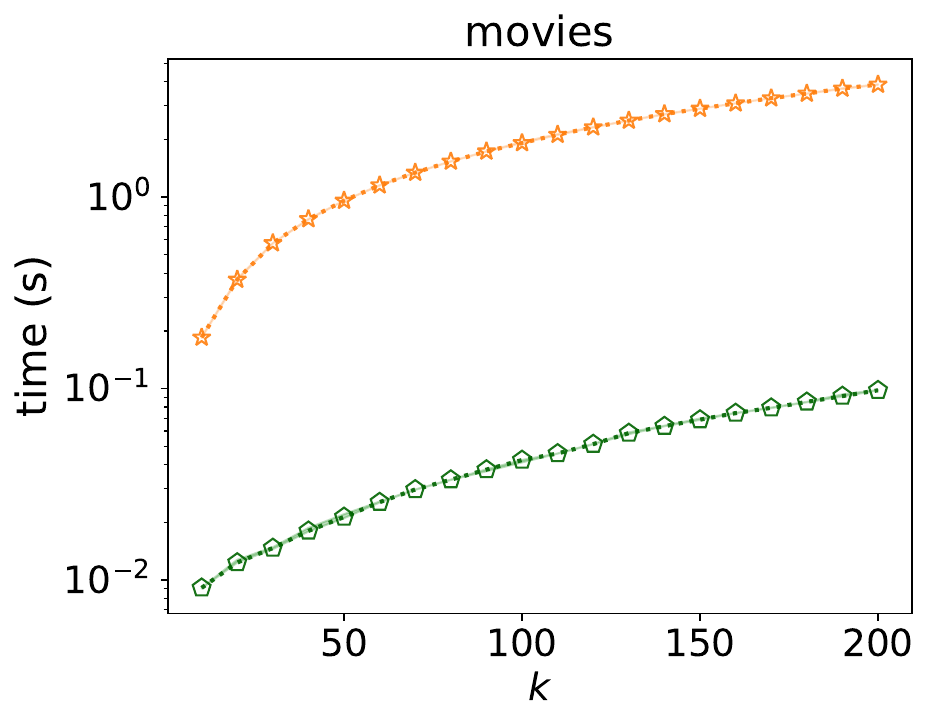}
        \hspace{-2mm}
        \includegraphics[scale=0.32]{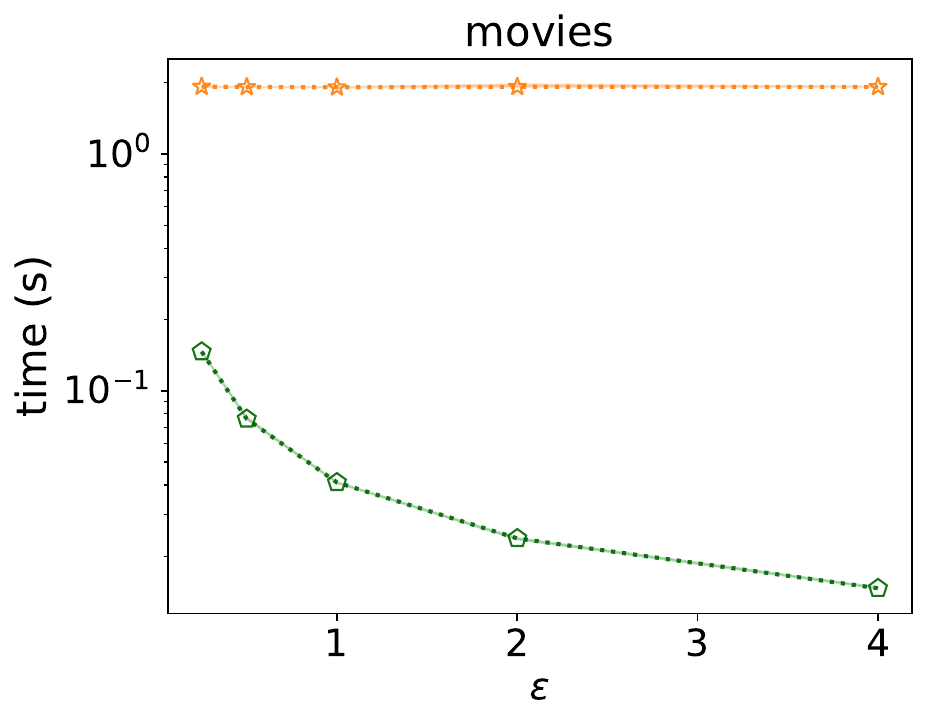}
        \hspace{-2mm}
        \includegraphics[scale=0.32]{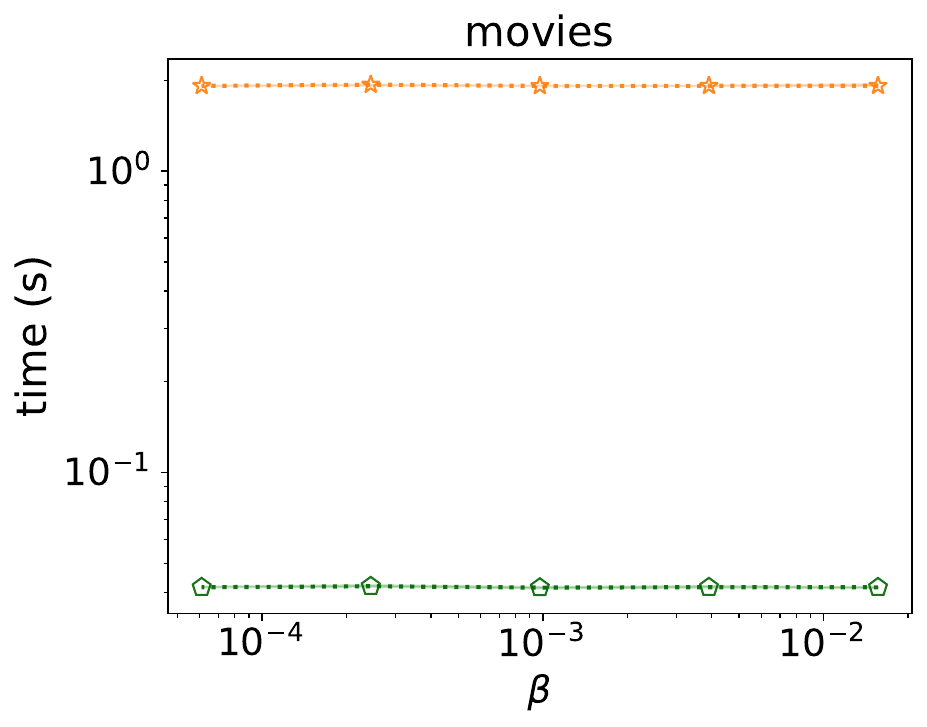} 
    } \\ \vspace{-10pt}
    \makebox[\textwidth]{
        \includegraphics[scale=0.32]{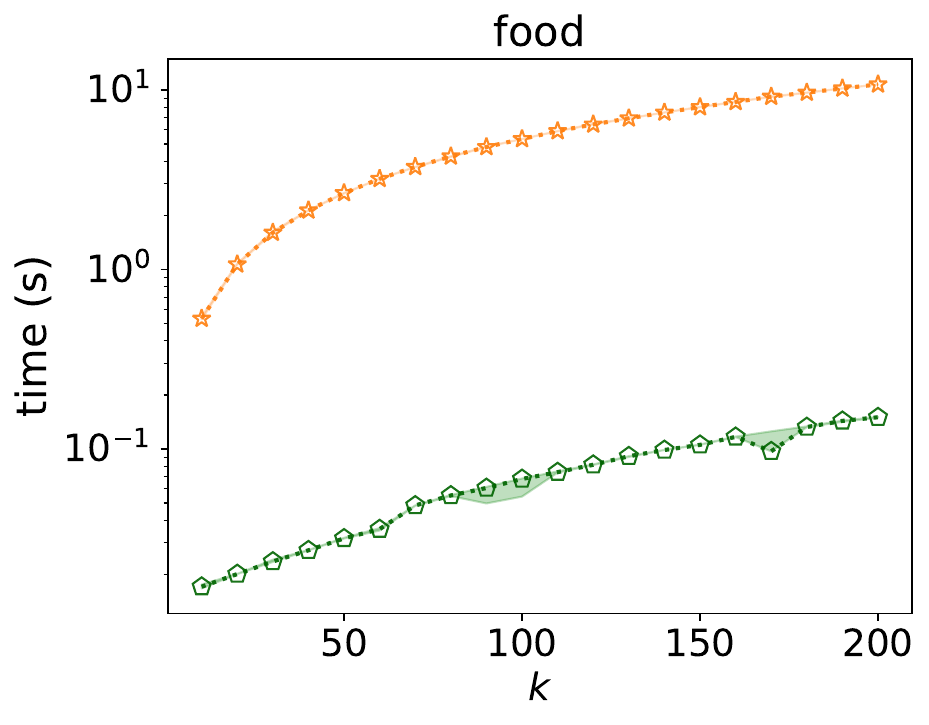}
        \hspace{-2mm}
        \includegraphics[scale=0.32]{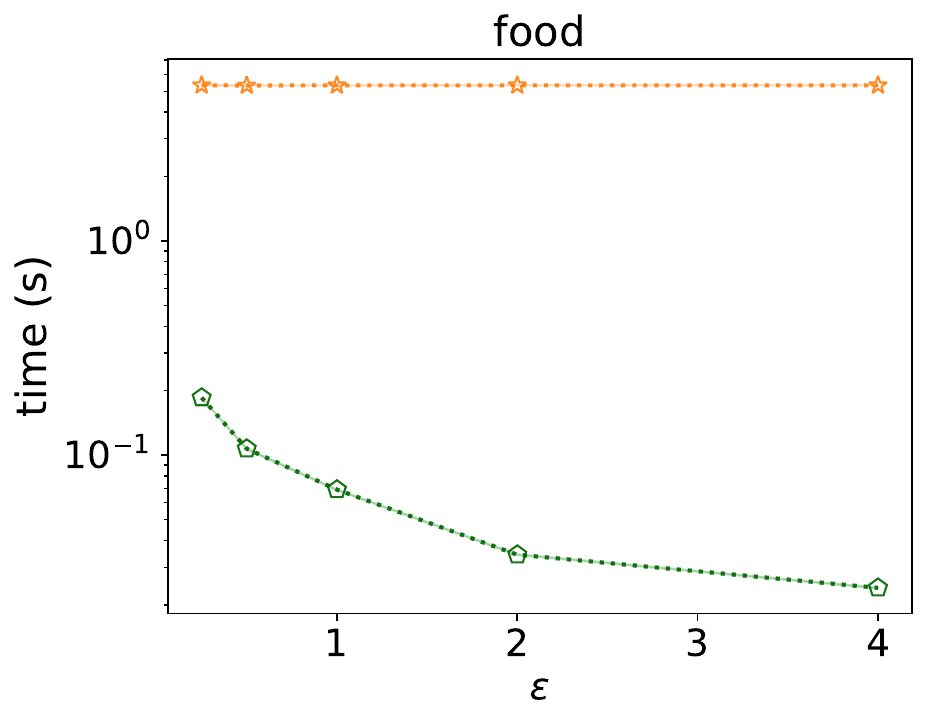}
        \hspace{-2mm}
        \includegraphics[scale=0.32]{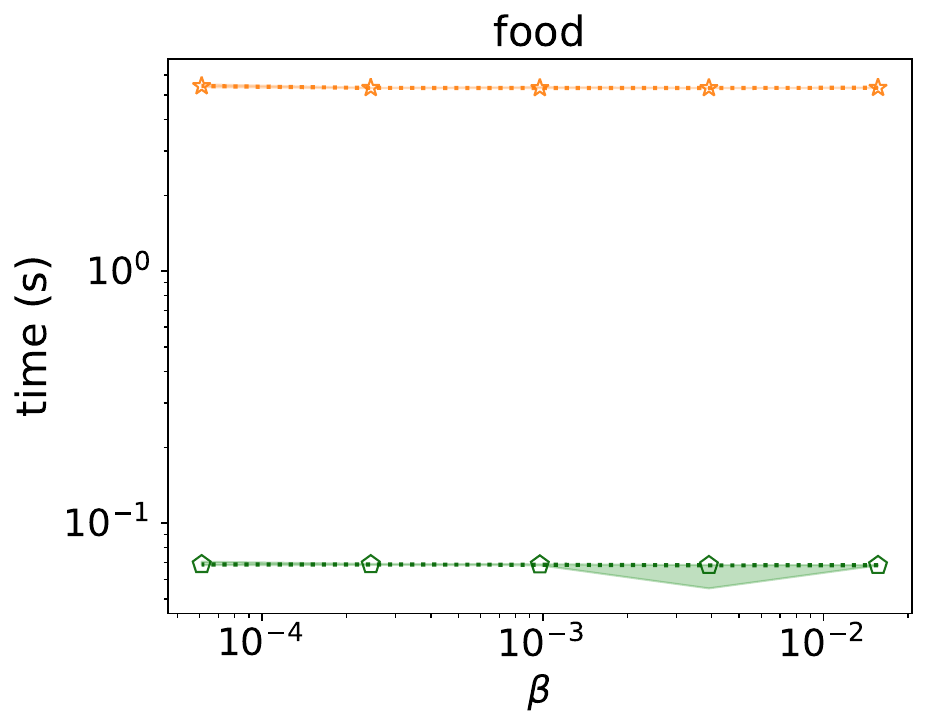} 
    } \\ \vspace{-10pt}
    \includegraphics[scale=0.5]{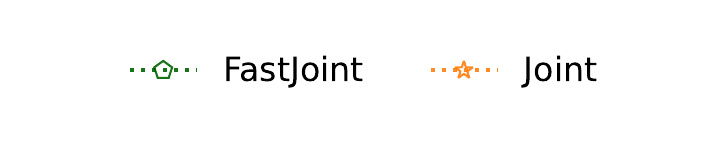}
    \vspace{-15pt}
    \caption{
        \centering
        \textbf{Left}: Running time vs $k$. \,
        \textbf{Center}: Running time vs $\eps$. \,
        \textbf{Right}: Running time vs $\beta$. 
    }
    \label{fig: joint without time results}
\end{figure}

\hfill

\end{document}